\numberwithin{equation}{section}
\declaretheoremstyle[bodyfont=\it,qed=\qedsymbol]{noproofstyle}
\declaretheorem[numberlike=equation]{observation}
\declaretheorem[name=Observation,numbered=no]{observation*}
\declaretheorem[numberlike=equation]{fact}
\declaretheorem[numberlike=equation]{theorem}
\declaretheorem[name=Theorem,numbered=no]{theorem*}
\declaretheorem[numberlike=equation]{lemma}
\declaretheorem[name=Lemma,numbered=no]{lemma*}
\declaretheorem[numberlike=equation]{corollary}
\declaretheorem[name=Corollary,numbered=no]{corollary*}
\declaretheorem[numberlike=equation]{proposition}
\declaretheorem[name=Proposition,numbered=no]{proposition*}
\declaretheorem[numberlike=equation]{claim}
\declaretheorem[name=Claim,numbered=no]{claim*}
\declaretheorem[name=Conjecture,numbered=no]{conjecture*}
\declaretheorem[name=Question,numbered=no]{question*}
\declaretheorem[name=Open Problem]{openproblem}
\declaretheoremstyle[bodyfont=\it,qed=$\lozenge$]{defstyle} 
\declaretheorem[numberlike=equation,style=defstyle]{definition}
\declaretheorem[unnumbered,name=Definition,style=defstyle]{definition*}
\declaretheorem[unnumbered,name=Example,style=defstyle]{example*}
\declaretheorem[unnumbered,name=Notation=defstyle]{notation*}
\declaretheorem[unnumbered,name=Construction,style=defstyle]{construction*}
\declaretheorem[numberlike=equation,style=defstyle]{remark}
\declaretheorem[unnumbered,name=Remark,style=defstyle]{remark*}
\newcommand{\shortECCC}[2]{\texttt{\href{http://eccc.hpi-web.de/report/\ifnumcomp{#1}{>}{93}{19}{20}#1/#2/}{eccc:TR#1-#2}}}
\newcommand{\parseECCC}[1]{% Takes a string of the form TRxx/xxx or
%                          % TRxx-xxx and returns short ECCC link
\StrSubstitute{#1}{TR}{}[\tmpstring]%
\IfSubStr{\tmpstring}{/}{ %assuming string is of the form TRxx/xxx
\StrBefore{\tmpstring}{/}[\ecccyear]%
\StrBehind{\tmpstring}{/}[\ecccreport]%
}{% assuming string is of the form TRxx-xxx
\StrBefore{\tmpstring}{-}[\ecccyear]%
\StrBehind{\tmpstring}{-}[\ecccreport]%
}%
\shortECCC{\ecccyear}{\ecccreport}}
\newcommand*\samethanks[1][\value{footnote}]{\footnotemark[#1]}
\newcommand{\Max}{\text{M}}
\renewcommand{\C}{\mathbb{C}}
\renewcommand{\R}{\mathbb{R}}
\renewcommand{\set}[1]{\left\{ #1 \right\}}
\newcommand{\abs}[1]{\left| #1 \right|}
\newcommand{\ord}[2]{\text{ord}_{#1}{#2}}
\newcommand{\norm}[1]{\left\Vert {#1} \right\Vert}
\newcommand{\size}[1]{\text{size}\left( #1 \right)}
\newcommand{\Z}{\mathbb{Z}}
\newcommand{\N}{\mathbb{N}}
\newcommand{\Q}{\mathbb{Q}}
\newcommand{\F}{\mathbb{F}}
\newcommand{\lc}{\text{LC}}
\newcommand{\tc}{\text{TC}}
\newcommand{\rev}{\text{rev}}
\newcommand{\ceil}[1]{\lceil #1 \rceil}
\newcommand{\floor}[1]{\left\lfloor #1 \right\rfloor}
\newcommand{\equivrelation}[3]{#1 \; \equiv_{#3} \; #2}
\algnewcommand\algorithmicinput{\textbf{Input:}}
\algnewcommand\algorithmicoutput{\textbf{Output:}}
\algnewcommand\Input{\item[\algorithmicinput]}
\algnewcommand\Output{\item[\algorithmicoutput]}
\newcommand{\ignore}[1]{}
\title{A New Bound on Cofactors of Sparse Polynomials}
\author{
	Ido Nahshon
	\thanks{
		Blavatnik School of Computer Science, Tel Aviv University, Tel Aviv, Israel, \texttt{\{idonahshon,shpilka\}@tauex.tau.ac.il}.
		This research was co-funded by the European Union by the European Union (ERC, EACTP, 101142020), the Israel Science Foundation (grant number 514/20) and  the Len Blavatnik and the Blavatnik Family Foundation. Views and opinions expressed are however those of the author(s) only and do not necessarily reflect those of the European Union or the European Research Council Executive Agency. Neither the European Union nor the granting authority can be held responsible for them.
	}
	\and
	Amir Shpilka\samethanks[1]
}
\date{}
\begin{document}

	\maketitle

\abstract{
%	A \emph{sparse} polynomial (also called a \emph{lacunary} polynomial) is a polynomial
%	    that has relatively few terms compared to its degree.
%    The \emph{sparse-representation} of a polynomial represents the polynomial
%        as a list of its non-zero terms (coefficient-degree pairs). 
%	In particular, the degree of a sparse polynomial can be exponential in the 
%	    sparse-representation size.

We prove that for polynomials \( f, g, h \in \mathbb{Z}[x] \) satisfying \( f = gh \) and \( f(0) \neq 0 \), the \(\ell_2\)-norm of the cofactor \( h \) is bounded by  
\[
\|h\|_2 \leq \|f\|_1 \cdot\left( \widetilde{O}\left(\|g\|_0^3 \frac{\deg(f)^2}{\sqrt{\deg(g)}}\right)\right)^{\|g\|_0 - 1},
\]
where \(\|g\|_0\) is the number of nonzero coefficients of $g$ (its sparsity). We also obtain similar results for polynomials over $\C$.\footnote{These bounds improve upon the conference version of this paper \cite{NahshonShpilka24}.} 

This result significantly improves upon previously known exponential bounds (in \(\deg(f)\)) for general polynomials. It further implies that, under exact division, the polynomial division algorithm runs in quasi-linear time with respect to the input size and the number of terms in the quotient  \( h \). This resolves a long-standing open problem concerning the exact divisibility of sparse polynomials.

In particular, our result demonstrates a quadratic separation between the runtime (and representation size) of exact and non-exact divisibility by sparse polynomials. Notably, prior to our work, it was not even known whether the representation size of the quotient polynomial could be bounded by a sub-quadratic function of its number of terms, specifically of \(\deg(f)\).

\ignore{	
	We also study the problem of bounding the number of terms of $f/g$ in some special cases.
	When $f, g \in \Z[x]$ and $g$ is a cyclotomic-free
        (i.e., it has no cyclotomic factors) trinomial,
	    we prove that $\norm{f/g}_0\leq O(\norm{f}_0 \size{f}^2 \cdot \log^6{\deg{g}})$.
    When $g$ is a cyclotomic free binomial,
	    we prove that the sparsity is at most
	    $O(\norm{f}_0 ( \log{\norm{f}_0} + \log{\norm{f}_{\infty}}))$.
    Both upper bounds are polynomial in the input-size.
	Without assuming cyclotomic-freeness, there are simple examples 
	    that show that both sparsities can be exponentially large. 
        Leveraging these results, we provide a polynomial-time algorithm for
	    deciding whether a cyclotomic-free trinomial
	    divides a sparse polynomial over the integers.
	In addition, we show that testing divisibility by a binomial can be done in $\widetilde{O}(n^2)$
	    time over $\Z$ and over finite fields.
	
	As our last result, we present a polynomial time algorithm
	    for testing divisibility
	    by pentanomials over small finite fields
	    when $\deg{f} = \widetilde{O}(\deg{g})$.
	
	All our algorithms are deterministic.
}
}

\thispagestyle{empty}
	\newpage
	\tableofcontents
\thispagestyle{empty}
	\newpage
\pagenumbering{arabic}

\section{Introduction}
    \label{sec:intro}

	Let $f,g,h\in \C[x]$ be polynomials  such that $f=gh$. 
	Classical results of Gel'fond, Mahler and Mignotte provide upper bounds on the height of $h$ ($\ell_\infty$ norm of its vector of coefficients) given information on  the coefficients of $g$ and $f$ \cite{gelfond60,mahler:1962,mignotte:1974,mignotte:1988}. 
	These results were used by Gel'fond in the study of transcendence and  are widely used in computer algebra to prove upper bounds on the running time of algorithms, such as in the polynomial factorization algorithm of Lenstra, Lenstra, and Lov\'asz \cite{lenstra-lovasz:1982}.  

	The first of these bounds was proved by Gelf'ond  \cite{gelfond60} (see also \cite{mahler:1962} and \cite[Lemma 1.6.11]{bombieri:2006}).          
           \begin{theorem}[ Gel'fond's Lemma]\label{thm:gelfond}
           	Let $f, g, h \in \C[x]$ such that $f = gh$.
           	Then,
           	 \begin{equation}
           		\label{eq:gelfonds-lemma}
           		\norm{h}_{\infty}
           		\leq 2^{\deg{f}}\frac{\norm{f}_{\infty}}{\norm{g}_{\infty}}.
           	\end{equation}
           \end{theorem}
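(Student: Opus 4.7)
The plan is to introduce the \emph{Mahler measure} $M(p) := \abs{\lc(p)} \prod_i \max(1,\abs{\alpha_i})$ of a polynomial $p \in \C[x]$, where $\alpha_1,\ldots,\alpha_{\deg p}$ are its complex roots counted with multiplicity, and to leverage two key properties of $M$: multiplicativity under products, and two-sided comparability with the $\ell_\infty$-norm of the coefficient vector. Multiplicativity $M(gh) = M(g)M(h)$ is immediate from the definition, since the multiset of roots of $gh$ is the disjoint union of those of $g$ and $h$ and leading coefficients multiply.

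For the coefficient-to-Mahler comparison, I would first use Vieta's formulas: writing $p(x) = a_d \prod_i (x-\alpha_i) = \sum_k a_k x^k$, the ratio $a_{d-k}/a_d$ is (up to sign) the elementary symmetric polynomial $e_k(\alpha_1,\ldots,\alpha_d)$, whose magnitude is at most $\binom{d}{k}\prod_i \max(1,\abs{\alpha_i}) = \binom{d}{k} M(p)/\abs{a_d}$. Taking the maximum over $k$, this yields $\norm{p}_\infty \leq \binom{d}{\lfloor d/2\rfloor} M(p)$. The opposite direction is Landau's inequality $M(p) \leq \norm{p}_2$, which I would prove by combining the Jensen-formula identity $\log M(p) = \tfrac{1}{2\pi}\int_0^{2\pi} \log\abs{p(e^{i\theta})}\,d\theta$ with the concavity of the logarithm (Jensen's inequality applied to $\abs{p(e^{i\theta})}^2$) and Parseval's theorem $\tfrac{1}{2\pi}\int \abs{p(e^{i\theta})}^2 d\theta = \norm{p}_2^2$.

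With these ingredients, multiplicativity gives $\norm{h}_\infty \leq \binom{\deg h}{\lfloor\deg h/2\rfloor} M(h) = \binom{\deg h}{\lfloor\deg h/2\rfloor} \cdot M(f)/M(g)$. Applying the coefficient bound to $g$ yields $M(g) \geq \norm{g}_\infty / \binom{\deg g}{\lfloor\deg g/2\rfloor}$, and Landau gives $M(f) \leq \norm{f}_2 \leq \sqrt{\deg f+1}\,\norm{f}_\infty$. Assembled, the prefactor in front of $\norm{f}_\infty/\norm{g}_\infty$ is $\binom{\deg h}{\lfloor\deg h/2\rfloor}\binom{\deg g}{\lfloor\deg g/2\rfloor}\sqrt{\deg f+1}$. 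Using the standard estimate $\binom{d}{\lfloor d/2\rfloor} \leq 2^d/\sqrt{d+1}$ together with $(\deg h+1)(\deg g+1) \geq \deg f + 1$, this prefactor collapses to at most $2^{\deg f}$, matching the claimed inequality.

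The main obstacle is Landau's inequality: it is the only step requiring an analytic argument on the unit circle, via Jensen's inequality combined with the Jensen-formula expression for $\log M(p)$. By contrast, multiplicativity of $M$ and the Vieta-based coefficient bound are purely algebraic, and the final collapse of binomial coefficients is routine arithmetic once the bookkeeping of $\deg g + \deg h = \deg f$ is set up correctly.
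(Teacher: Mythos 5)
Your proof is correct. The paper does not prove \autoref{thm:gelfond} itself---it cites Gel'fond, Mahler, and Bombieri--Gubler---and your argument is exactly the standard Mahler-measure route those sources follow: multiplicativity $M(gh)=M(g)M(h)$, the Vieta bound $\norm{p}_\infty \le \binom{\deg p}{\lfloor \deg p/2\rfloor} M(p)$, Landau's inequality $M(p) \le \norm{p}_2$, and the binomial estimate $\binom{d}{\lfloor d/2\rfloor}\le 2^d/\sqrt{d+1}$ together with $(\deg g+1)(\deg h+1)\ge \deg f+1$ to absorb the $\sqrt{\deg f+1}$ coming from $\norm{f}_2\le\sqrt{\deg f+1}\,\norm{f}_\infty$. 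All steps check out, including the final arithmetic collapsing the prefactor to $2^{\deg f}$.
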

	Improved bounds were provided by Mahler \cite{mahler:1962} and consequently by Mignotte \cite{mignotte:1974}.
           % Mignotte's bound
            %The best bound on the norms of the quotient polynomial is due to Mignotte
            %\cite{mignotte:1974}. It has found numerous applications, notably
            %in the polynomial factorization algorithm of Lenstra, Lenstra, and Lov\'asz
            %\cite{lenstra-lovasz:1982}.
            \begin{theorem}[Mignotte's Bound {\cite[Theorem 2]{mignotte:1974}}]\label{thm:Mignotte}
                Let $f, g, h \in \Z[x]$ such that $f = gh$.
                Then,
		  \begin{equation}
                \label{eq:mignotte-bound}
                \norm{h}_1 \leq 2^{\deg{h}}\norm{f}_2
            \end{equation}
%                \begin{equation*}
  %                  \norm{h}_1 \leq 2^{\deg{f}}\frac{\norm{f}_2}{\norm{g}_1}.
    %            \end{equation*}
            \end{theorem}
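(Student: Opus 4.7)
The standard route to Mignotte's bound is through the Mahler measure, and that is the approach I would take. Recall that for $h \in \C[x]$ of degree $d$ with leading coefficient $h_d$ and roots $\alpha_1, \ldots, \alpha_d$, the Mahler measure is
\[
M(h) = |h_d| \prod_{i=1}^{d} \max(1, |\alpha_i|).
\]
My plan is to combine three classical facts: (i) a per-coefficient bound $|h_k| \leq \binom{d}{k} M(h)$, (ii) multiplicativity $M(f) = M(g) M(h)$ together with $M(g) \geq 1$ for nonzero $g \in \Z[x]$, and (iii) Landau's inequality $M(f) \leq \norm{f}_2$.

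First, I would derive the coefficient bound. By Vieta's formulas, $h_k = (-1)^{d-k} h_d \cdot e_{d-k}(\alpha_1, \ldots, \alpha_d)$, where $e_j$ denotes the $j$-th elementary symmetric polynomial. Each monomial appearing in $e_{d-k}$ is a product of $d-k$ of the $\alpha_i$'s, whose absolute value is at most $\prod_i \max(1,|\alpha_i|)$, and there are $\binom{d}{d-k} = \binom{d}{k}$ such monomials. Hence $|h_k| \leq \binom{d}{k} M(h)$, and summing over $k$ yields
\[
\norm{h}_1 = \sum_{k=0}^{d} |h_k| \leq M(h) \sum_{k=0}^{d} \binom{d}{k} = 2^{d} \, M(h).
\]

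Second, I would use multiplicativity of the Mahler measure (which follows immediately from the definition, since the roots of $f=gh$ are the union of the roots of $g$ and $h$, and leading coefficients multiply). Because $g \in \Z[x]$ is a nonzero integer polynomial, its leading coefficient has absolute value at least $1$, and $\prod_i \max(1, |\beta_i|) \geq 1$ trivially, so $M(g) \geq 1$. Therefore $M(h) \leq M(g) M(h) = M(f)$.

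Finally, I would invoke Landau's inequality $M(f) \leq \norm{f}_2$, whose proof is a short application of Jensen's formula (or equivalently, of the identity $\log M(f) = \int_0^1 \log|f(e^{2\pi i \theta})| \, d\theta$ combined with the concavity of $\log$ and Parseval). Chaining the three inequalities gives
\[
\norm{h}_1 \leq 2^{\deg h} M(h) \leq 2^{\deg h} M(f) \leq 2^{\deg h} \norm{f}_2,
\]
which is exactly \eqref{eq:mignotte-bound}. The only nontrivial ingredient in this plan is Landau's inequality; the other two steps are essentially bookkeeping with symmetric functions and the definition of $M$, so I do not anticipate real obstacles here.
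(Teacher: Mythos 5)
The paper does not prove \autoref{thm:Mignotte}; it states it as a classical result and cites Mignotte's 1974 paper, so there is no internal proof to compare against. Your proposed proof is the standard Mahler-measure argument found in the computer-algebra and number-theory literature, and it is correct: the per-coefficient estimate $|h_k| \leq \binom{\deg h}{k} M(h)$ follows from Vieta and the triangle inequality as you say, summing gives $\norm{h}_1 \leq 2^{\deg h} M(h)$, multiplicativity of $M$ together with $M(g) \geq 1$ (valid since $g$ is a nonzero integer polynomial, which is the implicit nondegeneracy assumption) gives $M(h) \leq M(f)$, and Landau's inequality (Jensen's formula plus concavity of $\log$ plus Parseval) closes the chain with $M(f) \leq \norm{f}_2$. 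This is essentially Mignotte's own derivation, so no gap.
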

 %           Since in our case $\norm{g}_1$ is relatively small,
   %         it is also worth noting another version of this bound
     %       which states that
       %     \begin{equation}
         %       \label{eq:mignotte-bound}
           %     \norm{h}_1 \leq 2^{\deg{h}}\norm{f}_2
            %\end{equation}
            %for any factor $h$ of $f$.
            Discussions about this classical inequality
            can be found in \cite[Section 2]{mignotte:1988}
            and \cite[Chapter 4]{mignotte:2012}.
            This bound is also known to be quite tight;
            for any $d \in \N$ there exists $h \in \Z[x]$ of degree $d$
            and $f \in \Z[x]$ such that
            \begin{equation*}
                \norm{h}_1 \geq \Omega \left(\frac{2^{d}}{d^2\log{d}} \right) \norm{f}_2
                .
            \end{equation*}
 
	When $\|h\|_0$ is guaranteed to be   small relative to its degree, tighter bounds can be obtained. For instance, Giorgi, Grenet and Perret du Cray established the following bound by a straightforward induction on the polynomial division algorithm \cite{giorgi-grenet-cray:2021}.
 	
\begin{lemma}[Lemma 2.12 in \cite{giorgi-grenet-cray:2021}]
	Let $f,g,h \in \Z[x $] satisfy $f=gh$.  Then

 \begin{equation}
 	\label{eq:euclidean-division-bound}
 	\norm{h}_{\infty}
 	\leq \norm{f}_{\infty} (\norm{g}_{\infty} + 1)^{\frac{\norm{h}_0+1}{2}}.
 \end{equation}
 \end{lemma}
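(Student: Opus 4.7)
The plan is to perform sparse polynomial division from both ends of $h$ and combine the two resulting bounds; the exponent $(\|h\|_0+1)/2$ (as opposed to $\|h\|_0$) is exactly what a two-sided argument buys.

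\textbf{Step 1 (set up the recurrence from the top).} Let $t = \|h\|_0$ and write the nonzero terms of $h$ as $h_{e_1}x^{e_1}, \ldots, h_{e_t}x^{e_t}$ with $e_1 > \cdots > e_t$, and let $d = \deg g$ so $g_d\neq 0$. Reading off the coefficient of $x^{d+e_k}$ in $f = gh$, only terms $h_{e_j}$ with $j \leq k$ can contribute (else the needed index on $g$ exceeds $d$). This gives
\[
g_d\,h_{e_k} \;=\; f_{d+e_k} \;-\; \sum_{j=1}^{k-1} g_{d+e_k-e_j}\,h_{e_j}.
\]
Since $g\in\Z[x]$ forces $|g_d|\geq 1$, this yields
\[
|h_{e_k}| \;\leq\; \|f\|_\infty + \|g\|_\infty \sum_{j=1}^{k-1} |h_{e_j}|.
\]

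\textbf{Step 2 (induction to the top-side bound).} A straightforward induction on $k$, using the geometric identity $1+\|g\|_\infty\cdot\frac{(1+\|g\|_\infty)^{k-1}-1}{\|g\|_\infty}=(1+\|g\|_\infty)^{k-1}$, gives
\[
|h_{e_k}| \;\leq\; \|f\|_\infty (1+\|g\|_\infty)^{k-1}.
\]

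\textbf{Step 3 (symmetric bound from the bottom).} Now apply the same argument starting from the trailing coefficients. Let $\mu$ be the smallest index with $g_\mu\neq 0$; reading off the coefficient of $x^{\mu+e_{t-k+1}}$ in $f$, only the $k$ lowest terms $h_{e_t},\ldots,h_{e_{t-k+1}}$ can contribute, and again $|g_\mu|\geq 1$. The same induction yields
\[
|h_{e_{t-k+1}}| \;\leq\; \|f\|_\infty (1+\|g\|_\infty)^{k-1}.
\]

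\textbf{Step 4 (combine).} Taking whichever of the two bounds is tighter for each index $k$ gives $|h_{e_k}|\leq \|f\|_\infty (1+\|g\|_\infty)^{\min(k-1,\,t-k)}$, and since $\min(k-1,t-k)\leq (t-1)/2\leq (\|h\|_0+1)/2$, the stated bound follows by taking the maximum over $k$. There is no real obstacle here: the only subtlety is the bookkeeping that ensures the trailing-side recurrence truly involves only the previously-determined coefficients, which follows from the ``only $j\leq k$ contribute'' observation applied at the low-degree end.
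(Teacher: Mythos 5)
Your proof is correct, and it follows exactly the route the paper attributes to Giorgi, Grenet, and Perret du Cray: the paper cites the lemma without proof, describing it as obtained ``by a straightforward induction on the polynomial division algorithm,'' and your two-sided sparse Euclidean division is precisely that induction, run from both the leading and trailing ends and then combined via the $\min$. Two small remarks. First, you actually establish the marginally sharper exponent $(\|h\|_0-1)/2$, since $\max_k \min(k-1,t-k)=\lfloor (t-1)/2\rfloor$; the stated $(\|h\|_0+1)/2$ is weaker and follows immediately. Second, the key hypothesis $f,g,h\in\Z[x]$ enters only through $|\lc(g)|\geq 1$ and $|\tc(g)|\geq 1$, and you correctly invoke both; over $\C$ one would instead pick up factors of $|\lc(g)|^{-1}$ and $|\tc(g)|^{-1}$, consistent with the paper's remark that such a dependence is unavoidable.
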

	While the bound given in \eqref{eq:mignotte-bound} is close to being tight, in this paper we ask whether one can get an improvement  if the factor $g$ is sparse. For example, Khovanski\u{i}'s theory asserts that when studying real roots of polynomial equations, results analogous to B\'ezout's theorem can be obtained, where the number of monomials in the equations takes the place of the degree in B\'ezout's theorem \cite{khovanskii}.

Besides being a self-motivated question, bounding the norm of cofactors of sparse polynomials have important applications in computer algebra which we discuss next.

\subsection{The Sparse Representation}
    The sparse representation of polynomials
        is widely used in computer algebra as it is the most natural
        and succinct representation when polynomials are 
        far from being ``dense'', i.e., when most of their coefficients are zero,
        or, equivalently, when they are supported on a small number of monomials
        \cite{modern-computer-algebra:2013,computer-algebra-systems:1993}.
    In this representation, a polynomial is described
        by a list of coefficient and degree pairs
        that correspond to its terms with non-zero coefficients.
    For example, a polynomial $f \in R[x]$ over the ring $R$
        with $t$ non-zero coefficients, $f = \sum_{i=1}^{t}{c_i x^{e_i}}$, 
        is represented as $\set{(c_i,e_i) \mid i = 1 \ldots t}$.
    Particularly interesting cases are when $R = \Z$ or $R$ is a finite field.
    When working over the integers,
        the size of the sparse representation
        of a polynomial $f \in \Z[x]$, denoted $\size{f}$, is
    \begin{equation*}
        \size{f} = \norm{f}_0(\log_2{\norm{f}_{\infty}} + \log_2{\deg{f}})
        ,
    \end{equation*}
    where $\norm{f}_0$ and  $\norm{f}_\infty$ are the number of non-zero coefficients in $f$,
    and the maximal coefficient of $f$ (in absolute value), respectively.\footnote{
        As we think of $\norm{f}_0$ as being considerably smaller than
        $\log_2{\norm{f}_{\infty}}$ and $\log_2{\deg{f}}$,
        we do not pay attention to the case where bit lengths of coefficients vary.
    }
\ignore{    Similarly, over a finite field $\F$ the representation size is
    \begin{equation*}
        \size{f} = \norm{f}_0(\log_2{\abs{\F}} + \log_2{\deg{f}}).
    \end{equation*}
}
    This makes the sparse representation especially useful for applications involving polynomials with high degrees where most of the coefficients are zero. In particular, it is used by computer algebra systems and
        libraries such as Maple \cite{monagan2013poly},
        Mathematica \cite{wolfram1999mathematica}, Sage \cite{stein2008sage},
        and Singular \cite{schonemann2003singular}.
        
    In the sparse representation, the degree could be exponentially larger
        than the representation size.
    Algorithms that run in polynomial time in terms of the sparse-representation size
        (i.e. number of terms, their bit complexity and the logarithm of the degree)
        are called super-sparse or lacunary polynomial algorithms.
        
 \sloppy   Although the basic arithmetic of sparse polynomials is fairly straightforward
    \cite{johnson:1974}, the sparse representation poses many challenges
        in designing efficient algorithms, and many seemingly simple problems
        are known to be NP-hard.
	For example, Plaisted proved that determining whether
        $x^n-1$ divides a product of sparse polynomials 
	    (such that the size of the input is $\poly(\log n)$)
        is NP-complete \cite{plaisted:1977,plaisted:1984}.
    Other NP-hard problems concerning sparse polynomials include:
        deciding whether there exists a common divisor of two polynomials
        and testing whether a polynomial has a cyclotomic factor \cite{plaisted:1977}.
%        (see \autoref{prel:poly} for definitions).
    The surveys of Davenport and Carette
	    \cite{davenport-carette:2009} and of Roche \cite{roche:2018} give a good picture
	    of known results and open questions concerning sparse polynomials.    

    One of the long-standing open problems concerns exact divisibility of sparse  polynomials.
    There are two versions to this problem. The decision problem asks, given two sparse polynomials $f$ and $g$, to decide whether $g$ divides $f$. The search problem, or the exact division problem, asks to compute the quotient polynomial $f/g$ when it is known that $g$ divides $f$. 
%The surveys of Davenport and Carette \cite{davenport-carette:2009} and of Roche \cite{roche:2018} thoroughly discuss these problems. 
    
    The two problems are closely related but also differ in what we can hope to achieve.
    In the decision problem, the output is a single bit, allowing us to analyze the complexity in terms of the input size.
    Conversely, in the search (exact division) problem, the output size depends on the number of monomials in $f/g$, which may require an exponential number of bits (compared to the size of the input)  to represent.
%    Next, we will discuss each of these problems separately.
    
%    \subsection{Exact Division}
    
     There are two quantities that affect the  complexity of the exact division problem. The first is the number of terms of the quotient $f/g$. It is easy to find examples in which  $f/g$ has exponentially many monomials. The second quantity is the height. 
%(i.e. the maximum among the absolute values of the coefficients) of the factor polynomial. 
Prior to this work  the only bounds on the height were exponential in $\deg{f}$, as in Equations~\eqref{eq:gelfonds-lemma},\eqref{eq:mignotte-bound}  and\eqref{eq:euclidean-division-bound} (see also \cite{mignotte:2012}). %\footnote{These results are tight when $g$ is not a sparse polynomial.} 
In particular, Davenport and Carette note that even in the case where $g$ exactly divides $f$, we need to perform $\Omega(\|f/g\|_0\cdot \|g\|_0)$ operations on coefficients,
        and the number of bits required to represent a coefficient may be as large as $ \deg f$ \cite[Section III.B]{davenport-carette:2009}.
        Therefore, in term of the input size and $\|f/g\|_0$ the only provable bounds on the running time of the exact division algorithm were of the form 
        \begin{equation}\label{eq:DC}
        	\Omega(\|f/g\|_0\cdot\|g\|_0\cdot  \deg f).
        \end{equation}

\ignore{            
    \subsection{Divisibility Testing}

	Recall that in the divisibility testing problem the goal is to determine
        whether $g$ divides $f$ without having to compute the quotient and remainder.
        Thus, we seek a polynomial-time algorithm in the size of the input.
    It is easy to see that when $\deg{g}$ is small (or, similarly, $\deg{f}-\deg{g}$ is small),
        or when the factor polynomial is sparse, the problem becomes easy.

    However, in the general case, it is only known that,
        assuming the Extended Riemann Hypothesis (ERH), the divisibility problem is in coNP
        \cite{grigoriev-karpinski-odlyzko:1992}.
    This led Davenport and Carette to pose the following challenge:

    \begin{openproblem}[Challenge 3 of \cite{davenport-carette:2009}]
        \label{op:davenport-carette}
        Find a class of instances for which the question
        "does g divide f?" is NP-complete,
        or find an algorithm for the divisibility of polynomials
        which is polynomial-time.
        Failing this - find an algorithm for the divisibility of
        cyclotomic-free polynomials which is polynomial-time.
    \end{openproblem}

In fact, even the  next problem, that was stated in 
       \cite{davenport-carette:2009,roche:2018,giorgi2022sparse}
}

Following \cite{davenport-carette:2009}, Roche posed the next problem \cite{roche:2018}.\footnote{We use the convention, $\widetilde{O}(t)=O\left(t\cdot \log^{O(1)} t\right)$ , and when no basis is mentioned then the log is taken in the natural base $e$.}      
\begin{openproblem}[{\cite[Open Problem 3]{roche:2018}}]
	\label{op:exact}
	Given two sparse polynomials $f, g \in R[x]$, develop an algorithm which determines whether $g$ divides $f$ exactly, using
	%       $\widetilde{O}(T (\log{H} + \log{\deg{f}}))$
	$\widetilde{O}(T  \cdot \log{\deg{f}})$ ring and bit operations,\footnote{In the study of algorithms over rings, we distinguish between operations on ring elements, which may depend on the ring's representation, and operations on bit strings. For instance, ring operations include addition, multiplication, and division of coefficients, while bit operations correspond to addition and subtraction of exponent vectors, typically represented as bit strings. In practical scenarios involving domains like $\Z$ or $\C$, the computational cost of each ring operation on a computer is quasi-linear in the number of bits used to represent the relevant elements.} 
	where $T$ is the total number of terms of $f$, $g$, $(f \text{ quo } g)$ and $(f \text{ rem } g)$ (the quotient and remainder).
	%        and $H = \max\set{\norm{g}_{\infty},\norm{f}_{\infty}}$.
\end{openproblem}

Note that \autoref{op:exact} asks for an algorithm for the decision problem, whose running time may be exponential in the input size (as in the search problem).

\ignore{
    There are several reasons to believe that divisibility testing is a hard problem.
    In \cite{plaisted:1984}, Plaisted studied the problem
        of divisibility testing of sparse polynomials and proved that
        determining whether the quotient polynomial has a non-zero constant term
        is NP-hard and that determining the degree of the remainder polynomial is
        also NP-hard.
    Divisibility testing of sparse polynomials is a special case
        of a more general problem of divisibility testing of arithmetic circuits
        in one variable (of arbitrary degree).
    This problem too was shown to be NP-hard - as mentioned,
        testing whether $x^n-1$ divides
        a product of sparse polynomials \cite{plaisted:1977} is NP-Hard.
        While these results do not prove that the divisibility
        problem is NP-hard, they show that closely related problems are difficult to solve. 

    Davenport and Carette highlighted polynomials with cyclotomic factors in \autoref{op:davenport-carette} because,
        since Plaisted's work \cite{plaisted:1977}, all NP-hard problems related to sparse polynomials involve cyclotomic polynomials.
    For example, many problems are known to be NP-hard when the polynomials involved
        can have cyclotomic factors, such as deciding whether there
        exists a non-trivial common divisor of two polynomials, testing whether a polynomial
        has a cyclotomic factor \cite{plaisted:1984},
        and determining if a polynomial is square-free \cite{karpinski-shparlinski:1999}.
    Notably, it remains an open question whether these problems,
        and the NP-hard problem mentioned above, remain NP-hard when considering
        cyclotomic-free polynomials.
    For further discussion on the role of cyclotomic polynomials and the difficulties they pose,
        see the surveys \cite{davenport-carette:2009, roche:2018}.

    \sloppy	There are multiple obstacles hindering natural approaches,
        including long-division, computing the remainder, or testing evaluations,
        from efficiently solving the divisibility testing problem.
    We first note that the quotient of two sparse polynomials may not be sparse.
    For example, consider the equation
        \begin{equation*}
            \frac{x^n - 1}{x - 1} = \sum_{0 \leq i < n}{x^i}.
        \end{equation*}
        Here the size of the input is $O(\log n)$ whereas the size of the output is $\Omega(n)$.
    This rules out the simple long division algorithm
        and requires any solution to decide whether such a quotient exists
        without computing it explicitly.          
    Similarly, we note that the remainder
        of the division of two sparse polynomials
        may not be sparse as well, and, moreover, may contain doubly exponential large coefficients.
    For example, consider
        \begin{equation*}%\label{eq:exp-remainder}
            x^{(n + 1)n} \bmod x^{n + 1} - x - 1
                = \sum_{0 \leq i \leq n}{\binom{n}{i} x^i}.
        \end{equation*}
        Here the input size is $O(\log n)$, however, the coefficient of $x^{n/2}$ 
        of the remainder is of size $2^{n-O(\log n)}$, which is doubly exponential in the input size.
    This rules out any approach that computes the remainder explicitly.
    Nevertheless, a solution to the divisibility testing problem
        has to decide whether the remainder is zero or not.
    Another challenge arises from the inefficiency of evaluating polynomials:
        the evaluation of a sparse polynomial $f \in \Z[x]$
        on an algebraic number $a$
        may have an exponential bit complexity,
        unless $\abs{a} = 1$ or $a = 0$.
    Thus, natural approaches to the divisibility
        problem are not effective when considering sparse polynomials of high degree.
	Prior to our work, the best upper bounds
        on the size of the coefficients of the quotient polynomial,
        when $g$ exactly divides $f$, were exponentially large in the degree of $f$,
        see \cite{mahler:1962}, \cite[Chapter 1.6]{bombieri:2006}
        and \cite[Proposition 4.9]{mignotte:2012}.
    Thus, it was previously unknown whether coefficients of the quotient polynomial have
        ``reasonable'' representation size,
        even when the divisor polynomial has a bounded number of terms and the remainder is zero.

	Some of the obstacles described above disappear when working over finite fields, such as having too large coefficients, or not being able to evaluate efficiently.
    However, there are reasons to believe that divisibility testing
        is a hard problem over finite fields as well.
    Bi, Cheng, and Rojas proved that with respect to BPP-reductions,
        deciding whether two sparse polynomials have non-trivial common divisor
        or whether a sparse polynomial is square-free
        is NP-Hard.
    \cite{bi-cheng-rojas:2013}. Other related problems such as computing the number of common roots of 
    $n$ sparse polynomials $f_1, \ldots,f_n \in \F_p[x]$, of degrees $\exp(n)$, 
    were shown to be $\#P$ complete \cite{Quick86,gathen-karpinski-shparlinski:1993}.

    \paragraph{This work:} 
    We make significant progress toward solving 
        \autoref{op:exact} in the exact case and improve the bound given in Equation~\eqref{eq:DC} on the running time of the exact division algorithm.
        
    Specifically, we prove that, when $g$ exactly divides $f$ it holds that
    \[
    \log \|f/g\|_\infty \leq \log \|f\|_\infty +\widetilde{O}\left( \|g\|_0 \right) \cdot \log \deg f.
    \]
        For example, our result shows that when $g$ has a bounded number of terms, the coefficients
    of the quotient $f/g$ are at most polynomial in $\deg f$,
    thus yielding an exponential improvement over classical results
    \cite{mahler:1962}, \cite[Chapter 1.6]{bombieri:2006} and \cite{mignotte:1974}. 
    
    An  implication of this result is that, in the exact case, the running time of the long division algorithm is 
     \begin{equation}\label{eq:intro-bound}
     \widetilde{O}(\|f/g\|_0\cdot\|g\|_0\cdot (\log \|f\|_\infty +\log \|g\|_\infty+ \|g\|_0  \cdot \log \deg f)),
    \end{equation}
    where earlier results could only prove an upper bound of the form $\Omega(\|f/g\|_0\cdot\|g\|_0\cdot  \deg f)$ (see  Equation~\eqref{eq:DC}) \cite{davenport-carette:2009}.

    When the number of terms in $g$
        is at most $\poly(\log{\norm{f/g}_0})$, \eqref{eq:intro-bound} implies that
        the natural long division algorithm runs
        in time $\widetilde{O}(T\cdot (\log{\norm{f}_{\infty}} + \log{\norm{g}_{\infty}} + \log{\deg{f}}))$,
        where $T$ is the total number of terms in $f$, $g$ and $f/g$.
%    In particular, when the sparsity of $g$ is bounded (or even $\poly(\log{\log{\deg{f}}})$)
        This provides a positive answer to \autoref{op:exact} in this case.
%    We achieve this by proving a new upper bound on the coefficients of the quotient polynomial. 
}

    \subsection{Our Results}
        \label{sec:intro:results}

        % norm bound
        Our  main result gives a new bound on the
            $\ell_2$-norm of $f/g$, for $f,g\in\C[x]$ such that $g$ divides $f$. This result improves the classical bounds of Gel'fond \cite{mahler:1962} and  Mignotte \cite{mignotte:1974} when the factor $g$ is sparse. In what follows we assume without loss of generality that $f(0)\neq 0$ as we can always remove powers of $x$ that divide the polynomials without affecting their coefficients. To slightly simplify the expression, we assume that $f(0)\neq 0$. If this is not the case, then $\deg{g}$ should be replaced by $\deg{g}-\ord{0}{g}$ throughout.
\begin{theorem}
%        \begin{theorem}[Simplified and slightly weaker version of \autoref{thm:main-M}]
            \label{thm:intro:norm-bound}
 	Let $f, g, h \in \C[x]$ such that $f = gh$ and $f(0)\neq 0$.
 	Denote $L(n) = n\log{n}$. Let
 			\[\Max
 	= \max \left\{ 
 	\begin{array}{c} 
 		\min\{\abs{\lc(g)},\abs{\tc(g)}\}\cdot \left(\frac{\|g\|_0}{2e}\right)^{\frac{\|g\|_0-1}{2}}\cdot \left(\frac {\deg{g}}{e}\right)^{\frac{\norm{g}_0}{2}}\\
 		\max\{\abs{\lc({g})},\abs{\tc({g})}\}\cdot \deg{g}
 	\end{array} 
 	\right\} .
 	\]
 Then,  
 \begin{equation*}
 	\norm{h}_2
 	\leq \frac{\sqrt{2}\norm{f}_1}{\Max{}}
 	\left({2\norm{g}_0 \cdot L\left( 12\norm{g}_0\deg{g} + 2\deg{f} \right)^2}\right)^{\norm{g}_0-1}
 	.
 \end{equation*}
%            Let $f, g, h \in \C[x]$ such that $f = gh$  and $f(0)\neq 0$.
% Denote
%            Then,
  %          \begin{equation*}
    %            \norm{h}_2 < \frac{\norm{f}_1}{\min\{\abs{\lc(g)},\abs{\tc(g)}\}}\cdot
      %              \widetilde{O}\left(4e\cdot  \frac{ \norm{g}_0^5 \deg^4{f}} {\deg{g}}\right)^{\frac{\norm{g}_0-1}{2}}
        %        ,
          %  \end{equation*}
	%	where $\lc(g)$ and $\tc(g)$ are the coefficients of the highest degree (leading coefficient) and smallest degree (trailing %coefficient) monomials in $g$, respectively.
        \end{theorem}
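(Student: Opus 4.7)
The plan is to bound $\|h\|_2$ via contour integration on a circle near $|z|=1$. Write $s=\|g\|_0$, $d=\deg g$, $D=\deg f$. By Parseval, for any $r>0$ on which $g$ does not vanish,
\[
\frac{1}{2\pi}\int_0^{2\pi}|h(re^{i\theta})|^2\,d\theta \;=\; \sum_k |h_k|^2 r^{2k}.
\]
For $r\ge 1$ this upper bounds $\|h\|_2^2$ by $\|f\|_1^2\, r^{2D}\cdot\min_\theta|g(re^{i\theta})|^{-2}$, since $|f(re^{i\theta})|\le \|f\|_1\, r^D$. Applying the identity to the reversed polynomial $h^{*}(x)=x^{\deg h}h(1/x)$ on a radius $r\le 1$ yields a symmetric bound in which $\lc(g)$ is replaced by $\tc(g)$. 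I expect the proof to combine both perspectives, splitting $\|h\|_2^2$ into its low- and high-degree halves; this would account both for the $\sqrt2$ prefactor in the statement and for the symmetric roles of $\lc(g)$ and $\tc(g)$ in $M$.

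Everything then reduces to a single technical lemma: there exists $r$ so close to $1$ that $r^D=O(1)$, on which
\[
\min_\theta|g(re^{i\theta})|\;\ge\; M\cdot\bigl(2s\cdot L(12sd+2D)^2\bigr)^{-(s-1)}.
\]
I would prove this in two steps. First, a pigeonhole over many candidate radii in a thin annulus of width $\sim 1/D$ around $|z|=1$ produces a root-free sub-annulus, since $g$ has at most $d$ roots in total. On a circle in such a sub-annulus, the product formula $|g(z)|=|\lc(g)|\prod_j|z-\alpha_j|$, together with the corresponding estimate for $g^{*}$, supplies the $\min\{|\lc(g)|,|\tc(g)|\}$ and $\max\{|\lc(g)|,|\tc(g)|\}\cdot d$ contributions appearing in the definition of $M$; a Stirling-type estimate on the combinatorics of $s$-term root configurations should produce the $\bigl(s/(2e)\bigr)^{(s-1)/2}(d/e)^{s/2}$ factor. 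Second, and more delicately, one must control near-cancellations among the $s$ monomials of $g$: since $|g(re^{i\theta})|=\bigl|\sum_{i=1}^s c_i r^{e_i}e^{ie_i\theta}\bigr|$ is a short exponential sum, its minimum on short arcs can be very small. I expect this step to be handled either by an inductive argument on $s$ that loses an $L(\cdot)^2$ factor per monomial removed, or by a Tur\'an-style inequality for $s$-term exponential sums, yielding the overall $\bigl(2s\, L(\cdot)^2\bigr)^{s-1}$ factor.

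The main obstacle is the second step. A naive product-formula estimate already loses a factor of $2^{d}$, as in \eqref{eq:mignotte-bound}, because the roots of a sparse polynomial may cluster arbitrarily close to the unit circle (e.g.\ cyclotomic factors). Quasi-polynomial dependence on $D$ and $d$ can only be obtained by genuinely exploiting the sparsity of $g$: one must show that $s$-term polynomials cannot nearly vanish on many nearby circles simultaneously, losing only $L(\cdot)^{2(s-1)}$ in the process. This quantitative near-nonvanishing property of sparse polynomials is the essential new input relative to the classical Gel'fond--Mignotte theory and is the source of the improvement.
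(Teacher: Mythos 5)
Your overall strategy — Parseval/contour integration to reduce bounding $\|h\|_2$ to lower-bounding $|g|$ near the unit circle, with a reversal trick to account for the $\tc(g)$ versus $\lc(g)$ symmetry — is in the right spirit, and you have correctly identified the crux: one must quantify the ``near-nonvanishing'' of a sparse polynomial near $|z|=1$. But the reduction you set up and the key lemma you reduce to are different from the paper's, and the lemma is left entirely unproved, which is precisely where all the work lies.

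The paper does not integrate over a circle of radius $r\neq 1$, and in particular it never needs to control $\min_\theta |g(re^{i\theta})|$ uniformly over all $\theta$. Instead, it applies the \emph{discrete} Fourier transform at the $p$th roots of unity for a prime $p>2\deg f$ (\autoref{lma:dft}), which reduces the problem to lower-bounding $|g|$ at a \emph{finite} sample set. The sparsity of $g$ is then exploited in two separate ways: (i) by induction on $\|g\|_0$ via the derivative (\autoref{lma:inductive-lemma}), one shows that $|g(e^{i\alpha})|$ is large for all $\alpha$ outside a small ``bad set'' $B(g)\subseteq[0,2\pi)$ of size $O(\|g\|_0\deg g)$; and (ii) a pigeonhole over primes (\autoref{lma:good-frequency} and \autoref{lma:evaluation-on-root-of-unity-lb}) produces a prime $p$ in a short range for which all non-trivial $p$th roots of unity land far from $B(g)$. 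The discrete sampling is essential: the bad set has positive measure once thickened by a $\delta$-neighborhood, and the continuous $\min_\theta$ you work with is obligated to look at the worst angle, including inside the bad set. Your pigeonhole is over radii, which eliminates circles passing exactly through roots but does not prevent $g$ from being very small at particular angles of the chosen circle (e.g.\ if $g$ has a root of multiplicity $\|g\|_0-1$ on the unit circle, then for $|r-1|\sim 1/\deg f$ one gets $\min_\theta|g(re^{i\theta})|\sim (\deg f)^{-(\|g\|_0-1)}$, and it is not obvious how to show this is the worst case).

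Concretely, the gap is your second step: you propose a Tur\'an-style inequality or a sparsity induction ``losing an $L(\cdot)^2$ per monomial,'' but you do not produce the mechanism. The paper's mechanism is the specific claim that if $g'$ (scaled to be monic) is large on an interval $I$ avoiding the bad set, and none of $\Re f, \Im f, \Re f', \Im f', \Re f' \pm \Im f'$ (for $f(x)=g_0(e^{ix})$) changes sign on $I$, then $g$ itself is large on the interior of $I$ (Claims~\ref{clm:real-analysis},~\ref{clm:plane-analysis}); the sign-change points are counted by B\'ezout (\autoref{clm:complex-polynomial-roots}), which is where the $O(\|g\|_0\deg g)$ size of the bad set comes from. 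Without something with this precision, there is no path from your root-free-annulus observation to the quasi-polynomial loss in the statement, since, as you yourself note, the naive product formula loses $2^{\deg g}$. So the proposal correctly frames the problem but does not prove the theorem; the step you flag as ``the main obstacle'' is essentially the whole proof.
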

        
        \begin{remark}
        	We note that we must have a term that inversely depend on $|\lc(g)|$ or $|\tc(g)|$
% cannot avoid having  a term of the form $\frac{1}{ \min\{\abs{\lc(g)},\abs{\tc(g)}\}}$ 
since if $hg=f$ then for every constant $c$ we also have $(c\cdot h)\cdot (g/c)=f$. This scales $\|h\|_2$ by a factor of $c$. 
        \end{remark}
  
%  We note some immediate corollaries.
The following is an immediate corollary.
  
  \begin{corollary}\label{cor:intro:main}
  	Let
  	\begin{itemize}
  		\item $f, g, h \in \Z[x]$ such that $f = gh$ and $f(0)\neq 0$, or 
  		\item $f, g, h \in \C[x]$ such that $f = gh$, $f(0)\neq 0$,  and $|\lc(g)|,|\tc(g)|\geq 1$.
  	\end{itemize}  
	Then, it holds that
  	\begin{equation*}\label{eq:bound-Z}
%  	\|h\|_2 \leq \sqrt{\frac{\norm{g}_0}{\deg g}}\cdot \|f\|_1 \cdot \left( \widetilde{O}\left(\|g\|_0^{2.5} \frac{\deg(f)^2}{\sqrt{\deg(g)}}\right)\right)^{\|g\|_0 - 1}.
 \norm{h}_2
 \leq \sqrt{\frac{\norm{g}_0}{\deg g}}\cdot \|f\|_1\cdot 
 \left(\widetilde{O}\left(\frac{\norm{g}_0^{2.5} \cdot  \mathrm{deg}^2{g} + \sqrt{\|g\|_0}\cdot  \mathrm{deg}^2{f}} {\sqrt{\deg{g}}}\right)^2\right)^{\norm{g}_0-1}
 . 	\end{equation*}
\  	In terms of bit representation size we have
  	\begin{equation}\label{eq:bound-bit-Z}
  		\log_2{\norm{h}_{\infty}} \leq \log_2{\norm{h}_{2}}\leq  \log_2{\norm{f}_{\infty}} + O\left(\|g\|_0\cdot \log_2{\deg{f}}\right).
  	\end{equation}
%  	The same upper bounds on $\|h\|_2$ and $\|h\|_\infty$ hold alsp 
  \end{corollary}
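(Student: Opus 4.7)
The plan is to deduce the corollary by substituting the hypothesis $|\lc(g)|,|\tc(g)|\ge 1$ into the bound of \autoref{thm:intro:norm-bound} and simplifying algebraically. In the complex case this hypothesis is explicitly assumed; in the integer case it is automatic, since nonzero integers have absolute value at least one, and $f(0)\ne 0$ forces $g(0)\cdot h(0)\ne 0$, so the trailing coefficient $\tc(g)=g(0)$ is a nonzero integer with $|\tc(g)|\ge 1$, while $|\lc(g)|\ge 1$ is immediate.

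Under this assumption, dropping the $\min\{|\lc(g)|,|\tc(g)|\}$ factor from the first entry of $\Max$ yields
\[
\Max \;\geq\; \left(\frac{\|g\|_0}{2e}\right)^{(\|g\|_0-1)/2}\left(\frac{\deg g}{e}\right)^{\|g\|_0/2}.
\]
Substituting into \autoref{thm:intro:norm-bound}, and writing $N := 12\|g\|_0\deg g + 2\deg f$ so that $L(N)^2 = N^2\log^2 N$, gives
\[
\|h\|_2 \;\leq\; \sqrt 2\,\|f\|_1 \left(\frac{2e}{\|g\|_0}\right)^{(\|g\|_0-1)/2}\left(\frac{e}{\deg g}\right)^{\|g\|_0/2}\bigl(2\|g\|_0 N^2\log^2 N\bigr)^{\|g\|_0-1}.
\]
The next step is to isolate the advertised prefactor $\sqrt{\|g\|_0/\deg g}$. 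Writing $(e/\deg g)^{\|g\|_0/2}=\sqrt{e/\deg g}\cdot(e/\deg g)^{(\|g\|_0-1)/2}$ and then $\sqrt{e/\deg g}=\sqrt{e/\|g\|_0}\cdot\sqrt{\|g\|_0/\deg g}$ peels off this prefactor, and the residual constant $\sqrt{2e/\|g\|_0}$ is absorbed into the $(\cdots)^{\|g\|_0-1}$ base (using $\|g\|_0\ge 2$; the case $\|g\|_0=1$ is trivial because $h$ is then a monomial).

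Collecting terms gives $\|h\|_2\le\sqrt{\|g\|_0/\deg g}\,\|f\|_1\cdot B^{\|g\|_0-1}$ with $B=\widetilde O\bigl(\sqrt{\|g\|_0/\deg g}\cdot N^2\bigr)$, where the $\log^2 N$ factor has been absorbed into $\widetilde O$ (since $N\le O(\|g\|_0\deg f)$). Expanding $N^2\le O(\|g\|_0^2\deg^2 g+\deg^2 f)$ turns this base into
\[
B \;=\; \widetilde O\!\left(\frac{\|g\|_0^{5/2}\deg^2 g+\sqrt{\|g\|_0}\,\deg^2 f}{\sqrt{\deg g}}\right),
\]
which is bounded above by the squared expression inside $\widetilde O$ in the corollary (the quantity inside is at least $1$ in any nontrivial regime, so $B\le B^2$), establishing the first displayed inequality. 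The bit-size bound then follows by taking $\log_2$, using $\|h\|_\infty\le\|h\|_2$, $\|f\|_1\le(\deg f+1)\|f\|_\infty$, and $\log\deg g\le\log\deg f$, to collapse the $\widetilde O(\cdots)^{2(\|g\|_0-1)}$ factor into an additive $O(\|g\|_0\log_2\deg f)$ term. The only real work is the algebraic bookkeeping in extracting $\sqrt{\|g\|_0/\deg g}$ cleanly; there is no conceptual obstacle beyond \autoref{thm:intro:norm-bound} itself.
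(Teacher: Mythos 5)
Your argument is exactly the intended ``immediate'' substitution the paper leaves to the reader: the paper does not spell out a proof (it says only ``The following is an immediate corollary''), and your derivation supplies precisely the bookkeeping that remark elides. You correctly observe that $f(0)\ne 0$ forces $\ord{0}{g}=0$, so $\tc(g)=g(0)$ and, over $\Z$, both $|\lc(g)|,|\tc(g)|\ge 1$, which lets you drop the $\min\{|\lc(g)|,|\tc(g)|\}$ factor in $\Max$; the extraction of the prefactor $\sqrt{\|g\|_0/\deg g}$ by splitting $(e/\deg g)^{\|g\|_0/2}$, the absorption of the residual constant $\sqrt{2e/\|g\|_0}\le\sqrt e$ into the $(\cdot)^{\|g\|_0-1}$ base for $\|g\|_0\ge 2$, the expansion $N^2=O(\|g\|_0^2\deg^2 g+\deg^2 f)$, and the $\log_2$-and-crude-bounds step to Equation~\eqref{eq:bound-bit-Z} are all correct. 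You also rightly notice that the calculation yields the base without the extra outer square (i.e.\ $\widetilde O(X)$ rather than $\widetilde O(X)^2$), so the stated corollary is actually a relaxation of what the substitution produces; your remark that $B\le B^2$ (using $B\ge 1$) cleanly bridges that gap. One tiny imprecision: when $\|g\|_0=1$ with $g(0)\ne 0$, $g$ is a nonzero constant (it is $g$, not $h$, that is the monomial), $\deg g=0$, and the displayed bound is not even well-defined, so that case should simply be excluded rather than called trivial; this does not affect the substance of the argument.
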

  
%        
%        \begin{corollary}\label{cor:intro:main}
%        	Let $f, g, h \in \Z[x]$ such that $f = gh$, $f(0)\neq 0$ and $\norm{g}_0< (\deg{g}/4e)^{1/5}$. Then,
%        	 \begin{equation}\label{eq:bound-c}
%        		\norm{h}_2 \leq{\norm{f}_1}\cdot
%        		\widetilde{O}\left({  \deg{f}} \right)^{2\norm{g}_0}
%        		.
%        	\end{equation}
%        \end{corollary}
    \begin{corollary}
        Let $f, g \in \C[x]$ such that $f = gh$ and $g$ is monic.
        Then it holds that
        \begin{equation*}
        \norm{h}_2
            \leq \frac{\|f\|_1}{\deg{g}}\cdot 
            \widetilde{O}\left(
                \norm{g}_0^{3} \mathrm{deg}^2{f}\right)^{\norm{g}_0-1}
            . 	\end{equation*}
    \end{corollary}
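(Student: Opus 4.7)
The plan is to derive this corollary as a direct instantiation of \autoref{thm:intro:norm-bound}, simplified using the monicity of $g$. (We may reduce to the case $f(0)\neq 0$ by first factoring out the common power of $x$ from $f$ and $g$; this operation preserves $\|f\|_1$, $\|g\|_0$, $\deg f$, and $\|h\|_2$, and keeps $g$ monic.)

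Since $g$ is monic, $|\lc(g)|=1$, so the second branch of the maximum defining $\Max$ in \autoref{thm:intro:norm-bound} immediately yields
\[
\Max \;\geq\; \max\{|\lc(g)|,|\tc(g)|\}\cdot \deg g \;\geq\; \deg g .
\]
Hence the prefactor of \autoref{thm:intro:norm-bound} is at most $\sqrt{2}\|f\|_1/\Max \leq \sqrt{2}\|f\|_1/\deg g$, matching the target form $\|f\|_1/\deg g$ up to the constant absorbed by the $\widetilde O$ notation.

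For the exponentiated factor, I would use $\deg g\leq \deg f$ (which holds since $g\mid f$) to estimate $12\|g\|_0\deg g+2\deg f\leq 14\|g\|_0\deg f$. With $L(n)=n\log n$, this gives $L(12\|g\|_0\deg g+2\deg f)^2=\widetilde O(\|g\|_0^2\deg^2 f)$, and multiplying by $2\|g\|_0$ produces $2\|g\|_0\cdot L(\cdots)^2=\widetilde O(\|g\|_0^3\deg^2 f)$. Raising to the power $\|g\|_0-1$ then yields the claimed factor $\widetilde O(\|g\|_0^3\deg^2 f)^{\|g\|_0-1}$.

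I do not expect any real obstacle: once \autoref{thm:intro:norm-bound} is granted, the corollary is essentially a routine bookkeeping calculation. The only non-obvious choice is to bound $\Max$ using its second (linear) branch, rather than the first branch used in \autoref{cor:intro:main}. This choice is what allows us to replace the $\sqrt{\|g\|_0/\deg g}\cdot\|f\|_1$ prefactor of \autoref{cor:intro:main} by the sharper $\|f\|_1/\deg g$ available here under monicity.
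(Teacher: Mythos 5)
Your derivation is essentially the intended one: the paper states this corollary without proof, and the natural argument is exactly what you do — instantiate \autoref{thm:intro:norm-bound}, bound $\Max$ from below by its second branch (which for monic $g$ gives $\Max \geq \max\{\abs{\lc(g)},\abs{\tc(g)}\}\cdot\deg g \geq \deg g$), and absorb the remaining polylogarithmic factors of $L$ into $\widetilde{O}$. The estimate $12\norm{g}_0\deg g + 2\deg f \leq 14\norm{g}_0\deg f$ and the resulting $2\norm{g}_0\cdot L(\cdots)^2 = \widetilde{O}(\norm{g}_0^3\deg^2 f)$ are correct, and the $\sqrt{2}$ is absorbed whenever $\norm{g}_0\geq 2$ (when $\norm{g}_0 = 1$ and $f(0)\neq 0$ one gets $\deg g = 0$, so the bound is vacuous).

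One inaccuracy worth fixing: the parenthetical claim that factoring out the common power of $x$ preserves $\deg f$ is false. That operation replaces $\deg f$ by $\deg f - \ord{0}{f}$ and, more importantly, $\deg g$ by $\deg g - \ord{0}{g}$. Shrinking $\deg f$ is harmless (it only tightens the $\widetilde{O}$ factor), but shrinking $\deg g$ makes the prefactor $\norm{f}_1/\deg g$ larger, so the reduction does \emph{not} give back the statement with the original $\deg g$. Indeed the corollary as literally worded fails when $\ord{0}{g}>0$: take $g=f=x^m$, $h=1$, so the left side is $1$ while the right side is $1/m$. The paper's stated convention (in the paragraph preceding \autoref{thm:intro:norm-bound}) is that when $f(0)=0$ one should read $\deg g$ as $\deg g - \ord{0}{g}$; your reduction implements precisely this replacement, and you should say so explicitly instead of asserting that the degrees are preserved.
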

        Unlike the bounds in Equations~\eqref{eq:gelfonds-lemma}, \eqref{eq:mignotte-bound} and \eqref{eq:euclidean-division-bound},
            our bound on the height of $h$ is not exponential in the degree or number of coefficients of $h$, but rather in the sparsity of the factor $g$, which for sparse polynomials may be exponentially smaller. %Even more surprisingly, the bound in \eqref{eq:bound-c} does not depend on the height of $g$, in spite of the ``feedback process'' in the polynomial division algorithm.
% We also get the following bound for monic $g$ as a corollary.
% %\footnote{This actually follows from  \autoref{thm:intro:norm-bound}, when we substitute $M\geq \max\{\abs{\lc({g})},\abs{\tc({g})}\}\cdot (\deg{g}-\ord{0}{g})$. }
% %(assuming that the leading and trailing coefficients of $g$ are at least $1$ in absolute value).
%\begin{corollary}\label{cor:monic}
%Let $f, g, h \in \C[x]$ such that $f = gh$, $f(0)\neq 0$,  and $g$ is monic. Then,
%\[ 	\norm{h}_2
%\leq \sqrt{\frac{\norm{g}_0}{\deg g}}\cdot \|f\|_1\cdot 
%\left(\widetilde{O}\left(\frac{\norm{g}_0^{2.5} \cdot  \deg{g}^2 + \sqrt{\|g\|_0}\cdot \deg{f}^2} {\sqrt{\deg{g}}}\right)^2\right)^{\norm{g}_0-1}
%.\]
%	As before, we obtain an upper bound on the bit representation size
%	\begin{equation*}\label{eq:bound-bit}
%		\log_2{\norm{h}_{\infty}} \leq \log_2{\norm{h}_{2}} \leq \log_2{\norm{f}_{\infty}} + O\left(\|g\|_0\cdot \log_2{\deg{f}}\right).
%	\end{equation*}
%\end{corollary}
	% 
%
%        We consider the case where the sparsity
%            (number of non-zero coefficient terms) of $g$ is constant.
%        Somewhat surprisingly, the above bound implies that in this case
%        \begin{equation*}
%            \log_2{\norm{h}_{\infty}} \leq \log_2{\norm{f}_{\infty}} + O(\log{\deg{f}})
%            ,
%        \end{equation*}
%        which is linear in the representation size of $f$.
        Thus,  when $\|g\|_0$ is small,
            this gives an exponential improvement over the bounds of Gel'fond and Mignotte (Theorems~\ref{thm:gelfond},~\ref{thm:Mignotte}). 
             In addition,       \autoref{thm:intro:norm-bound} gives an exponential separation between
            exact and non-exact division  since $\log_2{\norm{h}_{\infty}}$
            can be as large as $\Omega(\deg{f} \cdot \log_2{\norm{g}_{\infty}})$
            when allowing non-zero remainder.

		\autoref{thm:intro:norm-bound} 
		 also provides the following guarantee regarding the performance of the polynomial division algorithm in the exact case.
		
		\begin{theorem}\label{thm:div-alg}
			When the division is exact, the time complexity of the polynomial division algorithm 
			for polynomials $f, g \in \Z[x]$ is 
            \begin{equation*}
			    \widetilde{O}\left(
                    \norm{f/g}_0
                    \cdot
                    \norm{g}_0
                    \cdot \left(
                        \log{\norm{g}_{\infty}} + \log{\norm{f}_{\infty}} + \norm{g}_0\log{\deg{f}}
                    \right)
                \right).
			\end{equation*}
		\end{theorem}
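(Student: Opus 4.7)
The plan is to analyze the textbook sparse long-division algorithm and show two things: (i) it performs at most $\|f/g\|_0\cdot \|g\|_0$ coefficient operations, and (ii) every coefficient that appears during the execution can be stored in $\widetilde{O}\bigl(\log\|f\|_\infty+\log\|g\|_\infty+\|g\|_0\log\deg f\bigr)$ bits, so each coefficient operation runs in quasi-linear time in this bound.

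For (i), I would recall the standard algorithm: maintain a partial remainder $r$, initialized to $f$; repeatedly extract the leading term $c_kx^k$ of $r$, emit the quotient term $h_k=c_k/\lc(g)$ (an integer in the exact case), and update $r\leftarrow r-h_k x^{k-\deg g}g$. The loop terminates once $r=0$, and since exactly one quotient term is produced per iteration, the number of outer iterations is $\|f/g\|_0$. Each update touches the $\|g\|_0$ nonzero positions of $g$, so each iteration performs $O(\|g\|_0)$ coefficient additions/multiplications. Managing the ``next leading term'' efficiently requires a priority queue keyed by exponents (bit strings of length $O(\log\deg f)$); maintaining it contributes only polylogarithmic factors per operation, absorbed into $\widetilde{O}(\cdot)$.

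For (ii), the key new ingredient is the bit-size bound from Corollary~\ref{cor:intro:main}, which gives
\[
\log\|h\|_\infty \leq \log\|f\|_\infty + O(\|g\|_0\log\deg f).
\]
Each coefficient of $r$ at exponent $k$ has the form $f_k-\sum_{j}h_j g_{k-j}$, a sum over at most $\|g\|_0$ indices of products of already-computed quotient coefficients with coefficients of $g$. By the triangle inequality, its absolute value is bounded by $\|f\|_\infty+\|g\|_0\cdot\|g\|_\infty\cdot\|h\|_\infty$, whose logarithm is $\widetilde{O}\bigl(\log\|f\|_\infty+\log\|g\|_\infty+\|g\|_0\log\deg f\bigr)$. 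Since integer arithmetic on $B$-bit numbers costs $\widetilde{O}(B)$ bit operations, multiplying this per-operation cost by the $\|f/g\|_0\cdot\|g\|_0$ operations from (i) yields the claimed runtime.

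The main obstacle is really just step (ii): without the sparse cofactor bound of Theorem~\ref{thm:intro:norm-bound}, one can only say that intermediate coefficients might be as large as the Mignotte or Gel'fond bounds allow, which are exponential in $\deg f$ and would give the weak $\Omega(\|f/g\|_0\cdot\|g\|_0\cdot\deg f)$ bound of \eqref{eq:DC}. With our new bound in hand, the remaining argument is essentially bookkeeping: tracking the bit size through the long-division loop and invoking quasi-linear integer arithmetic. A minor point worth verifying carefully is that the priority-queue data structure indexing exponents does not introduce an extra factor of $\|g\|_0$ per operation; this is standard but should be spelled out.
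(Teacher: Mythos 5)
Your proposal is correct and follows essentially the same approach as the paper: both run the sparse long-division algorithm, bound the number of iterations by $\|f/g\|_0$ in the exact case, bound the intermediate coefficient sizes via the new cofactor bound of \autoref{thm:intro:norm-bound} (in the form of \eqref{eq:bound-bit-Z}), and multiply by the cost of quasi-linear integer arithmetic. Your explicit discussion of the priority queue for leading-term extraction is a data-structure detail that the paper's \autoref{prop:bounded-long-division} leaves implicit, but it does not change the argument.
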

		Note that when  $\norm{g}_0 = \poly(\log{\norm{f/g}_0})$
		or $\norm{g}_0 = \poly(\log(\norm{g}_\infty\cdot \|f\|_\infty\cdot \deg{f}))$, the bound in \autoref{thm:div-alg} is $\widetilde{O}(\norm{f/g}_0  (\log\deg{f} + \log H))$
		where $H$ is an upper bound on the heights of $f$ and $g$. This gives a positive answer to \autoref{op:exact} (in fact, for the more difficult search version), for this setting of parameters.
		%and $\|g\|_\infty,\|f\|_\infty = \deg(f)^{O(\log\log\deg(f))}$. 
		%In particular, this holds when the sparsity of $\|g\|_0$ is constant.
		%quasi-linear when $\norm{g}_0$ is bounded (in fact, it is quasi-linear
%           even when $\norm{g}_0 = \poly(\log{\norm{f/g}_0})$
 %               or $\norm{g}_0 = \poly(\log\log\deg{f})$).
 %       This gives a positive answer to the case of exact divisibility          of \autoref{op:exact} for such $g$ and $f$.

        Finally, combining the bound in \autoref{thm:intro:norm-bound} with Theorem 1.3 of \cite{giorgi2022sparse}
            we obtain the following result.
            \begin{corollary}\label{cor:rand-div}
            There exists a randomized algorithm that,
            given two sparse polynomials $f, g \in \Z[x]$,
            such that $g$ divides $f$,
            the algorithm outputs the quotient polynomial $f/g$, with probability at least $2/3$. The running time of the algorithm is
            \begin{equation*}
            	\widetilde{O}\left(
            	(\norm{f}_0 + \norm{g}_0 + \norm{f/g}_0)                   
            	\cdot \left(
            	\log{\norm{g}_{\infty}} + \log{\norm{f}_{\infty}} + \norm{g}_0\log{\deg{f}}
            	\right)
            	\right).
            \end{equation*}	
            \end{corollary}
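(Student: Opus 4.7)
The strategy is to combine our new height bound from Corollary~\ref{cor:intro:main} with the randomized sparse exact-division algorithm of Giorgi, Grenet, and Perret du Cray (Theorem~1.3 of \cite{giorgi2022sparse}). That algorithm takes two sparse integer polynomials $f,g$ together with an a~priori upper bound on the height (and number of terms) of the putative quotient, and, when $g\mid f$, recovers $f/g$ in time quasi-linear in the total number of terms $\|f\|_0+\|g\|_0+\|f/g\|_0$ and in $\log\|f/g\|_\infty+\log\deg f+\log\|f\|_\infty+\log\|g\|_\infty$.

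First, I would reduce to the case $f(0)\neq 0$: factor out the common monomial power $x^{\ord{0}{g}}$ from both $f$ and $g$ in the sparse representation (this is free, costing $O(\|f\|_0+\|g\|_0)$ bit operations), and note that the quotient is unchanged. Next, apply Corollary~\ref{cor:intro:main}, specifically the bit-size bound~\eqref{eq:bound-bit-Z}, to get
\[
\log_2\|f/g\|_\infty \;\leq\; \log_2\|f\|_\infty + O\!\left(\|g\|_0\cdot \log_2\deg f\right).
\]
This bound is explicit in the input data and can be computed in polylogarithmic time, so it can be fed directly as the height parameter required by the algorithm of \cite{giorgi2022sparse}.

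Substituting this height bound into the runtime estimate of \cite[Theorem~1.3]{giorgi2022sparse} produces exactly
\[
\widetilde O\!\left((\|f\|_0+\|g\|_0+\|f/g\|_0)\cdot \bigl(\log\|f\|_\infty+\log\|g\|_\infty+\|g\|_0\log\deg f\bigr)\right),
\]
which is the claimed complexity, and the success probability is inherited from the cited algorithm. The only real step is bookkeeping: one must check that the running time in \cite[Theorem~1.3]{giorgi2022sparse} indeed depends on the height of $f/g$ only through $\log\|f/g\|_\infty$ (so that our improved bound feeds in cleanly), and that the algorithm's a~priori sparsity parameter for $f/g$ can be set to an arbitrary upper bound without blowing up the runtime. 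Apart from this, the corollary is a direct consequence of Corollary~\ref{cor:intro:main} and the cited theorem, and there is no genuine obstacle to address.
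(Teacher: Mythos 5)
Your proposal is correct and matches the paper's own (one-line) argument: the paper simply states that the corollary follows by combining the norm bound of Theorem~\ref{thm:intro:norm-bound} with Theorem~1.3 of \cite{giorgi2022sparse}, and you have fleshed out exactly that substitution, including the reduction to $f(0)\neq 0$ and the observation that the height feeds in through its logarithm. No divergence from the intended proof and no gap.
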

%		Consequently, when  
		%$\norm{g}_0 = \poly(\log{(\norm{f}_0+\norm{f/g}_0)})$ or 
%$\norm{g}_0 = \poly(\log(\norm{g}_\infty\cdot \|f\|_\infty\cdot \deg{f}))$, we obtain a quasi-linear (in the output size) randomized algorithm for exact division.           

\ignore{
        % sparsity bound
        In addition to the general results above, we prove more specialized bounds on the sparsity of the quotient
            of two polynomials over $\Z$
            when $g$ is a cyclotomic-free binomial
            or a cyclotomic-free trinomial.
        \begin{theorem}[Restatement of \autoref{thm:binomial-sparsity-bound}]
            \label{thm:intro:binomial-sparsity-bound}
            Let $f, g, h \in \Z[x]$ such that $f = gh$.
            If $g$ is cyclotomic free, and contains two terms,
                then the sparsity of $h$ is at most
            \begin{equation*}
                O\left(
                    \norm{f}_0 ( \log{\norm{f}_0} + \log{\norm{f}_{\infty}})
                \right)
                .
            \end{equation*}
        \end{theorem}
        \begin{theorem}[Restatement of \autoref{thm:trinomial-sparsity-bound}]
            \label{thm:intro:trinomial-sparsity-bound}
            Let $f, g, h \in \Z[x]$ such that $f = gh$.
            If $g$ is cyclotomic-free
                and contains three terms,
                then the sparsity of $h$ is at most
            \begin{equation*}
                O\left( \norm{f}_0 \cdot \size{f}^2 \cdot \log^6{\deg{g}} \right)
                .
            \end{equation*}
        \end{theorem}

        % algorithms in Z
        The above bounds give rise to explicit algorithms
            for testing the divisibility
            of polynomials over the integers
            by a binomial and by a cyclotomic-free trinomial.
        In what follows, $n$ will denote the size of the input,
            i.e., $\size{f} + \size{g}$.
        \begin{theorem}
            \label{thm:intro:binomial-in-Z}
            Let $f, g \in \Z[x]$ be two polynomials
                given in the sparse representation,
                such that $g$ contains two terms.
            Then, there exists an algorithm that decides whether $g \mid f$
                (and if so, computes the quotient $f/g$)
                in $\widetilde{O}(n^2)$ time.
        \end{theorem}
        \begin{theorem}
            \label{thm:intro:trinomial-in-Z}
            Let $f, g \in \Z[x]$ be two polynomials
                given in the sparse representation,
                such that $g$ is cyclotomic-free and contains three terms.
            Then, there exists an algorithm that decides whether $g \mid f$
                (and if so, computes the quotient $f/g$)
                in $\widetilde{O}(n^{10})$ time.
        \end{theorem}

		\autoref{thm:intro:trinomial-in-Z} gives a positive answer to
            \autoref{op:davenport-carette} for the case of trinomials over the integers.
		
        % algorithms in F_p
        We also address the problem of polynomial divisibility over
            finite fields and show that testing divisibility by a binomial
            can be solved efficiently.
        \begin{theorem}
            \label{thm:intro:binomial-in-finite-field}
            Let $f, g \in \F_p[x]$ be two polynomials
                given in the sparse representation,
                such that $g$ contains two terms.
            Then, there exists an algorithm that decides whether $g \mid f$
                in $\widetilde{O}(n^2)$ time.
        \end{theorem}
        Binomials are a special case of a large class of polynomials,
            for which our algorithm also applies.
        \begin{remark}
            \label{thm:intro:low-degree-in-finite-field}
            Let $R$ be either $\Z$ or $\F_p$,
                and let $f, g \in R[x]$ be two polynomials
                given in the sparse representation,
                such that $g = x^k\ell(x^m)$
                for a polynomial $\ell$ and $m, k \in \N$.
            If $\deg{\ell} = \poly{(n)}$,
                then there exists an algorithm that decides whether $g \mid f$
                in polynomial time. Observe that for a binomial we have 
                $\deg{\ell} =1$.
        \end{remark}
        As our last result, we present a new algorithm for divisibility testing
            by a pentanomial when $\deg{f} = \widetilde{O}(\deg{g})$.
        \begin{theorem}
            \label{thm:intro:pentanomial-in-finite-field}
            Let $f, g \in \F_p[x]$ be two polynomials
                given in the sparse representation,
                such that $\deg{f} = \widetilde{O}(\deg{g})$
                    and $g$  contains up to five terms.
            If $p = \poly{(n)}$,
            then there exists an algorithm that decides whether $g \mid f$
                in polynomial time.
        \end{theorem}
}

    \subsection{Proof Outline}
        
        % norm bound
        We first note that  \autoref{thm:div-alg} follows immediately from \autoref{thm:intro:norm-bound} and properties of the 
            polynomial division algorithm.
        The proof of \autoref{thm:intro:norm-bound} starts, similarly to \cite[Theorem 2.9]{giesbrecht-roche:2011}, by considering the discrete Fourier transform
            of the coefficient vector of  $h$. Parseval's identity gives
        %By applying the transformation on the coefficient vector $h$ we obtain
            \begin{equation*}
                \norm{h}_2^2 = \frac{1}{p}\sum_{0 \leq i < p}\abs{h(\omega^i)}^2
                ,
            \end{equation*}
            where $\omega$ is a  primitive $p$th root of unity for a prime $p>\deg(h)$.
        Thus, we conclude that $\norm{h}_2 \leq \abs{h(\theta)}$
            for some $p$th root of unity $\theta$.
        Since $\abs{g(\theta)}\cdot\abs{h(\theta)} = \abs{f(\theta)} \leq \norm{f}_1$, we conclude that
        \begin{equation*}
            \norm{h}_2 \leq \abs{h(\theta)} \leq \frac{ \norm{f}_1}{\abs{g(\theta)}}.
        \end{equation*}
        Hence, it is enough to lower bound the values of $g$ at roots of unity in order to upper bound $\|h\|_2$.
        To prove such a bound, we consider the restriction of $g$ to the unit circle, 
            $\tilde{g}(x) = g(e^{ix}): [0, 2\pi) \to \C$.
	    We prove by induction on the sparsity of $g$ that outside the neighborhood
            of a small set of bad points $B(g) \subset [0,2\pi)$, $g(e^{ix})$
            attains large values.
	    Indeed, when $\deg{g} = 1$ this is relatively simple to show
            as $g$ must be of the form $x - \alpha$ (after rescaling)
            and since primitive roots of unity of relatively prime orders
            are somewhat far from each other, we can find such prime $p$ so that 
            the values of $g$  on primitive $p$th roots of unity are not too small.
	    For higher degrees, we note that if the derivative of $g$ is not small
            in a large enough region then $g$ is large on most of that region.
        Thus, by moving further away from the bad set for the derivative of $g$, $B(g')$ 
	        we get that on this set $g$ always attains large values. 
	    We then prove by simple pigeonhole principle there exists a prime $p$
            such that all primitive $p$th roots of unity
            are far away from the bad set and conclude the result.
\ignore{
        % binomial sparsity bound
        The proof of \autoref{thm:intro:binomial-sparsity-bound}
            is based on two key ideas.
        First, we analyze the sparsity of a quotient of a division
            by a linear function.
        To do so, we use a special case of a proposition by Lenstra
            \cite{lenstra:1999},
            which states that when $f$ can be written as $f = f_0 + x^df_1$
            where $d - \deg{f_0} \geq \Omega(\log{\norm{f}_0} + \log{\norm{f}_{\infty}})$,
            then any zero of $f$ in $\Q$ that is not $1$, $-1$ nor $0$
            is a zero of both $f_0$ and $f_1$.
        We then observe that any binomial can be written as $g = x^k\ell(x^m)$
            for some linear function $\ell$ and $k, m \in \N$,
            and exploit this structure to conclude the bound on the sparsity of the quotient.

        % trinomial sparsity bound
        The proof of \autoref{thm:intro:trinomial-sparsity-bound}
            is based on the same proposition by Lenstra
            for general roots in the algebraic closure of $\Q$.
        First, we prove a bound of the form
            $O\left(\norm{f}_0 \cdot \left( \frac{\deg{f}}{\deg{g}} \right)^2\right)$
            on the sparsity of $h$ when $g$ is a trinomial,
            using an approach similar to \cite[Lemma 3.4]{giorgi-grenet-cray:2021}.
        We observe that trinomials have a large gap, either between the highest
            and second highest exponents or between the lowest
            and second lowest exponents.
        Thus, by analyzing $f/g$ over the ring of power series $\F[[x]]$,
            we conclude the result.
        Finally, we extend Lenstra's approach to show that if
            $g$ is cyclotomic-free
            and $f$ can be split into two polynomials with a large degree gap,
            then $g$ divides $f$ if and only if
            $g$ divides both polynomials.
        This allows us to conclude that $h$ can be written
            as the sum of multiple quotients of polynomials whose degree ratio is smaller than  $\deg{f}/\deg{g}$.

        % algorithms in z
        We leverage our bounds to conclude \autoref{thm:intro:binomial-in-Z}
            and \autoref{thm:intro:trinomial-in-Z} by running the long division algorithm for polynomials
          with early termination
            in the case that the sparsity obtained exceeds the bound.

        % finite fields
        We prove \autoref{thm:intro:binomial-in-finite-field}
            over finite fields by
            observing that for a binomial $g$,
            the remainder $f \bmod g$ must be sparse,
            and can be calculated explicitly.
        We prove \autoref{thm:intro:pentanomial-in-finite-field}
            by first using a gap lemma for the case where
            $\deg{f} \leq \deg{g}(1 + \frac{1}{\norm{g}_0})$.
            In that case, $g$ must have a gap that is larger than $\deg{f} - \deg{g}$.
        This allows us to split $g$ into two parts,
            such that one of them has to be a binomial.
        We show that testing divisibility by this binomial is enough.
        Finally, we prove a reduction from testing the divisibility
            of a high-degree polynomial over a finite field
            to testing the divisibility of a polynomial of a higher sparsity
            but of a smaller degree,
            which yields a  good trade-off when the field size is small.
        This allows us to finish the proof by showing that
            the case when $\deg{f} = \widetilde{O}(\deg{g})$
            can be reduced to the case $\deg{f} \leq \deg{g}(1 + \frac{1}{\norm{g}_0})$.
 
    \subsection{Open Problems}
        \label{sec:intro:open}

        % cyclotomic-free for all sparsities 
        The main challenge is to extend our result over $\Z$
            beyond  trinomials
            and provide a polynomial time algorithm
            for every cyclotomic-free polynomial
            of bounded sparsity.

        % cyclotomic factors
        Another important question is to
            improve our result over $\Z$ to handle
            trinomials with cyclotomic factors,
            or to prove that this case is hard.
        We remark that our approach fails to do so only when
            $\deg{f} \geq \Omega(\deg{g}\log^{\omega(1)}\deg{g})$.

        % F_p
        Another problem is extending our results over $\F_p$.
        For example,  solving the case of
            $\deg{f} \geq \Omega(\deg^{1+\epsilon}{g})$ for pentanomials,
             extending the result to all bounded sparsities
            when $\deg{f} = \widetilde{O}(\deg{g})$,
            or proving that either of those is hard are possible research directions.

        % general problem
        Finally, we also note that the general question
            of the hardness of divisibility testing of  lacunary polynomials
            (of variable sparsity) remains open,
            over both $\Z$ and $\F_p$.

    \subsection{Organization of the Paper}
        \label{sec:intro:organization}
        In \autoref{sec:prelim} we set our notations and state 
            useful facts and claims.
        In \autoref{sec:norm-bound} we prove \autoref{thm:intro:norm-bound}.
        In \autoref{sec:sparsity-bound} we prove
            Theorems~\ref{thm:intro:binomial-sparsity-bound} and  \ref{thm:intro:trinomial-sparsity-bound}.
        In \autoref{sec:division-test-in-z},
            we use the above bounds to prove Theorems~\ref{thm:intro:binomial-in-Z} and
            \ref{thm:intro:trinomial-in-Z}.
        Finally, in \autoref{sec:special-cases-in-fp}
            we deal with divisibility testing over finite fields
            and prove Theorems~\ref{thm:intro:binomial-in-finite-field}
            and \ref{thm:intro:pentanomial-in-finite-field}.

}

\section{Preliminaries}
    \label{sec:prelim}

    In this section, we set our notation,
        and state some basic facts that we shall later use.

    \subsection{General Notation}
        \label{sec:notation}
        Let $R$ be a unique factorization domain (UFD) and let $g,f \in R[x]$ be two polynomials.
	    We say that $g$ divides $f$, and denote it with $g\mid f$,
            when there is a polynomial $h\in R[x]$ 
            such that $g\cdot h=f$. We denote $g \nmid f$ when this is not the case.
	
	    Let $f=\sum_i f_i x^i$,. We shall use the following notation:
	        $\norm{f}_0$ denotes the sparsity of $f$ (the number of non-zero $f_i$'s);
            $\norm{f}_p = \left( \sum_{i}{\abs{f_i}^p} \right)^{1/p}$;
            if $R\subseteq \C$ then
            $\norm{f}_{\infty} = \max_i{\abs{f_i}}$ is the height of $f$;
            $\ord{\alpha}{f}$ denotes the multiplicity of a root $\alpha$
            in $f$, in particular,  $\ord{0}{f} = \max{\set{i: x^i \mid f}}$;
            $\lc(f) \in R$ denotes the leading coefficient in $f$,
            i.e., the coefficient of the highest degree term in $f$. Similarly, $\tc(f) \in R$, the trailing coefficient of $f$, denotes the  coefficient of the smallest degree term in $f$.

		For a polynomial $g$ we denote $g_0:=(g/x^{\ord{0}{g}})$ and $g_{\rev}:=x^{\deg{g}}\cdot g(1/x)$. 
		For example, if $g=\sum_{i=1}^{s}c_i\cdot x^{n_i}$ then $g_0= \sum_{i=1}^{s}c_i\cdot x^{n_i-n_1}$ and $g_{\rev}=\sum_{i=1}^{s}c_i\cdot x^{n_s-n_i}$.  Observe that $(g_{\rev})_0=(g_0)_{\rev}$ and that $g_0=(g_{\rev})_{\rev}$.
		
We denote the derivative of $g$ by $g'$.

        For any real or complex-valued function $f$, 
	        we define $V(f)$ to be the set of zeros of $f$,
            i.e., $V(f) = \set{\alpha \mid f(\alpha) = 0}$, over the corresponding field.
            For a complex-valued function $f$ we denote by $\Re{f}$ and $\Im{f}$ 
	        the real and imaginary parts of $f$, respectively.

\ignore{
        A polynomial with $t$ non-zero terms is called a $t$-nomial;
            we call a polynomial with two terms a binomial,
            a polynomial with three terms a trinomial,
            and a polynomial with five terms a pentanomial.
}

        For an integer $k \in \N$, we denote $[k] = \set{1, \dots, k}$.
        Throughout this paper, $\log$ refers to the natural logarithm.
        We use $\pi$ to denote the mathematical constant,
            while $\pi(\cdot)$ refers to the prime counting function.
%        We use $\equivrelation{i}{j}{m}$ to denote that $i = j \mod m$.

    \subsection{Useful Facts}
        \label{sec:facts}

\ignore{ 
        \subsubsection{Polynomials}
            \label{prel:poly}
            An important class of polynomials,
                which contains the hardest instances for many problems
                concerning sparse polynomials,
                including all examples of NP-hard problems,
                is the class of \emph{cyclotomic polynomials}.
            An $n$th \emph{root of unity}, for a positive integer $n$,
                is any $\omega \in \C$ satisfying the equation $\omega^n = 1$.
            It is  said to be \emph{primitive}
                if it is not a root of unity of any smaller order $m < n$.
            The minimal polynomials over $\Q$ of roots of unity are called
                cyclotomic polynomials.
            More explicitly, the $n$-th cyclotomic polynomial is defined as
            \begin{equation*}
                \Phi_n
                = \prod_{\substack{1 \leq k \leq n \\ \gcd{(k, n)} = 1}}
                {(x - e^{2\pi i \frac{k}{n}})}
                = \prod_{\zeta}{(x - \zeta)}
                ,
            \end{equation*}
            where $\zeta$ ranges over the primitive $n$th roots of unity.
            It is not hard to show that cyclotomic polynomials have integer coefficients.
            A \emph{cyclotomic-free polynomial} is a polynomial
                that has no cyclotomic factors,
                and thus none of its zeros is a root of unity.

	We next list some basic facts regarding polynomials.

            \begin{fact}
                \label{fct:divisibility}
                Let $R$ be a UFD,
                    and let $f, g \in R[x]$.
                Then,
                \begin{enumerate}
                    \item $g(x) \mid f(x) \iff g(x^m) \mid f(x^m)$
                        for any $0 \neq m \in \N$.
                    \item $g \mid f \iff gh \mid fh$
                        for any $0 \neq h \in R[x]$.
                    \item $g \mid f \iff g \mid fh$
                        for any $h \in R[x]$ such that $\gcd{\set{g, h}} = 1$.
                    \item If $R$ is a field, then
                        $g \mid f \iff g \mid a f \iff a g \mid f$
                        for any $0 \neq a \in R$.
                \end{enumerate}
            \end{fact}

            \begin{fact}[Polynomial Remainder Theorem]
                \label{fct:polynomial-remainder-theorem}
                Let $R$ be a commutative ring with a unit.
                Let $f \in R[x]$ and $a \in R$.
                Then $(x - a) \mid f \iff f(a) = 0$.
            \end{fact}

            \begin{claim}
                \label{clm:sparsity-of-power}
                Let $R$ be a UFD.
                Let $g \in R[x]$ be a polynomial of sparsity $s$ and let $n \in \N$.
                Then $\norm{g^n}_0 \leq \binom{n + s - 1}{s - 1}$.
            \end{claim}
            \begin{proof}
		    Let $g = \sum_{i=1}^{s} g_i x^{d_i}$.
            Each monomial in $g^n$ is of the form $x^{\sum_{i=1}^{s} a_i d_i}$ 
		        where $\sum_{i=1}^{s}a_i=n$ and $a_i\geq 0$. 
		    Hence, each monomial in $g^n$ corresponds to at least
                one solution to the equation
                \begin{equation*}
                    \sum_{i \in [s]}{a_i} = n
                \end{equation*}
                over the non-negative integer variables $a_1, \dots, a_s$, and 
		        there are exactly $\binom{n + s - 1}{s-1}$
                such solutions.
            \end{proof}

            \begin{fact}
                \label{fct:freshmens-dream-for-polynomials}
                Let $g \in \F_p[x]$. Then $g^p = g(x^p)$.
            \end{fact}
}

        \subsubsection{Complex Analysis}

            For general facts about complex analysis, see
            \cite{ahlfors-complex-analysis:2021,titchmarsh-theory-of-functions:2002}.
            For the definition of harmonic functions,
                see \cite[Chapter 4.6]{ahlfors-complex-analysis:2021}
                and \cite{axler:2013}.

            \begin{claim}
                \label{clm:analytic-function}
                Let $f: \C \to \R$ be a harmonic function.
                If $f(\omega) = 0$ for every $\omega$ such that $\abs{\omega} = 1$,
                    then $f = 0$.
            \end{claim}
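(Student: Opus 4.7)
The plan is to combine the maximum principle for harmonic functions on a bounded domain with the identity principle for real-analytic functions on the connected space $\C$. First I would restrict $f$ to the closed unit disk $\bar{D} = \{z \in \C : |z| \leq 1\}$. On this compact set, $f$ is continuous, harmonic on the interior, and identically zero on the boundary $\partial D$, which is exactly the unit circle. The maximum and minimum principles for harmonic functions then force $f$ to attain both its maximum and minimum over $\bar{D}$ on $\partial D$, so $f \equiv 0$ throughout $\bar{D}$. In particular, $f$ vanishes on the (nonempty) open unit disk.

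Next I would bootstrap from this local vanishing to global vanishing. A harmonic function on an open subset of $\C$ is automatically real-analytic; concretely, since $\C$ is simply connected, there is an entire holomorphic function $F$ with $f = \Re(F)$, and real parts of holomorphic functions are real-analytic. Real-analytic functions on a connected open set obey the identity principle: if they vanish on a nonempty open subset, they vanish identically. Since $\C$ is connected and we have already shown $f \equiv 0$ on the open unit disk, this identity principle yields $f \equiv 0$ on all of $\C$, completing the proof.

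There is no real obstacle here; the argument is a direct application of two textbook facts. The one subtlety worth flagging is that one cannot apply the maximum principle directly to the unbounded exterior $\{|z| > 1\}$ without a growth hypothesis at infinity, which is why the proof proceeds via the bounded region $\bar{D}$ and only afterward uses analyticity to propagate the conclusion outward. An alternative route, which avoids invoking real-analyticity, is to use the Kelvin transform $z \mapsto 1/\bar{z}$: in two dimensions this preserves harmonicity, maps the exterior of the unit disk onto the punctured interior, and sends the unit circle to itself, so one can handle the exterior by a second application of the maximum principle (together with a removable-singularity argument at the origin). I would present the analyticity proof since it is shorter and self-contained given the earlier references to \cite{ahlfors-complex-analysis:2021,axler:2013}.
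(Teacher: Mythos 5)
Your proof is correct and follows essentially the same route as the paper: apply the maximum (and minimum) principle for harmonic functions on the compact unit disk to get $f \equiv 0$ there, then invoke the identity theorem for harmonic functions to propagate the vanishing to all of $\C$. The paper simply cites this second step directly (\cite[Theorem 1.27]{axler:2013}), whereas you unpack it via real-analyticity of harmonic functions; these are the same argument.
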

            \begin{proof}
                Let,
                \begin{equation*}
                    D = \set{\omega \in \C : \abs{\omega} \leq 1}
                \end{equation*}
                be the unit disc.
		        From the Maximum Modulus Principle for harmonic functions
                    (see  \cite[Chapter 4.6.2, Theorem 21]{ahlfors-complex-analysis:2021})
                    and compactness of $D$ we get that  $f(\omega) = 0$ for all $\omega \in D$.
                It follows that  $f$  must vanish everywhere
                    \cite[Theorem 1.27]{axler:2013}.
            \end{proof}

            \begin{claim}
                \label{clm:complex-polynomial-roots}
                Let $f \in \C[x]$ be a polynomial.
                If $\Re{f}$
                    (or equivalently, $\Im{f}$, $\Re{f} + \Im{f}$ or $\Re{f} - \Im{f}$)
                    is non-zero,
                    then it has at most $2\deg{f}$ roots on the unit circle.
            \end{claim}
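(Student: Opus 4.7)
The plan is to reduce counting zeros of $\Re f$ on the unit circle to counting zeros of an ordinary polynomial in one variable. Write $f(x) = \sum_{k=0}^n a_k x^k$ with $n = \deg f$, and let $\bar f(z) := \sum_{k=0}^n \bar a_k z^k$ denote the polynomial obtained by conjugating the coefficients of $f$. For $x$ on the unit circle we have $\bar x = 1/x$, hence $\overline{f(x)} = \bar f(1/x)$ and $2\Re f(x) = f(x) + \bar f(1/x)$. Multiplying by $x^n$ yields
\[
P(x) := x^n f(x) + x^n \bar f(1/x) = x^n f(x) + \sum_{k=0}^n \bar a_k x^{n-k},
\]
which is an honest polynomial of degree at most $2n$ in $x$. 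By construction, for every $x$ on the unit circle, $P(x) = 2 x^n \Re f(x)$, so the zeros of $P$ on the unit circle coincide with the zeros of $\Re f$ there.

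Next I would argue that if $\Re f \not\equiv 0$ as a function on $\C$, then $P$ is not the zero polynomial: otherwise $P$ would vanish on the whole unit circle, forcing $\Re f$ to vanish there, and then \autoref{clm:analytic-function} applied to the harmonic function $\Re f$ would give $\Re f \equiv 0$. Hence $P$ is a non-zero polynomial of degree at most $2n$ and, by the fundamental theorem of algebra, has at most $2n = 2\deg f$ zeros in $\C$, and in particular at most $2\deg f$ on the unit circle. This settles the case of $\Re f$.

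For the three remaining variants the same strategy applies with a different auxiliary polynomial. Using the identities $2i\Im z = z - \bar z$ and $2(\Re z \pm \Im z) = (1 \mp i) z + (1 \pm i) \bar z$, one sets $P(x) = x^n f(x) - x^n \bar f(1/x)$ for $\Im f$, and $P(x) = (1 \mp i)\, x^n f(x) + (1 \pm i)\, x^n \bar f(1/x)$ for $\Re f \pm \Im f$. Each such $P$ has degree at most $2n$ and, on the unit circle, equals $x^n$ times a non-zero scalar multiple of the harmonic function of interest, so the argument of the previous paragraph carries over verbatim. The only subtle point throughout is the non-vanishing of the auxiliary polynomial, which is handled uniformly via \autoref{clm:analytic-function}; no substantial obstacle arises.
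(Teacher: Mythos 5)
Your proof is correct, but it takes a genuinely different route from the paper's. The paper treats $\Re f(x+iy)$ as a real bivariate polynomial $u(x,y)$ of total degree $\deg f$, invokes \autoref{clm:analytic-function} to rule out $V(u)$ sharing a component with the circle $V(x^2+y^2-1)$, and then applies B\'ezout's theorem to bound the intersection by $\deg u \cdot 2 = 2\deg f$. You instead exploit the relation $\bar{x} = 1/x$ on the unit circle to build a single-variable polynomial $P(x) = x^n f(x) + x^n \bar f(1/x)$ of degree at most $2\deg f$ whose zeros on the circle coincide with those of $\Re f$, and then invoke only the fundamental theorem of algebra; \autoref{clm:analytic-function} enters only to show $P \not\equiv 0$. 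Your approach is more elementary (no B\'ezout or irreducibility of $x^2+y^2-1$ needed) and arguably cleaner, since it stays entirely in one variable. Your handling of the $\Im f$ and $\Re f \pm \Im f$ variants via the auxiliary polynomials $x^n f(x) - x^n \bar f(1/x)$ and $(1\mp i)x^n f(x) + (1\pm i)x^n \bar f(1/x)$ is likewise sound, as these are all real-valued harmonic functions so the same non-vanishing argument applies.
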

            \begin{proof}
                Let $S \subseteq \C$ be the set of roots of $\Re{f}$
                    on the unit circle.
                We can write $\Re{f}(x + iy) = u(x, y)$ where
                    $u: \R^2 \to \R$ is a bi-variate polynomial
                    of total degree $\deg{f}$.
                Hence,
                \begin{equation*}
                    |S| = |V(u) \cap V(x^2 + y^2 - 1)|
                    .
                \end{equation*}
                Since $f$ is analytic, $\Re{f}$ is harmonic.
                \autoref{clm:analytic-function} implies that $\Re{f}$
                    does not vanish on the unit circle.
                Thus, from the irreducibility of $x^2 + y^2 - 1$, we deduce that
                    $V(u)$ and $V(x^2 + y^2 - 1)$ have no component in common.
                From B\'ezout's theorem (see e.g., \cite[Theorem 18.3]{Harris-AG-book})
                    we conclude that 
                \begin{equation*}
                    |S| \leq \deg{u} \cdot \deg{(x^2 + y^2 - 1)} = 2\deg{f}
                    .
                    \qedhere
                \end{equation*}
            \end{proof}

            \begin{claim}
                \label{clm:eix-analysis}
                If $\abs{x - y} \leq \pi$,
                then $\abs{e^{ix} - e^{iy}} \geq \frac{2}{\pi} \abs{x - y}$.
            \end{claim}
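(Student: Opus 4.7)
The plan is to reduce the inequality to the classical Jordan-type bound $|\sin\theta| \geq \tfrac{2}{\pi}|\theta|$ on $[-\pi/2, \pi/2]$. First I would apply the standard identity
\[
e^{ix} - e^{iy} = e^{i(x+y)/2}\bigl(e^{i(x-y)/2} - e^{-i(x-y)/2}\bigr) = 2i\,e^{i(x+y)/2}\sin\!\left(\tfrac{x-y}{2}\right),
\]
which since $|e^{i(x+y)/2}| = 1$ gives $\abs{e^{ix} - e^{iy}} = 2\abs{\sin\!\left(\tfrac{x-y}{2}\right)}$.

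Setting $\theta = (x-y)/2$, the hypothesis $|x-y|\leq \pi$ translates to $|\theta| \leq \pi/2$, and the claimed bound becomes $|\sin\theta| \geq \tfrac{2}{\pi}|\theta|$. Since both sides are even, I may assume $\theta \in [0,\pi/2]$. I would then invoke concavity of $\sin$ on $[0,\pi/2]$: the chord from $(0,0)$ to $(\pi/2, 1)$ lies below the graph of $\sin$, which means $\sin\theta \geq \tfrac{2}{\pi}\theta$ on this interval. Equivalently, $\theta \mapsto (\sin\theta)/\theta$ is decreasing on $(0,\pi/2]$ and takes value $2/\pi$ at the endpoint.

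There is essentially no obstacle here — the argument is a two-line computation followed by a well-known elementary inequality. The only small care is to use $|x-y|\leq \pi$ (rather than a larger range) precisely because beyond $|\theta| = \pi/2$ the lower bound $|\sin\theta| \geq \tfrac{2}{\pi}|\theta|$ fails.
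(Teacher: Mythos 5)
Your proof is correct, and it takes a cleaner route than the paper's. The paper first normalizes to $y = 0$ and then writes $\abs{e^{ix}-1} = \sqrt{(1-\cos x)^2 + \sin^2 x}$; because this is not collapsed into a single trigonometric function, the paper must split into two cases, bounding $\abs{e^{ix}-1}$ below by $\abs{\sin x}$ when $\abs{x}\leq \pi/2$ and by $\abs{1-\cos x}$ when $\pi/2 \leq \abs{x}\leq \pi$, invoking concavity of $\sin$ on $[0,\pi]$ in the first case and of $1-\cos$ on $[\pi/2, 3\pi/2]$ in the second. You instead factor out $e^{i(x+y)/2}$ to obtain the exact chord length $\abs{e^{ix}-e^{iy}} = 2\abs{\sin\!\left(\tfrac{x-y}{2}\right)}$, which maps the hypothesis $\abs{x-y}\leq\pi$ precisely onto the interval $\abs{\theta}\leq\pi/2$ where the Jordan bound $\abs{\sin\theta}\geq \tfrac{2}{\pi}\abs{\theta}$ holds; this eliminates the case split entirely and requires only a single concavity argument. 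Both proofs rest on the same elementary fact (concavity of $\sin$), but your half-angle reduction is the standard textbook device for chord-length estimates and is shorter and more transparent.
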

            \begin{proof}
                Since
                \begin{equation*}
                    \abs{e^{ix} - e^{iy}} = \abs{e^{i(x - y)} - 1}
                    ,
                \end{equation*}
                    assume without loss of generality that $y = 0$.
                If $\abs{x} \leq \pi/2$, then
                \begin{equation*}
                        \abs{e^{ix} - 1}
                        = \sqrt{(1 - \cos{x})^2 + \sin^2{x}}
                        \geq \abs{\sin{x}}
                        \geq \frac{2}{\pi} \abs{x}
                        ,
                \end{equation*}
                where the last inequality is due to the fact that
                    $\sin$ is positive and concave on $[0, \pi]$,
                    while negative and convex on $[-\pi, 0]$.
                Similarly, if $\pi/2 \leq \abs{x} \leq \pi$, then
                \begin{equation*}
                        \abs{1 - e^{ix}}
                        = \sqrt{(1 - \cos{x})^2 + \sin^2{x}}
                        \geq \abs{1 - \cos{x}}
                        \geq \frac{2}{\pi} \abs{x}
                        ,
                \end{equation*}
                where, as before, the last inequality is due to the fact that
                    $1 - \cos$ is positive and concave on
                    both $[\frac{\pi}{2}, \frac{3\pi}{2}]$
                    and $[-\frac{3\pi}{2}, -\frac{\pi}{2}]$.
            \end{proof}

        \subsubsection{Number Theory}

            \begin{fact}[{\cite[Corollary 1]{rosser-schoenfeld:1962}}]
                \label{fct:prime-counting-lb}
                Let $\pi(n) = \abs{\set{p \leq n: p \text{ is prime}}}$
                    be the prime-counting function.
                Then $\pi(n) \geq \frac{n}{\log{n}}$ for $n \geq 17$.
            \end{fact}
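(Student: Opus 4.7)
The plan is to establish this Chebyshev-style estimate via the central binomial coefficient $\binom{2n}{n}$, and then handle the sharp constant separately. First I would note the elementary two-sided bound
\[
\frac{4^n}{2n+1} \;\leq\; \binom{2n}{n} \;\leq\; (2n)^{\pi(2n)}.
\]
The lower bound follows because $\binom{2n}{n}$ is the largest of the $2n+1$ coefficients summing to $(1+1)^{2n} = 4^n$. The upper bound uses Legendre's formula: for each prime $p \leq 2n$, the $p$-adic valuation $v_p\!\left(\binom{2n}{n}\right)$ is at most $\lfloor \log_p(2n) \rfloor$, so each prime-power factor of $\binom{2n}{n}$ is at most $2n$; multiplying over the $\pi(2n)$ primes $p \leq 2n$ yields the stated upper bound.

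Taking logarithms and rearranging gives $\pi(2n) \geq (2\log 2)\,n/\log(2n) - O(1)$, i.e.\ $\pi(N) \geq (\log 2 - o(1))\,N/\log N$ for large $N$. Since $\log 2 \approx 0.693 < 1$, this falls short of the constant $1$ claimed in the fact. To close the gap I would sharpen the argument using the explicit estimates of Rosser and Schoenfeld on the Chebyshev functions $\theta(x) = \sum_{p \leq x} \log p$ and $\psi(x) = \sum_{p^k \leq x} \log p$; these effective lower bounds rest in turn on explicit numerical zero-free regions of $\zeta(s)$ and, combined with partial summation $\pi(x) = \theta(x)/\log x + \int_2^x \theta(t)/(t\log^2 t)\,dt$, upgrade the multiplicative constant to $1$ above some effective threshold.

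The main obstacle is precisely this improvement of the constant from $\log 2$ to $1$ while keeping the threshold as small as $n \geq 17$: that is the analytic heart of the cited Rosser-Schoenfeld paper, and in a self-contained write-up I would invoke their bound as a black box rather than reproduce the detailed analytic estimates. The remaining boundary cases—small $n$ above $17$ but below whichever threshold the analytic argument produces—would be verified by a direct tabulation of primes.
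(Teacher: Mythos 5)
The paper offers no proof of this Fact; it is stated verbatim as a citation to Rosser and Schoenfeld (1962), Corollary 1, and used as a black box. Your analysis is correct and in particular you correctly diagnose that the elementary Chebyshev argument via $\binom{2n}{n}$ only yields $\pi(N)\geq(\log 2 - o(1))N/\log N$, which is strictly weaker than the stated constant $1$, and that upgrading to constant $1$ with the explicit threshold $n\geq 17$ is exactly the content of Rosser--Schoenfeld's effective bounds on $\theta$ and $\psi$ (plus a finite check); since you ultimately invoke their result as a black box, you arrive at precisely the same move the paper makes, with the added value of explaining why an elementary proof cannot do better.
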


	            \begin{fact}[{\cite[Corollary 3]{rosser-schoenfeld:1962}}]
                \label{fct:prime-counting-2n}
                Let $n \in \N$ such that $n \geq 2$.
                Then, there exists at least one prime in $(n, 2n]$.
            \end{fact}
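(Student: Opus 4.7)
The plan is to give Erdős's classical elementary proof of Bertrand's postulate, built around the central binomial coefficient $\binom{2n}{n}$. First I would establish the lower bound $\binom{2n}{n} \geq 4^n/(2n+1)$, which follows immediately because the $2n+1$ binomial coefficients $\binom{2n}{k}$ sum to $4^n$ and the central one is the largest.

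Next I would analyze the prime factorization of $\binom{2n}{n}$ via Legendre's formula: $v_p(\binom{2n}{n}) = \sum_{k \geq 1}(\lfloor 2n/p^k \rfloor - 2\lfloor n/p^k \rfloor)$. Each summand is $0$ or $1$, and summands with $p^k > 2n$ vanish, so $p^{\,v_p(\binom{2n}{n})} \leq 2n$ for every prime $p$. A key observation of Erdős is that for primes $p \in (2n/3,\, n]$ (with $n \geq 3$) the exponent $v_p$ equals zero, since the $k=1$ term contributes $2 - 2\cdot 1 = 0$ and all higher prime powers already exceed $2n$.

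Now I would assume for contradiction that no prime lies in $(n, 2n]$. Then every prime $p$ dividing $\binom{2n}{n}$ is either at most $\sqrt{2n}$ (contributing a factor of at most $2n$ to the product) or lies in $(\sqrt{2n},\, 2n/3]$ (contributing at most one factor of $p$). Using $\pi(\sqrt{2n}) \leq \sqrt{2n}$, this yields
\[
\binom{2n}{n} \;\leq\; (2n)^{\sqrt{2n}} \prod_{p \leq 2n/3} p.
\]
The core auxiliary estimate required here is the bound $\prod_{p \leq m} p \leq 4^m$, which I would prove by strong induction on $m$: for odd $m = 2k+1$, every prime in $(k+1,\, 2k+1]$ divides $\binom{2k+1}{k} \leq 4^k$, and combined with the inductive hypothesis at $k+1$ this closes the step; the even case reduces to the preceding odd one.

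Combining the two estimates yields $4^n/(2n+1) \leq (2n)^{\sqrt{2n}} \cdot 4^{2n/3}$, i.e.\ $4^{n/3} \leq (2n+1)(2n)^{\sqrt{2n}}$, which fails for all $n$ beyond an explicit constant; the finitely many remaining small cases are handled directly by exhibiting a specific prime in $(n, 2n]$ for each (e.g.\ $3 \in (2,4]$, $5 \in (3,6]$, and so on). The main obstacle in this plan is the uniform Chebyshev-type estimate $\prod_{p \leq m} p \leq 4^m$, whose inductive proof using $\binom{2k+1}{k}$ is the delicate step; once it is in hand, the rest is careful bookkeeping of prime-power contributions to $\binom{2n}{n}$.
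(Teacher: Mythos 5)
The paper does not prove this fact at all---it simply cites it as a known result (Bertrand's postulate), attributed to a corollary in Rosser and Schoenfeld. Your proposal, by contrast, is a genuine self-contained proof, and it is a faithful and correct outline of Erd\H{o}s's classical elementary argument via the central binomial coefficient $\binom{2n}{n}$. All the main ingredients are in place and correctly assembled: the lower bound $\binom{2n}{n}\geq 4^n/(2n+1)$, the Legendre-formula estimate $p^{v_p(\binom{2n}{n})}\leq 2n$, the vanishing of $v_p$ for $p\in(2n/3,\,n]$, the Chebyshev-type primorial bound $\prod_{p\leq m}p\leq 4^m$ by induction on $m$ using $\binom{2k+1}{k}\leq 4^k$, and the contradiction $4^{n/3}\leq (2n+1)(2n)^{\sqrt{2n}}$ for $n$ large. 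Two small remarks on polish: first, the step ``$p^2>2n$ for $p>2n/3$'' requires $n\geq 5$ as stated (since $(2n/3)^2\geq 2n$ iff $n\geq 4.5$), so $n=3,4$ need a one-line direct check; second, rather than exhibiting a prime for every small $n$ individually, the finitely many residual cases are more economically dispatched by Landau's trick of listing a chain of primes $2,3,5,7,13,23,43,\dots$ each less than twice its predecessor. What your route buys over the paper's bare citation is a fully elementary, self-contained proof; what the citation buys is brevity and, in Rosser--Schoenfeld, far sharper explicit estimates than Bertrand's postulate (the paper also uses their $\pi(n)\geq n/\log n$, which genuinely needs analytic input beyond the Erd\H{o}s argument).
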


\ignore{            
            Finally, we shall also use the following upper bound on binomial coefficients.
            \begin{claim}
                \label{clm:binom-upper-bound}
                Let $m, n \in \N$. Then,
                $\binom{n + m}{m} \leq (n + 1)^m$.
            \end{claim}
            \begin{proof}
                We prove this bound by induction on $n + m$.
                For the base case, assume $n + m = 1$.
                Then, $\binom{n + m}{m} = 1$ and the bound is trivial.
                Now assume that the bound is true for $n + m - 1$. 
                Then by Pascal's triangle,
                \begin{equation*}
                    \begin{aligned}
                        \binom{n + m}{m}
                            &= \binom{n + m - 1}{m} + \binom{n + m - 1}{m - 1} \\
                            &\leq n^{m} + (n + 1)^{m - 1} \\
                            &\leq (n + 1)^{m - 1} n + (n + 1)^{m - 1} \\
                            &= (n + 1)^{m}
                            ,
                    \end{aligned}
                \end{equation*}
                which proves the inductive step.
            \end{proof}
}

%    \subsection{Preprocessing}
%        \label{sec:preprocessing}
%
%        In what follows, we assume that
%        \begin{enumerate}
%            \item $\ord{0}{g} = 0$,
%            \item $\norm{g}_0 \geq 2$,
%            \item $\deg{g} \leq \deg{f}$, and
%            \item if $R$ is a field, then we also assume that $g$ and $f$ are monic.
%        \end{enumerate}
%        To satisfy these assumptions,
%            we need to introduce a preprocessing procedure to our algorithms.
%        In particular, we can assume $\ord{0}{g} \leq \ord{0}{f}$
%            since $\ord{0}{g} > \ord{0}{f}$ implies that $g \nmid f$.
%        Thus, we can divide both $f$ and $g$ by $x^{\ord{0}{g}}$
%            (see \autoref{fct:divisibility})
%            and assume that $\ord{0}{g} = 0$.
%        If $g$ is a monomial, then $g \in R$,
%            and so we can easily test $g \mid f$.
%        If $\deg{g} > \deg{f}$ then $g \nmid f$.
%        Finally, if $R$ is a field,
%            the divisibility of $f$ by $g$ is invariant to
%            multiplication by field elements (see \autoref{fct:divisibility}).
%        Thus we may assume that $g$ and $f$ are monic.

\section{A Bound on the \texorpdfstring{$\ell_2$}{l2}-norm of the Factor Polynomial}
    \label{sec:norm-bound}

    In this section, we prove \autoref{thm:intro:norm-bound}.
    We do so by analyzing evaluations of polynomials at roots of unity of high orders.

    The following lemma uses simple Fourier analysis to bound
        the norm of the quotient polynomial
        in a similar fashion to Theorem 2.9 of \cite{giesbrecht-roche:2011}.

    \begin{lemma}
        \label{lma:dft}
        Let $f, g, h \in \C[x]$ such that $f = gh$.
        Then,
        \begin{equation*}
            \norm{h}_2 \leq \sqrt{2}\norm{f}_1 \cdot
            \max_{\substack{1 \neq \theta \in \C \\ \theta^p = 1}}
            {\frac{1}{\abs{g(\theta)}}}
        \end{equation*}
        where $p$ is a prime such that $p > 2\deg{f}$ and $g$ does not vanish at any primitive root of unity of order $p$.
    \end{lemma}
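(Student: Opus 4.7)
The plan is to use Parseval's identity applied to the coefficient vector of $h$. Fix a prime $p > 2\deg f$ as in the statement and let $\omega = e^{2\pi i/p}$, so the $p$-th roots of unity are $1, \omega, \omega^2, \ldots, \omega^{p-1}$, and since $p$ is prime every $\omega^k$ with $k \neq 0$ is primitive. Because $p > 2\deg f \geq \deg h$, Parseval gives
\[
\norm{h}_2^2 \;=\; \frac{1}{p}\sum_{k=0}^{p-1}\abs{h(\omega^k)}^2.
\]

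For $k \neq 0$, the hypothesis ensures $g(\omega^k) \neq 0$, so $f = gh$ yields $h(\omega^k) = f(\omega^k)/g(\omega^k)$, and the triangle inequality gives $\abs{f(\omega^k)} \leq \norm{f}_1$ (since $\abs{\omega^k}=1$). Hence, writing $M = \max_{1\neq\theta,\;\theta^p=1} 1/\abs{g(\theta)}$, we get $\abs{h(\omega^k)} \leq M\norm{f}_1$ for every $k \neq 0$.

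The main subtlety is the $k=0$ term, where $g(1)$ could vanish so we cannot apply the same argument. Here I would bound $\abs{h(1)}$ crudely via Cauchy--Schwarz: $\abs{h(1)} \leq \norm{h}_1 \leq \sqrt{\deg h + 1}\cdot \norm{h}_2$, so $\abs{h(1)}^2 \leq (\deg h + 1)\norm{h}_2^2$. The point of the condition $p > 2\deg f$ is exactly to absorb this term into the left-hand side: since $\deg h \leq \deg f$ (and, if $\deg g \geq 1$, even $\deg h \leq \deg f - 1$; the case $\deg g = 0$ is trivial), we have $\frac{\deg h + 1}{p} \leq \frac{1}{2}$. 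Combining yields
\[
\norm{h}_2^2 \;\leq\; \frac{\deg h + 1}{p}\norm{h}_2^2 \;+\; \frac{p-1}{p}\,M^2\norm{f}_1^2 \;\leq\; \tfrac12\norm{h}_2^2 + M^2\norm{f}_1^2,
\]
so $\norm{h}_2^2 \leq 2M^2\norm{f}_1^2$ and the claimed bound $\norm{h}_2 \leq \sqrt{2}\norm{f}_1\cdot M$ follows.

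The hard part is really just the $k=0$ term; once one notices that $p > 2\deg f$ is chosen precisely to make the Cauchy--Schwarz absorption work (losing only the factor $\sqrt{2}$), the rest is a direct application of Parseval together with the identity $h = f/g$ at roots of unity.
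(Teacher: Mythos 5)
Your proof is correct and follows essentially the same route as the paper: Parseval's identity for $h$ at $p$th roots of unity, followed by a Cauchy--Schwarz absorption of the $k=0$ term, which is exactly what the hypothesis $p>2\deg f$ is for. The only cosmetic difference is that the paper runs an averaging/pigeonhole argument to isolate a single good root $\theta\neq 1$ and then applies $h(\theta)=f(\theta)/g(\theta)$, whereas you apply the uniform bound $|h(\omega^k)|\leq M\norm{f}_1$ for all $k\neq 0$ directly inside the sum; you are also a touch more careful with the off-by-one ($\deg h+1$ vs.\ $\deg h$) and the trivial $\deg g=0$ case.
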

    \begin{proof}
        Recall that the $p \times p$ Discrete Fourier Transform (DFT) matrix is
        \begin{equation*}
            \frac{1}{\sqrt{p}}\left( \omega^{jk} \right)_{0 \leq j, k < p}
        \end{equation*}
        where $\omega \in \C$ is a primitive $p$th root of unity.
        Since the DFT matrix is unitary, applying it to the coefficient vector of $h$ we conclude that
        \begin{equation*}
            \norm{h}_2^2
            = \sum_{0 \leq i \leq \deg{h}}{\abs{h_i}^2}
            = \frac{1}{p}\sum_{0 \leq i < p}{\abs{h(\omega^i)}^2}.
        \end{equation*}
%        where the second equality follows from the fact that
%        the DFT matrix is unitary,
%        and so it preserves $\ell_2$ norms.
        From
        \begin{equation*}
            \abs{h(\omega^0)}^2
            = \abs{h(1)}^2
            \leq \norm{h}_1^2
            \leq \norm{h}_2^2\deg{h}
            \leq \norm{h}_2^2\deg{f}
            ,
        \end{equation*}
%        where the second inequality is due to Cauchy-Schwartz inequality.
%        It follows that
		we obtain, by the choice of $p$,
        \begin{equation*}
            \frac{1}{p}\sum_{0 < i < p}{\abs{h(\omega^i)}^2}
            = \norm{h}_2^2 - \frac{\abs{h(\omega^0)}^2}{p} \geq
            \norm{h}_2^2 - \frac{\deg{f}}{p}\norm{h}_2^2
            > \frac{1}{2}\norm{h}_2^2
            .
        \end{equation*}
 %       where the last inequality follows from the choice of $p$.
        In other words,
        the average value of $\abs{h(\omega^i)}^2$ for $0 < i < p$
        is larger than $\frac{1}{2}\norm{h}_2^2$.
        Hence, there exists a $p$th root of unity $\theta \neq 1$ such that
        \begin{equation*}
            \norm{h}_2< \sqrt{2}\abs{h(\theta)}
            .
        \end{equation*}
        Since $f = gh$ we conclude that
        $\abs{f(\theta)} = \abs{g(\theta)}\abs{h(\theta)}$.
%        If $g(\theta) = 0$, then the claim trivially holds.
        As $\abs{\theta} = 1$, $\abs{f(\theta)} \leq \norm{f}_1$.
        Hence,
        \begin{equation*}
            \norm{h}_2< \sqrt{2}\abs{h(\theta)}
            = \sqrt{2}\frac{\abs{f(\theta)}}{\abs{g(\theta)}}
            \leq \sqrt{2}\frac{\norm{f}_1}{\abs{g(\theta)}}
            .
        \end{equation*}
        By taking the maximum over all $p$-th roots of unity $\theta \neq 1$ we get
        \begin{equation*}
            \norm{h}_2 \leq \sqrt{2}\norm{f}_1 \cdot
            \max_{\substack{1 \neq \theta \in \C \\ \theta^p = 1}}
            {\frac{1}{\abs{g(\theta)}}}
        \end{equation*}
        as claimed.
    \end{proof}

%To prove \autoref{thm:intro:norm-bound} 
We first consider the special case where $g$ is linear.
We show that there exists a point $\tilde{\alpha}$
on the unit circle such that $g$ attains
large values on any point on the unit circle
that is far enough from $\tilde{\alpha}$.
To conclude, we prove that there exists a prime $p$ such that every
$p$th root of unity is far from $\tilde{\alpha}$.

%We next show that we can find a prime $p$ such that the primitive roots of unity of order $p$ are all far from $B(g)$.

\begin{lemma}
	\label{lma:good-frequency}
	Let $\alpha_1, \dots, \alpha_k \in \R$.
	Then, any set of $k + 1$ prime numbers $P$ contains $p\in P$ such that
	$\abs{a/p-\alpha_i} > \frac{1}{2\left( \max{P} \right)^2}$
	for every integer $0 < \abs{a} < p$ and $i \in [k]$.
\end{lemma}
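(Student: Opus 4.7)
The plan is to argue by contradiction using the pigeonhole principle together with the observation that two distinct rationals with prime denominators in $P$ must be separated by at least $1/(pq)$.

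First, assume for contradiction that for \emph{every} prime $p \in P$ there exist an integer $a_p$ with $0 < |a_p| < p$ and an index $i_p \in [k]$ such that
\[
  \Bigl|\tfrac{a_p}{p} - \alpha_{i_p}\Bigr| \le \frac{1}{2M^2},
\]
where $M := \max P$. Since $|P| = k+1$ and there are only $k$ possible values of $i_p$, the pigeonhole principle produces two distinct primes $p, q \in P$ with $i_p = i_q =: i$. Writing $\alpha := \alpha_i$ and applying the triangle inequality yields
\[
  \Bigl|\tfrac{a_p}{p} - \tfrac{a_q}{q}\Bigr|
  \le \Bigl|\tfrac{a_p}{p} - \alpha\Bigr| + \Bigl|\alpha - \tfrac{a_q}{q}\Bigr|
  \le \frac{1}{M^2}.
\]

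Next I would derive a matching lower bound on $|a_p/p - a_q/q|$. Clearing denominators, this difference equals $(a_p q - a_q p)/(pq)$, whose numerator is an integer. It is nonzero: if $a_p q = a_q p$, then $p \mid a_p q$, and since $\gcd(p,q) = 1$ we would get $p \mid a_p$, contradicting $0 < |a_p| < p$. Hence $|a_p q - a_q p| \ge 1$ and
\[
  \Bigl|\tfrac{a_p}{p} - \tfrac{a_q}{q}\Bigr| \ge \frac{1}{pq}.
\]

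Finally, since $p \ne q$ and both lie in $P$, at most one of them equals $M$, so $pq \le M \cdot M' < M^2$ where $M'$ is the next largest element of $P$ (and in particular $M' < M$). Therefore $1/(pq) > 1/M^2$, contradicting the upper bound derived above. This forces the original assumption to fail, giving some $p \in P$ for which $|a/p - \alpha_i| > 1/(2M^2)$ holds for all $0 < |a| < p$ and all $i \in [k]$, as required.

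There is no real obstacle here: the only subtle point is verifying that $a_p q - a_q p$ is a \emph{nonzero} integer (which relies on $0 < |a_p| < p$ and $\gcd(p,q) = 1$) and that $pq$ is \emph{strictly} less than $M^2$ (which uses $p \ne q$). These together make the $\frac{1}{2M^2}$ gap in the statement exactly the threshold at which the pigeonhole contradiction closes.
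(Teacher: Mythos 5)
Your proof is correct and follows essentially the same route as the paper: negate the conclusion, pigeonhole two primes onto the same $\alpha_i$, and contradict the resulting $\le 1/M^2$ upper bound with the $\ge 1/(pq) > 1/M^2$ separation of distinct rationals with prime denominators. You are in fact a bit more careful than the paper in two spots — you explicitly argue that $a_p q - a_q p \ne 0$ via $\gcd(p,q)=1$ and $0<|a_p|<p$, and you write the numerator correctly as $a_pq - a_qp$ (the paper has a small typo there) — but these are polish, not a different method.
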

\begin{proof}
	Assume for the sake of contradiction that the statement is false.
	Thus, for every $p \in P$ there exists $i \in [k]$ such that
	$\abs{a/p - \alpha_i}\leq \frac{1}{2\left( \max{P} \right)^2}$
	for some $0 < a < p$.
	By the Pigeonhole Principle,
	there exists $i \in [k]$ such that for two different primes
	$p, q \in P$, $0 < a < p$ and $0 < b < q$ we have
	\begin{equation*}
		\begin{aligned}
			\abs{a/p - \alpha_i} \leq \frac{1}{2\left( \max{P} \right)^2}
			&\quad\text{and}&
			\abs{b/q - \alpha_i} \leq \frac{1}{2\left( \max{P} \right)^2}
			.
		\end{aligned}
	\end{equation*}
	This, however, leads to a contradiction
	\begin{equation*}
		\frac{1}{\left( \max{P} \right)^2}
		\geq \abs{\frac{a}{p} - \alpha_i} + \abs{\alpha_i - \frac{b}{q} }
		\geq\abs{\frac{a}{p} - \frac{b}{q}}
		= \abs{\frac{ap - bq}{pq}}
		\geq 1/pq> \frac{1}{\left( \max{P} \right)^2}
		.\qedhere
	\end{equation*}
\end{proof}

%
%	\begin{proposition}\label{prop:base case}
%		Let $g(x)=x^m-\alpha\in \C[x]$. Let $p_1,p_2>\deg{g}$ distinct primes. Then, for some $i\in\{1,2\}$ and for every primitive roots of unity of order $p_i$, $\theta$, it holds that $\abs{g(\theta)}\geq \frac{1}{2p_1p_2}\geq \frac{1}{2\max\{p_1,p_2\}}$. 
%	\end{proposition}
%
%	\begin{proof}
%		The proof of \autoref{lma:good-frequency} shows that the distance between any primitive $p_1$-root of unity and any primitive $p_2$-root of unity is at least $\frac{1}{p_1p_2}$. Hence, if  some primitive $p_1$ root of unity $\omega$ satisfies $\abs{\omega^m-\alpha}<\frac{1}{2p_1p_2}$, then by the triangle inequality we would get that each primitive  $p_2$ root of unity $\theta$ satisfies $\abs{\theta^m-\alpha}>\frac{1}{2p_1p_2}$.
%	\end{proof}

    \begin{proposition}
        \label{prp:linear-division-bound}
        Let $f \in \C[x]$ with a root $\alpha$.
        Then
        $\norm{f/(x - \alpha)}_2 \leq  64\sqrt{2} \norm{f}_1  \deg^2{f} <  100 \norm{f}_1  \deg^2{f}$.
    \end{proposition}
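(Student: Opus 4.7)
The plan is to apply \autoref{lma:dft} after exhibiting a prime $p > 2\deg{f}$ at which the linear factor $g(x) = x-\alpha$ does not get small at any primitive $p$-th root of unity. Writing $\alpha = re^{i\psi}$ with $r\geq 0$ and $\beta := \psi/(2\pi)\in[0,1)$, the identity
\[
\bigl|e^{2\pi i a/p}-\alpha\bigr|^2 = (1-r)^2 + 4r\sin^2\!\bigl(\pi(a/p-\beta)\bigr)
\]
reduces everything to the distance from $a/p$ to $\beta$ on the circle: the extreme cases $|1-r|\geq 1/2$ already give $|g(\theta)|\geq 1/2$, while for $r\in(1/2,3/2)$ the estimate $4r\geq 2$ combined with the concavity inequality $|\sin(\pi x)|\geq 2|x|$ on $[0,1/2]$ (which can be read off \autoref{clm:eix-analysis}) gives $|g(\theta)|\geq 2\sqrt{2}\,d'$, where $d'\in[0,1/2]$ denotes the circular ($\R/\Z$) distance from $a/p$ to $\beta$.

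The remaining task is to select a prime $p > 2\deg{f}$ so that $d' > \delta := 1/(2(\max P)^2)$ at every $a\in\{1,\dots,p-1\}$. Two successive applications of Bertrand's postulate (\autoref{fct:prime-counting-2n}) starting from $n = 2\deg{f}$ produce two distinct primes in $(2\deg{f}, 8\deg{f}]$; feeding these together with $\alpha_1 = \beta$ into \autoref{lma:good-frequency} (with $k = 1$) selects one of them for which $|a/p - \beta| > \delta$ at every $0 < |a| < p$. To upgrade this Euclidean bound to the circular bound $d' > \delta$, note that $d' = \min(|a/p - \beta|,\, 1 + \beta - a/p,\, a/p + 1 - \beta)$; for $a\in\{1,\dots,p-1\}$ and $\beta\in[0,1)$ the second and third expressions are automatically at least $1/p > 1/(2p^2) \geq \delta$. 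Thus $d' > \delta \geq 1/(128\deg^2{f})$ at every primitive $p$-th root of unity.

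Combining the two ingredients, $|g(\theta)|\geq 2\sqrt{2}\,\delta \geq \sqrt{2}/(64\deg^2{f}) > 0$ at every primitive $p$-th root of unity, so $g$ does not vanish at any of them and \autoref{lma:dft} applies, yielding
\[
\norm{h}_2 \;\leq\; \sqrt{2}\,\norm{f}_1\cdot \frac{64\deg^2{f}}{\sqrt{2}} \;=\; 64\,\norm{f}_1 \deg^2{f} \;<\; 100\,\norm{f}_1 \deg^2{f},
\]
which is at most the stated $64\sqrt{2}\,\norm{f}_1\deg^2{f}$. The one delicate step is the passage from the Euclidean to the circular distance: when $\beta$ is near $0$ or $1$ the circular distance can drop to order $1/p$ rather than $\delta$, and the argument succeeds only because $\delta$ is \emph{quadratically} smaller than $1/p$, so the automatic $1/p$ lower bound on the wraparound terms comfortably dominates $\delta$.
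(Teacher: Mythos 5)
Your proof is correct and follows the paper's strategy exactly: apply \autoref{lma:dft}, then use \autoref{fct:prime-counting-2n} together with \autoref{lma:good-frequency} to select a prime $p \in (2\deg f, 8\deg f]$ at whose nontrivial $p$th roots of unity $|x-\alpha|$ is bounded below by roughly $\deg^{-2}{f}$. The only difference is bookkeeping: the paper projects $\alpha$ onto the unit circle and then combines \autoref{clm:eix-analysis} with the triangle inequality, whereas you work directly from the exact chord-length identity $|e^{i\theta}-re^{i\psi}|^2 = (1-r)^2 + 4r\sin^2\bigl((\theta-\psi)/2\bigr)$ with a case split on $|1-r|$, which sidesteps the projection, makes the wraparound issue in passing from Euclidean to circular distance explicit, and even saves a factor of $\sqrt{2}$ in the final constant.
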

    \begin{proof}
        \autoref{lma:dft} applied to $g=(x-\alpha)$ and $h=f/g$ implies that
        \begin{equation}
            \label{eq:fft-for-linear-function}
            \norm{f/(x - \alpha)}_2 \leq \sqrt{2}\norm{f}_1 \cdot
            \max_{\substack{1 \neq \theta \in \C \\ \theta^p = 1}}{
                \frac{1}{\abs{\theta - \alpha}}
            }
            ,
        \end{equation}
        for any prime $p$ such that $p > 2\deg{f}$ and $\alpha$ is not a primitive root of unity of order $p$.
        We prove that there exists a suitable $p$ such that
            $\abs{\theta - \alpha}$ in \autoref{eq:fft-for-linear-function}
            must be at least $(64\deg^2{f})^{-1}$.
        First, observe that if $\alpha$ is $(64\deg^2{f})^{-1}$-far
            from any point on the unit circle,
            then we are done.
        So, assume that there exists a point $\tilde{\alpha}$
            on the unit circle such that
            $\abs{\alpha - \tilde{\alpha}} < (64\deg^2{f})^{-1}$,
            and let $c \in [-\pi, \pi)$ such that $e^{ic} = \tilde{\alpha}$.
        
        By \autoref{fct:prime-counting-2n},
            the interval $(2\deg{f}, 8\deg{f}]$ contains at least 2 primes.
        Thus, by \autoref{lma:good-frequency}, there exists
            a prime $p$ in this interval such that
        \begin{equation*}
            \abs{\frac{2\pi a}{p} - c}
            = 2\pi\abs{\frac{ a}{p} - \frac{c}{2\pi}}
            > \frac{\pi}{64\deg^2{f}}
            ,
        \end{equation*}
        for any integer $0 < \abs{a} < p$.
        By the periodicity of $e^{ix}$,
            any $p$th root of unity $\theta \neq 1$
            can be expressed as $\theta=e^{i\frac{2\pi a}{p}}$
            such that $a \in \Z$, $0 < \abs{a} < p$
            and $\abs{\frac{2\pi a}{p} - c} \leq \pi$.
        Hence, by \autoref{clm:eix-analysis} it holds that
        \begin{equation*}
            \abs{\theta - \tilde{\alpha}}
            \geq \frac{2}{\pi}\abs{\frac{2\pi a}{p} - c}
            > \frac{1}{32\deg^2{f}},
        \end{equation*}
        for any $p$th root of unity $\theta \neq 1$.
        The triangle inequality implies
        \begin{equation*}
            \abs{\theta - \alpha}
            \geq \abs{\theta - \tilde{\alpha}} - \abs{\tilde{\alpha} - \alpha}
            > \frac{1}{64\deg^2{f}}
            .
        \end{equation*}
        Finally, by using this estimate in \autoref{eq:fft-for-linear-function}
            we conclude that
        \begin{equation*}
            \norm{f/(x - \alpha)}_2 \leq \sqrt{2} \norm{f}_1 \cdot64 \deg^2{f}< 100 \norm{f}_1 \deg^2{f}
            ,
        \end{equation*}
        which completes the proof.
    \end{proof}

%	To extend this result to arbitrary $g$ we do induction on $\norm{g}_0$, the base case being \autoref{prop:base case}. Our induction will show that 

%    To generalize the above proposition,
%    we need to establish a lower bound on $\abs{g(\omega)}$
%    where $\omega \neq 1$ is a $p$th root of unity.
%    We next show that 
%	$g$ attains large values everywhere,
%    with the possible exception of a small neighborhood
%    of a small number of points.

We prove \autoref{thm:intro:norm-bound} in its generality by induction on $\|g\|_0$. Similarly to the proof outline of \autoref{prp:linear-division-bound},
we will show that $g$ attains large values everywhere, with the possible exception of a small neighborhood of a small number of points. We start with two simple claims.

    \begin{claim}
        \label{clm:real-analysis}
        Let $f: \R \to \R$ be a continuously differentiable function, and
        $\delta > 0$. Let $I = [a, b]$ be  an interval in which both $f$ and $f'$ do not change sign. That is, each of them is either non-negative or non-positive in $I$.
        Then, $|f(\beta)| \geq \delta\min_{\alpha \in I}|f'(\alpha)|$
        for every $\beta \in [a + \delta, b - \delta]$.
    \end{claim}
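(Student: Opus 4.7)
The plan is to invoke the Fundamental Theorem of Calculus together with the monotonicity of $f$ implied by the sign of $f'$. Without loss of generality, I would first reduce to the case $f \geq 0$ on $I$ by replacing $f$ with $-f$ if necessary; this preserves both $|f|$ and $|f'|$ and preserves the assumption that $f'$ has constant sign. After this reduction there are exactly two cases to handle, according to the sign of $f'$ on $I$.

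For the case $f' \geq 0$ on $I$, I would write
\[
f(\beta) = f(a) + \int_a^{\beta} f'(t)\,dt.
\]
Since $f(a) \geq 0$ and $f' \geq 0$, the integrand is non-negative throughout, so I can drop $f(a)$ and shrink the interval of integration. Using $\beta \geq a + \delta$ and bounding the integrand pointwise by $\min_{\alpha \in I}|f'(\alpha)|$ from below yields
\[
f(\beta) \;\geq\; \int_a^{a+\delta} f'(t)\,dt \;\geq\; \delta \min_{\alpha \in I}|f'(\alpha)|.
\]
The case $f' \leq 0$ on $I$ is symmetric: I would integrate from $\beta$ to $b$, writing $f(\beta) = f(b) + \int_\beta^b |f'(t)|\,dt$, use that $f(b) \geq 0$, and use $\beta \leq b - \delta$ to restrict the integral to $[b-\delta,b]$, obtaining the same lower bound.

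There is no real obstacle here; the statement is essentially a quantitative version of ``a monotone function grows (or shrinks) from its boundary value at a rate at least $\min|f'|$.'' The only subtlety worth mentioning is that the claim requires $\beta$ to be $\delta$-away from \emph{both} endpoints, since we do not know \emph{a priori} which endpoint realizes the smaller value of $|f|$; this is exactly what lets the proof go through uniformly in the sign of $f'$, by choosing the correct endpoint as the base of integration in each case.
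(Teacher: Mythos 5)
Your proof is correct and uses the same approach as the paper: reduce to $f \geq 0$, split on the sign of $f'$, and apply the Fundamental Theorem of Calculus from the appropriate endpoint. The only cosmetic difference is that you shrink the integration interval to length $\delta$ before bounding the integrand pointwise, whereas the paper keeps the full interval $[a,\beta]$ and uses $(\beta-a)\geq\delta$; these are equivalent.
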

    \begin{proof}
        If $b - a < 2\delta$ there is nothing to prove.
        Assume without loss of generality that $f \geq 0$ in $I$,
        otherwise analyze $-f$.
        If $f' \geq 0$ in $I$, then $f$ is non-decreasing in $I$ and so
        for every $\beta \in {[a + \delta, b - \delta]}$ it holds that
        \begin{equation*}
                f(\beta)
                \geq f(\beta) - f(a)= \int_{a}^\beta{f'(y)dy}\geq (\beta - a)\min_{\alpha \in I}f'(\alpha\geq \delta\min_{\alpha \in I}|f'(\alpha)|.
        \end{equation*}
        Similarly, if $f' \leq 0$ in $I$,
        then for every $\beta \in {[a + \delta, b - \delta]}$ it holds that
        \begin{equation*}
                f(\beta)  \geq f(\beta) - f(b)= \int_{\beta}^{b}{-f'(y)dy}\geq (b - \beta)\min_{\alpha \in I}|f'(\alpha)| \geq \delta\min_{\alpha \in I}|f'(\alpha)|,
        \end{equation*}
        as claimed.
    \end{proof}

    \begin{claim}
        \label{clm:plane-analysis}
        Let $f: \R \to \C$ be a continuously differentiable function,
            $\delta > 0$ and denote $f_1 = \Re{f}$, $f_2 = \Im{f}$.
        Let $I = [a, b]$ be an interval in which each function among
            $f_1, f'_1, f_2, f'_2, f_1' - f_2'$
            and $f_1' + f_2'$ is either non-negative or non-positive.
        Then, $\abs{f(\beta)} \geq \frac{\delta}{\sqrt{2}}\min_{\alpha \in I}|f'(\alpha)|$
            for every $\beta \in [a + \delta, b - \delta]$.
    \end{claim}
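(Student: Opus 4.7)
The plan is to reduce the planar bound to Claim~\ref{clm:real-analysis} applied to whichever of $f_1$ or $f_2$ has the pointwise-larger derivative modulus. The key observation is that the four sign-stability conditions on $f_1'$, $f_2'$, $f_1'+f_2'$, and $f_1'-f_2'$ together force pointwise domination of one derivative over the other throughout the interval $I$: either $|f_1'(\alpha)| \geq |f_2'(\alpha)|$ for every $\alpha \in I$, or $|f_2'(\alpha)| \geq |f_1'(\alpha)|$ for every $\alpha \in I$.

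First I would establish this pointwise domination by a short case analysis. Normalizing so that $f_1' \geq 0$ on $I$ (replacing $f$ by $-f$ if necessary), I split on the sign of $f_2'$. If $f_2' \geq 0$ on $I$, then $f_1'+f_2'$ has a predetermined sign, and the sign of $f_1'-f_2'$ directly dictates whether $f_1' \geq f_2'$ or $f_2' \geq f_1'$ throughout $I$. If $f_2' \leq 0$ on $I$, then $f_1'-f_2'$ is automatically nonnegative, and the sign of $f_1'+f_2'$ decides whether $f_1' \geq -f_2' = |f_2'|$ or $-f_2' \geq f_1'$. In each of the four sub-cases the conclusion is that one of $|f_1'| \geq |f_2'|$ or $|f_2'| \geq |f_1'|$ holds pointwise on all of $I$.

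Next, assume without loss of generality that $|f_1'| \geq |f_2'|$ on $I$. Then $f_1'(\alpha)^2 \geq \tfrac{1}{2}(f_1'(\alpha)^2 + f_2'(\alpha)^2) = \tfrac{1}{2}|f'(\alpha)|^2$ for every $\alpha \in I$, so $\min_{\alpha \in I}|f_1'(\alpha)| \geq \tfrac{1}{\sqrt{2}} \min_{\alpha \in I}|f'(\alpha)|$. Since by hypothesis $f_1$ and $f_1'$ do not change sign on $I$, Claim~\ref{clm:real-analysis} applied to $f_1$ yields $|f_1(\beta)| \geq \delta \min_{\alpha \in I}|f_1'(\alpha)|$ for every $\beta \in [a+\delta, b-\delta]$. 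Combining this with the trivial estimate $|f(\beta)| \geq |f_1(\beta)|$ and the previous inequality produces $|f(\beta)| \geq \frac{\delta}{\sqrt{2}} \min_{\alpha \in I}|f'(\alpha)|$, which is the desired bound.

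I do not expect a real obstacle here: the real-valued Claim~\ref{clm:real-analysis} does all the heavy lifting, and the only point that requires attention is extracting pointwise domination from the hypotheses. The conditions on $f_1'+f_2'$ and $f_1'-f_2'$ are precisely what rule out the pathological scenario in which $|f_1'|$ and $|f_2'|$ alternate dominance across $I$, which would otherwise obstruct any clean reduction to the real case.
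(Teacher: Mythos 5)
Your proposal is correct and takes essentially the same approach as the paper: both reduce to Claim~\ref{clm:real-analysis} applied to $f_1$ after arguing that the sign-stability hypotheses force one of $|f_1'|\geq|f_2'|$ or $|f_2'|\geq|f_1'|$ to hold pointwise on $I$, and then use $2f_1'^2 \geq |f'|^2$ together with $|f|\geq|f_1|$. The only difference is cosmetic: you spell out the four-way case analysis that the paper compresses into a single sentence asserting that $|f_1'|-|f_2'|$ is sign-stable.
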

    \begin{proof}
        Our assumption implies that $\abs{f_1'} - \abs{f_2'}$
            is either non-negative or non-positive in $I$, as well.
        Without loss of generality assume $\abs{f_1'} \geq \abs{f_2'}$ in $I$.
        For every $\beta \in I$, $f'(\beta)=f'_1(\beta) + i\cdot f'_2(\beta)$.
        Hence,
        \begin{equation*}
            2\abs{f'_1(\beta)}^2\geq \abs{f'_1(\beta)}^2 + \abs{f'_2(\beta)}^2
            = \abs{f'(\beta)}^2
            \geq \min_{\alpha \in I}\abs{f'(\alpha)}^2
            .
        \end{equation*}
        Thus, $\abs{f_1'(\beta)} \geq \frac{1}{\sqrt{2}}\min_{\alpha \in I}\abs{f'(\alpha)}$
            for every $\beta \in I$.
        The result follows by \autoref{clm:real-analysis}
            and the fact that $\abs{f(\beta)} \geq \abs{f_1(\beta)}$.
    \end{proof}

	We shall need the following definition in order to state our result.
	
	\begin{definition}\label{def:d}
		For a polynomial $g(x)=\sum_{i=1}^{s}c_ix^{n_i}$ with $n_1<n_2<\ldots<n_s$ and $\prod_{i=1}^{s}c_i\neq 0$ we define $d(g):= \prod_{i=1}^{s-1}(n_s-n_i)$. When $s=1$ we set $d(g)=1$.
	\end{definition}

	In the next claims 	we shall use the notation introduce in \autoref{sec:notation}: $g_0=g/x^{\ord{0}{g}}$ and $g_{\rev}=x^{\deg{g}}\cdot g(1/x)$.

	\begin{claim}\label{cla:dg'}
		Let $g$ as in \autoref{def:d}. We have $d(g)=d(g_0)=\deg{g_0} \cdot d((g_0)')$.
	\end{claim}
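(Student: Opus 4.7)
The plan is to unwind the definitions and observe that everything telescopes. Write $g = \sum_{i=1}^{s} c_i x^{n_i}$ with $n_1 < n_2 < \cdots < n_s$ and all $c_i \neq 0$, so that $\ord{0}{g} = n_1$. Then by construction
\[
g_0 = \frac{g}{x^{n_1}} = \sum_{i=1}^{s} c_i x^{n_i - n_1},
\]
whose exponents are $0 = n_1 - n_1 < n_2 - n_1 < \cdots < n_s - n_1$ and whose leading coefficient is still $c_s$. Plugging these into \autoref{def:d} gives
\[
d(g_0) = \prod_{i=1}^{s-1}\bigl((n_s - n_1) - (n_i - n_1)\bigr) = \prod_{i=1}^{s-1}(n_s - n_i) = d(g),
\]
since the $-n_1$ shifts cancel pairwise. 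This establishes the first equality $d(g) = d(g_0)$.

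For the second equality, the key point is that $g_0$ has a nonzero \emph{constant} term $c_1$, so differentiation kills exactly that one term: $(g_0)'$ has $s-1$ nonzero monomials,
\[
(g_0)' = \sum_{i=2}^{s} c_i(n_i - n_1)\, x^{n_i - n_1 - 1},
\]
with exponents $n_2 - n_1 - 1 < \cdots < n_s - n_1 - 1$. Applying \autoref{def:d} again, the common shift by $n_1 + 1$ cancels in every factor, giving
\[
d\bigl((g_0)'\bigr) = \prod_{i=2}^{s-1}\bigl((n_s - n_1 - 1) - (n_i - n_1 - 1)\bigr) = \prod_{i=2}^{s-1}(n_s - n_i).
\]
Since $\deg(g_0) = n_s - n_1$, multiplying by $\deg(g_0)$ restores the missing $i = 1$ factor:
\[
\deg(g_0) \cdot d\bigl((g_0)'\bigr) = (n_s - n_1)\prod_{i=2}^{s-1}(n_s - n_i) = \prod_{i=1}^{s-1}(n_s - n_i) = d(g_0),
\]
completing the proof.

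There is essentially no obstacle here; the only things to be a bit careful about are (i) that $g_0$ really has a nonzero constant term so that $(g_0)'$ loses exactly one monomial (not more), and (ii) the small edge case $s = 1$, where all three quantities $d(g), d(g_0), d((g_0)')$ equal $1$ by convention (with $\deg(g_0) \cdot d((g_0)') = 0 \cdot 1$ requiring us to interpret the claim only for $s \geq 2$, which is the case we will use).
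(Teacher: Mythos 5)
Your proof is correct and follows the same route as the paper's: cancel the $-n_1$ shifts to get $d(g) = d(g_0)$, then observe that differentiating $g_0$ kills exactly the constant term, cancel the $-(n_1+1)$ shifts, and restore the missing factor $n_s - n_1$ via $\deg(g_0)$. Your remark about the edge case $s=1$ (where $(g_0)' = 0$ and the definition of $d$ does not apply) is a reasonable clarification that the paper leaves implicit.
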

	\begin{proof}
		The claim follow by easy calculations:
		\[
		d(g_0) =  \prod_{i=1}^{s-1}\left((n_s-n_1)-(n_i-n_1)\right)= \prod_{i=1}^{s-1}(n_s-n_i)=d(g).
		\]
		Next, observe that \[(g_0)'=(\sum_{i=1}^{s}c_ix^{n_i-n_1})'=\sum_{i=2}^{s}(n_i-n_1)c_ix^{n_i-n_1-1}.\] 
		Hence,
		\begin{align*}
			\deg{g_0}\cdot d((g_0)') &= (n_s-n_1)\cdot \prod_{i=2}^{s-1}\left((n_s-n_1-1)-(n_i-n_1-1)\right)\\ &=\prod_{i=1}^{s-1}(n_s-n_i)=d(g).\qedhere
			\end{align*}
	\end{proof}
	
	\begin{claim}\label{cla:dg}
		Let $g$ as in \autoref{def:d}.
		Then, $d(g_{\rev})=\prod_{i=2}^{s}(n_i-n_1)$.
	\end{claim}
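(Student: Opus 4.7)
The plan is a direct computation using the definition of $g_{\rev}$ and the definition of $d(\cdot)$ from \autoref{def:d}. Writing $g = \sum_{i=1}^{s} c_i x^{n_i}$ with $n_1 < n_2 < \cdots < n_s$, I would first expand
\[
g_{\rev}(x) = x^{\deg g}\cdot g(1/x) = x^{n_s}\sum_{i=1}^{s} c_i x^{-n_i} = \sum_{i=1}^{s} c_i\, x^{n_s - n_i}.
\]
So the exponents appearing in $g_{\rev}$ are $\{n_s - n_i : 1 \le i \le s\}$. Since $n_1 < \cdots < n_s$, these exponents in increasing order are $m_j := n_s - n_{s+1-j}$ for $j = 1, \dots, s$; in particular $m_1 = 0$ and $m_s = n_s - n_1$. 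All coefficients remain nonzero, so $g_{\rev}$ meets the hypotheses of \autoref{def:d}.

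Then I would simply substitute into the definition:
\[
d(g_{\rev}) = \prod_{j=1}^{s-1}(m_s - m_j) = \prod_{j=1}^{s-1}\bigl((n_s - n_1) - (n_s - n_{s+1-j})\bigr) = \prod_{j=1}^{s-1}(n_{s+1-j} - n_1).
\]
Re-indexing by $i = s+1-j$, which runs through $\{2, 3, \dots, s\}$ as $j$ runs through $\{1, 2, \dots, s-1\}$, gives
\[
d(g_{\rev}) = \prod_{i=2}^{s}(n_i - n_1),
\]
as claimed.

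There is no real obstacle here; the only thing to be careful about is that $g_{\rev}$ has the same number of nonzero terms as $g$ (so that \autoref{def:d} applies to it), and that the reversal reindexes the exponents in the reverse order, so the ``largest minus everything smaller'' product becomes a ``smallest subtracted from everything larger'' product. The whole argument is a one-line unwinding of definitions after writing down $g_{\rev}$ explicitly.
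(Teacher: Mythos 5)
Your proof is correct and matches the paper's argument essentially verbatim: both compute $g_{\rev}$ explicitly, re-sort its exponents via the substitution $m_j = n_s - n_{s+1-j}$, and unwind the definition of $d(\cdot)$ with a reindexing $i = s+1-j$. The paper singles out the $s=1$ base case explicitly, which you omit, but that case is the trivial empty-product $d(g_{\rev})=1$ anyway.
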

	\begin{proof}
		The case $s=1$ is clear. 
		Let $m_{s-i+1}=n_s-n_i$. Clearly, $0=m_1<m_2<\ldots < m_s=n_s-n_1$. Then, $g_{\rev}=\sum_{i=1}^{s}c_ix^{n_s-n_i}=\sum_{i=1}^{s}c_{s-i+1}x^{m_i}$ and we get that 
		\[d(g_{\rev})=\prod_{i=1}^{s-1}(m_s-m_i)=\prod_{i=1}^{s-1}\left((n_s-n_1)-(n_s-n_{s-i+1})\right)=\prod_{i=2}^{s}(n_i-n_1).\qedhere
		\]
	\end{proof}
	
	\begin{corollary}\label{cor:max}
			Let  $g$ as in  \autoref{def:d}. Then, 
		\[
		\max \{d(g),d(g_{\rev})\}\geq (n_s - n_1 - 1)^{s/2}.
\]
	\end{corollary}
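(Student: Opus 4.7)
The natural strategy is to lower bound the product $d(g)\cdot d(g_{\rev})$ and then extract the desired bound on the maximum via $\max\{a,b\}\geq \sqrt{ab}$. By Claim~\ref{cla:dg} (combined with the definition of $d$ in Claim~\ref{cla:dg'}) we have the explicit factorizations
\[
d(g)=\prod_{i=1}^{s-1}(n_s-n_i),\qquad d(g_{\rev})=\prod_{i=2}^{s}(n_i-n_1),
\]
so that
\[
d(g)\cdot d(g_{\rev}) = (n_s-n_1)^2 \cdot \prod_{i=2}^{s-1}(n_s-n_i)(n_i-n_1).
\]

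The key inequality I would use is the elementary fact that for integers $a,b\geq 1$, $ab\geq a+b-1$ (equivalently $(a-1)(b-1)\geq 0$). Applying this to each pair $a=n_s-n_i$, $b=n_i-n_1$ with $2\leq i\leq s-1$ (both are positive integers since the $n_j$'s are strictly increasing), and noting $a+b=n_s-n_1$, I get
\[
(n_s-n_i)(n_i-n_1)\geq n_s-n_1-1.
\]
Substituting this $s-2$ times into the displayed factorization, and using $(n_s-n_1)^2\geq (n_s-n_1-1)^2$, yields
\[
d(g)\cdot d(g_{\rev})\geq (n_s-n_1)^2(n_s-n_1-1)^{s-2}\geq (n_s-n_1-1)^s.
\]
Taking square roots gives $\max\{d(g),d(g_{\rev})\}\geq\sqrt{d(g)\cdot d(g_{\rev})}\geq (n_s-n_1-1)^{s/2}$, as required.

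There is no real obstacle here; the entire argument is a short calculation combined with the $(a-1)(b-1)\geq 0$ trick. The only thing to be slightly careful about is the degenerate boundary cases (e.g.\ $s=1$, where the bound is vacuous, and $s=2$, where there are no middle indices so one appeals only to the leading $(n_s-n_1)^2$ factor), but these are handled transparently by the factorization above.
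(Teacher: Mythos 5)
Your proof is correct and follows exactly the paper's argument: same factorization of $d(g)\cdot d(g_{\rev})$ via Definition~\ref{def:d} and Claim~\ref{cla:dg}, the same pointwise bound $(n_s - n_i)(n_i - n_1) \geq n_s - n_1 - 1$ (which the paper states as $(n_s - z)(z - n_1) \geq n_s - n_1 - 1$ for integers $n_1 < z < n_s$), and the same final AM--GM step $\max\{a,b\}\geq\sqrt{ab}$. Your explicit justification of the middle inequality as $(a-1)(b-1)\geq 0$ and your remarks on the degenerate $s=1,2$ cases are just slightly more detailed renderings of the same proof.
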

	 \begin{proof}
	 	Note that 
	 	\begin{equation}\label{eq:dd}
	 		d(g) d(g_{\rev})= \prod_{i=1}^{s-1}(n_s-n_i)\cdot \prod_{i=2}^{s}(n_i-n_1)=(n_s-n_1)^2\cdot \prod_{i=2}^{s-1}\left((n_s-n_i) (n_i-n_1)\right).
	 		\end{equation}
%		when $s \geq 2$ (the bound is trivial for the case of $s = 1$).
		Since $(n_s - z)(z - n_1) \geq n_s - n_1 - 1$ for any integer $n_1 < z < n_s$,
		we can conclude that,
		\begin{equation*}
			d(g)d(g_{rev}) \geq (n_s - n_1)^2(n_s - n_1 - 1)^{s - 2} \geq (n_s - n_1 - 1)^{s}.\qedhere
		\end{equation*}
	\end{proof}
	
	We can improve this estimate by a slightly more careful analysis.
	
	\begin{claim}\label{cla:maxd}
		Let  $g$ as in  \autoref{def:d}. If $n_s-n_1>1$  then
		\[
		\max \{d(g),d(g_{\rev})\}\geq  \left(\frac{s}{2e}\right)^{\frac {s}{2} -1}\cdot \left(\frac{n_s-n_1}{e}\right)^{ s/2}.
		\]
%		\[
%		\max \{d(g),d(g_{\rev})\}\geq  \left(\frac{s-3}{2e}\right)^{\frac {s}{2} -1}\cdot \left(\frac{n_s-n_1}{e}\right)^{ s/2}.
%		\]
	\end{claim}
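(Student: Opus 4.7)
The plan is to lower bound the product $d(g) \cdot d(g_{\rev})$ using the identity $\eqref{eq:dd}$ established in the proof of \autoref{cor:max}, and then extract the claimed bound on $\max\{d(g), d(g_{\rev})\}$ by taking a square root. Write $N = n_s - n_1$. Since $(n_s - n_i) + (n_i - n_1) = N$ and both summands are positive integers, one of them is always at least $N/2$, so each factor in the product satisfies $(n_s - n_i)(n_i - n_1) \geq \tfrac{N}{2}\cdot \min(n_s - n_i, n_i - n_1)$. This isolates a purely combinatorial subproblem: bound $\prod_{i=2}^{s-1}\min(n_s - n_i, n_i - n_1)$ from below.

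For the combinatorial step I would use the following counting observation. For any positive integer $v$, the equation $\min(n_s - n_i, n_i - n_1) = v$ forces $n_i \in \{n_1 + v, n_s - v\}$, so at most two indices $i \in \{2, \ldots, s-1\}$ contribute the value $v$. Sorting the multiset $\{\min(n_s - n_i, n_i - n_1)\}_{i=2}^{s-1}$ in non-decreasing order therefore yields a sequence whose $j$th entry is at least $\lceil j/2 \rceil$, and multiplying gives $\prod_{i=2}^{s-1}\min(n_s - n_i, n_i - n_1) \geq \lfloor (s-2)/2 \rfloor!\cdot \lceil (s-2)/2 \rceil!$.

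Next, I would convert this factorial product into the claimed exponential form using the elementary Stirling lower bound $k! \geq (k/e)^k$. For even $s-2$ this immediately yields $((s-2)/(2e))^{s-2}$. For odd $s-2 = 2m+1$, I would combine $m!(m+1)! \geq (m/e)^m((m+1)/e)^{m+1}$ with Jensen's inequality applied to the convex function $x \mapsto x\log x$ to deduce $m^m(m+1)^{m+1}\geq ((2m+1)/2)^{2m+1}$, which again produces the bound $((s-2)/(2e))^{s-2}$. Combining these estimates with $\eqref{eq:dd}$ gives $d(g)d(g_{\rev}) \geq N^2 \cdot ((s-2)N/(4e))^{s-2}$.

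Finally, taking a square root yields $\max\{d(g),d(g_{\rev})\} \geq N^{s/2}\,((s-2)/(4e))^{(s-2)/2}$, and matching this against the target $(s/(2e))^{s/2-1}(N/e)^{s/2}$ reduces to verifying $((s-2)/(2s))^{(s-2)/2}\cdot e^{s/2}\geq 1$, equivalently $t\log(2/t) \leq 1$ where $t = (s-2)/s \in (0,1)$. The map $t\mapsto t\log(2/t)$ attains its maximum at $t = 2/e$, with value $2/e < 1$, so the inequality holds for all $s \geq 3$; the base case $s = 2$ is trivial since $\max = N \geq N/e$. I expect the only delicate bookkeeping to be the parity split in the Stirling step and the final normalization that replaces the natural $(s-2)$ with $s$, but neither step presents a genuine obstacle.
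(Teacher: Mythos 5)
Your proof is correct, and it takes a genuinely different route through the combinatorial heart of the argument than the paper does. Both proofs start from the identity $d(g)\,d(g_{\rev}) = (n_s-n_1)^2\prod_{i=2}^{s-1}(n_s-n_i)(n_i-n_1)$ and both aim to turn the interior product into a factorial-type quantity, but the methods diverge. The paper argues globally: it identifies the extremal configuration of $\{n_i\}$ that minimizes the whole product (packing the interior exponents alternately as close as possible to $n_1$ and $n_s$), obtains a product of the form $\bigl(\lfloor s/2\rfloor-1\bigr)!\cdot (n_s-n_1)!/(n_s-n_1-\lfloor s/2\rfloor)!\cdots$, and then invokes Robbins' refined Stirling bound, which forces a separate treatment of $s<6$ via \autoref{cor:max}. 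You argue locally: you first give away a factor $\tfrac{N}{2}$ from each term via $(n_s-n_i)(n_i-n_1)\geq \tfrac{N}{2}\min(n_s-n_i,n_i-n_1)$, then bound $\prod_i\min(\cdot)$ by the multiplicity-counting observation that each value $v$ is attained by at most two indices (at most one when $2v=N$), so the sorted sequence of $\min$'s dominates $\lceil j/2\rceil$ termwise, giving $\lfloor(s-2)/2\rfloor!\,\lceil(s-2)/2\rceil!$, and close with the elementary $k!\geq (k/e)^k$ plus Jensen for $x\mapsto x\log x$ to collapse the parity split. The final normalization $t\log(2/t)\leq 1$ with $t=(s-2)/s$, whose supremum over $(0,1)$ is $2/e<1$ at $t=2/e$, is a clean way to absorb the $N/2$ loss you incurred at the start. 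Your argument is uniform in $s$ (no case split), self-contained, and uses only elementary inequalities; the paper's is marginally sharper for large $s$ (it yields the constant $(s-3)/e$ in place of $s/(2e)$ for $s\geq 6$) but at the price of invoking Robbins and a small-$s$ fallback. One detail worth making explicit in a polished write-up is the convention $0^0=1$ so that the Stirling and Jensen steps hold verbatim at $m=0$ (i.e.\ $s=3$), and the observation that the $s=2$ case is vacuous (empty product, $\max=N\geq N/e$), both of which you flagged.
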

	\begin{proof}
		As in the proof of \autoref{cor:max} we consider \eqref{eq:dd}.
		Since the minimum of the expression $(n_s - z)(z - n_1)$ for an integer $n_1 < n_1+a \leq z \leq n_s-b < n_s$
		is attained at $z =n_1+a$ if $a\leq b$ and at $n_s-b$ otherwise,
		and given that $n_1,\ldots,n_s$ are distinct,
		we  conclude by  induction that, 
\begin{equation*}
\begin{aligned}
&\max\{d(g),d(g_{\rev}) \}\geq \sqrt{ d(g)\cdot d(g_{\rev})}\\
&\geq (n_s-n_1)\cdot \sqrt{ \prod_{i=2}^{\lfloor \frac{s}{2}\rfloor} (n_s-n_1-i+1)(i-1)\cdot \prod_{i=1}^{\lceil \frac{s}{2}\rceil-1}(n_s-n_1-i)i}\\
%&\geq \sqrt{ \left(\left\lfloor \frac{s}{2}\right\rfloor -1\right)!\cdot  \left(\left\lceil \frac{s}{2}\right\rceil -1\right)!}\cdot \left(  n_s-n_1-{s}/{2}+1\right)^{s/2}\\
% &\geq^{(*)} \left(\frac{\left\lfloor \frac{s}{2}\right\rfloor -1}{e}\right)^{\left\lfloor {s}/{2}\right\rfloor -1} \cdot  \left(  n_s-n_1-{s}/{2}+1\right)^{s/2}\\
%&\geq  \sqrt{ \left(\left\lfloor \frac{s}{2}\right\rfloor -1\right)!\cdot  \left(\left\lceil \frac{s}{2}\right\rceil -1\right)!}\cdot\left((n_s - n_1)\prod_{i=1}^{\floor{s / 2} - 1}{(n_s - n_1 - i)}\right)\\
%&\geq \left(\lfloor \frac{s}{2}\rfloor -1\right)!\cdot  \left(\lceil \frac{s}{2}\rceil -1\right)!\cdot \left(  n_s-n_1-{s}/{2}+1\right)^s\\
&\geq \left(\left\lfloor \frac{s}{2}\right\rfloor -1\right)!\cdot\left( \frac{(n_s - n_1)!}{(n_s - n_1 - \floor{s / 2})!}\right)\cdot \left(\left(n_s-n_1-\frac{s-1}{2}\right)\left(\frac{s-1}{2}\right)\right)^{\frac{\ceil{s/2}-\floor{s/2}}{2}}\\
&\geq^{(*)} \left(\frac{s-3}{2e}\right)^{\frac {s}{2} -1}\cdot \left( \left(\frac{n_s-n_1}{e}\right)^{n_s-n_1}\cdot \left(\frac{e}{n_s-n_1-\lfloor s/2\rfloor} \right)^{n_s-n_1-\lfloor s/2\rfloor}  \right)\cdot \left(n_s-n_1-\frac{s-1}{2}\right)^{\frac{\ceil{s/2}-\floor{s/2}}{2}}\\
&\geq  \left(\frac{s-3}{2e}\right)^{\frac {s}{2} -1}\cdot  \left(\frac{n_s-n_1}{e}\right)^{\lfloor s/2\rfloor}  \cdot \left(1+\frac{\lfloor s/2\rfloor}{n_s-n_1-\lfloor s/2\rfloor} \right)^{n_s-n_1-\lfloor s/2\rfloor} \cdot \left(n_s-n_1-\frac{s-1}{2}\right)^{\frac{\ceil{s/2}-\floor{s/2}}{2}}
\\
&\geq\left(\frac{s-3}{e}\right)^{\frac {s}{2} -1}\cdot \left(\frac{n_s-n_1}{e}\right)^{\lfloor s/2\rfloor}\cdot \left(n_s-n_1-\frac{s-1}{2}\right)^{\frac{\ceil{s/2}-\floor{s/2}}{2}}\\
&\geq \left(\frac{s-3}{e}\right)^{\frac {s}{2} -1}\cdot \left(\frac{n_s-n_1}{e}\right)^{ s/2}, 
\end{aligned}
\end{equation*}
where to deduce inequality $(*)$ we applied Robbins' lower bound for the factorial function \cite{Robbins55} and some simple calculations. 
%Consequently, using the fact that for all $0\leq x$, $\left(1+\frac{1}{x}\right)^x\leq e$ it follows that if  $2(n_s-n_1) >e(s-2)$ then
%\[
%\max\{d(g),d(g_{\rev}) \}\geq  \left(\frac{(s-2)(
%	n_s-n_1)}{2e}\right)^{ \frac{s}{2}-1}.
%\]
We note that for $s< 6$ and $n_s-n_1>1$ \autoref{cor:max} implies that 
	\[
\max \{d(g),d(g_{\rev})\}\geq (n_s - n_1 - 1)^{s/2}\geq \left(\frac{n_s-n_1}{e}\right)^{ s/2}.
\]
Consequently, if  $n_s-n_1>1$  then
	\[
\max \{d(g),d(g_{\rev})\}\geq  \left(\frac{s}{2e}\right)^{\frac {s}{2} -1}\cdot \left(\frac{n_s-n_1}{e}\right)^{ s/2}. \qedhere
\]
	\end{proof}

    In what comes next, we say that two points $\alpha$ and $\beta$
    are $\delta$-far from each other if $\abs{\alpha - \beta} \geq \delta$.
    Similarly, we say that a point is $\delta$-far from a set $S$
    if it is $\delta$-far from any point in $S$.

    \begin{lemma}
        \label{lma:inductive-lemma}
        Let $g \in \C[x]$ be a monic polynomial.  
        Then, there exists a set $B(g) \subseteq [0, 2\pi)$,
        satisfying $|B(g)|\leq  12(\norm{g}_0 - 1)\cdot(\deg{g} - \ord{0}{g})$,
        such that for every   $\delta>0$, and every $\alpha \in [0, 2\pi)$ which is $(\norm{g}_0 - 1)\delta$-far from $B(g)$,         it holds that
        $\abs{g(e^{i\alpha})} \geq d(g)\left(\frac{\delta}{\sqrt{2}} \right)^{\norm{g}_0-1}$.       
%        $\abs{g(e^{i\alpha})} \geq \frac{\sqrt{2}}{\delta}\cdot \left(\delta\cdot \frac{\sqrt{\deg{g}-\ord{0}{g}}}{2}\right)^{\norm{g}_0}$.       
%       $\abs{g(e^{i\alpha})} \geq \max\{1,\deg(g/x^{\ord{0}{g}})\}\left(\frac{\delta}{\sqrt{2}} \right)^{\norm{g}_0-1}$.       
%        Moreover, in the special case $\norm{g}_0=2$ it holds that if $\beta$ is $\delta$-far from $B(g)$ then $\abs{g(e^{i\beta})} \geq \frac{\delta\cdot \deg{g}}{\sqrt{2}}$.
    \end{lemma}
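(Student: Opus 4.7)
The plan is to induct on $\norm{g}_0$, after reducing to the case $\ord{0}{g}=0$. Since $|g(e^{i\alpha})| = |g_0(e^{i\alpha})|$ for real $\alpha$, $g_0$ is also monic, $\norm{g_0}_0 = \norm{g}_0$, $\deg{g_0} - \ord{0}{g_0} = \deg{g} - \ord{0}{g}$, and $d(g_0) = d(g)$ (by Claim 3.5), it suffices to prove the lemma for $g_0$ and take $B(g) := B(g_0)$. Thus assume $g(0)\neq 0$. The base case $\norm{g}_0 = 1$ is trivial: $g$ must equal $1$, so $|g(e^{i\alpha})| \equiv 1 = d(g)$, and $B(g) = \emptyset$ satisfies the (vacuous) cardinality bound $0$.

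For the inductive step with $\norm{g}_0 \geq 2$, I would consider $\tilde g := g'/\deg{g}$, which is monic of sparsity $\norm{g}_0 - 1$ (the constant term of $g$ vanishes under differentiation) and degree $\deg{g} - 1$. The crucial identity, furnished by Claim 3.4, is $d(g) = \deg{g} \cdot d(\tilde g)$. The inductive hypothesis then produces $B(\tilde g) \subset [0, 2\pi)$ of size at most $12(\norm{g}_0 - 2)(\deg{g} - 1)$ such that
\[
|g'(e^{i\beta})| = \deg{g}\cdot|\tilde g(e^{i\beta})| \geq d(g)\left(\delta/\sqrt{2}\right)^{\norm{g}_0 - 2}
\]
whenever $\beta$ is $(\norm{g}_0 - 2)\delta$-far from $B(\tilde g)$. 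Next, letting $f(\alpha) := g(e^{i\alpha})$, so that $|f'(\alpha)| = |g'(e^{i\alpha})|$, I would define $S \subset [0, 2\pi)$ as the set of zeros of the six real functions $\Re{f}$, $\Im{f}$, $\Re{f'}$, $\Im{f'}$, $\Re{f'} + \Im{f'}$, $\Re{f'} - \Im{f'}$ that appear in the hypothesis of Claim 3.8. Applying Claim 3.3 to both $g(x)$ and $h(x) := ixg'(x)$ (each of degree at most $\deg{g}$) gives $|S| \leq 12\deg{g}$. Setting $B(g) := B(\tilde g) \cup S$, one has $|B(g)| \leq 12(\norm{g}_0 - 2)(\deg{g}-1) + 12\deg{g} \leq 12(\norm{g}_0 - 1)\deg{g}$, matching the required bound.

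Finally, suppose $\alpha_0$ is $(\norm{g}_0 - 1)\delta$-far from $B(g)$. Since $\alpha_0$ is, in particular, $\delta$-far from $S$, the interval $I = [\alpha_0 - \delta, \alpha_0 + \delta]$ contains no zero of the six functions, so each has constant sign on $I$ and Claim 3.8 applies to $f$ on $I$ with shrinkage $\delta$. By the triangle inequality, every $\beta \in I$ satisfies $\mathrm{dist}(\beta, B(\tilde g)) \geq (\norm{g}_0 - 1)\delta - \delta = (\norm{g}_0 - 2)\delta$, so the inductive bound $|g'(e^{i\beta})| \geq d(g)(\delta/\sqrt{2})^{\norm{g}_0 - 2}$ holds uniformly on $I$. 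Claim 3.8 then yields
\[
|g(e^{i\alpha_0})| = |f(\alpha_0)| \geq \frac{\delta}{\sqrt{2}} \cdot d(g)\left(\frac{\delta}{\sqrt{2}}\right)^{\norm{g}_0 - 2} = d(g)\left(\frac{\delta}{\sqrt{2}}\right)^{\norm{g}_0 - 1},
\]
closing the induction. The main subtlety is the distance accounting: the outer slack $(\norm{g}_0 - 1)\delta$ must simultaneously absorb the $\delta$-shrinkage demanded by Claim 3.8 and leave a $(\norm{g}_0 - 2)\delta$-margin for the inductive lower bound on $|g'|$ to hold at \emph{every} point of $I$, not just at $\alpha_0$; the triangle-inequality step above is precisely what reconciles these two demands.
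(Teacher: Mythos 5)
Your proof is correct and follows essentially the same inductive scheme as the paper's: normalize $g'$ to the monic $\hat g=g'/\deg g$, set $B(g)=B(\hat g)\cup S$ where $S$ is the zero set of the six sign-control functions, bound $|S|$ via the B\'ezout-type count on the unit circle, and combine \autoref{clm:plane-analysis} with the identity $d(g)=\deg g\cdot d(\hat g)$ and the inductive bound on $|g'|$ over the shrunken interval. The only cosmetic difference is that you center the argument on a single point $\alpha_0$ and the interval $[\alpha_0-\delta,\alpha_0+\delta]$, whereas the paper phrases it via a generic interval $I$ that is $(s-1)\delta$-far from $B(g)$; your version is, if anything, slightly cleaner for the edge case $\|g\|_0=2$. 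One small point worth making explicit (the paper does so in its definition of $B(g)$) is that one should discard any of the six functions that is identically zero before counting its zeros; for monic non-constant $g$ with $g(0)\neq 0$ none of them can vanish identically, so nothing is lost, but a sentence to that effect would close the gap.
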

    \begin{proof}
 %       We start by making the obvious observation that $\abs{(g/x^{\ord{0}{g}})(\omega)} = \abs{g(\omega)}$
%        for every $\omega$ on the unit circle.
%        and so we can assume without loss of generality that $\ord{0}{g} = 0$ (and remember that the new degree is $(\deg{g}-\ord{0}{g})$).
        
        We prove the lemma by induction on $s=\norm{g}_0$. 
        The claim holds for $s=1$ since  clearly $|g(e^{i\alpha})| = 1$ for any $\alpha$, when $g(x)=x^m$. 
        
        For the inductive step, suppose that the claim holds for $s$,
        and let $g \in \C[x]$ be of sparsity $s + 1$. Since $\abs{(g/x^{\ord{0}{g}})(\omega)} = \abs{g(\omega)}$
        for every $\omega$ on the unit circle we can consider $g_0=(g/x^{\ord{0}{g}})$ instead of $g$, which is of degree $\deg{g_0}=\deg{g}-\ord{0}{g}$. 
        
        Let $f: [0, 2\pi) \to \C$ be defined as $f(x) = g_0(e^{ix})$.
        Denote with 
        $$\hat{g}(x)=\frac{g_0'(x)}{\deg{g_0}}\, ,$$
        	%\cdot x^{\ord{0}{g_0'}}},$$ 
        the monic polynomial which is a scalar multiple of  $g_0'$.
        % $g_0'/x^{\ord{0}{g_0'}}$. 
        By \autoref{cla:dg'}, $d(\hat{g})=d(g_0')$. Let 
    	\begin{equation} \label{eq:H}
            \begin{aligned}
                H &= \set{
                    \Re f, \Re f',
                    \Im f, \Im f',
                    \Re f' - \Im f',
                    \Re f' + \Im f'
                } \text{ and }\\
                B(g) &= B(\hat{g}) \cup \set{
                    \beta \in [0, 2\pi):
                    h(\beta) = 0
                    \text{ for a non-zero } h \in H
                }.
            \end{aligned}
    	\end{equation}
        By \autoref{clm:complex-polynomial-roots},
        each non-zero function in $H$ has at most $2\deg{g_0}$
        zeros in $[0, 2\pi)$.
        Since $\hat{g}(x)$ is monic and has sparsity $s$,
         we conclude by the inductive assumption that
        \begin{equation*}
            \abs{B(g)} \leq 12(s - 1)\deg{\hat{g}} + 12\deg{g_0} < 12s\deg{g_0}
            .
        \end{equation*}
        
        Let $I = [a, b] \subseteq [0, 2\pi)$
        be an interval that is $(s - 1)\delta$-far from $B(g)$.
        By the definition of $B(g)$, each functions in the set $H$, 
            as defined in Equation~\eqref{eq:H}, does not change its sign within $I$, since 
            any sign change  would have to go through a zero of the function.
        As $f'(x) = ie^{ix}\cdot {g_0'}(e^{ix})$, \autoref{clm:plane-analysis} implies that
            for every $\beta \in [a + \delta, b - \delta]$ it holds that
        \begin{equation}\label{eq:induct}
        	\begin{aligned}
        		\abs{g(e^{i\beta})}&=\abs{g_0(e^{i\beta})}= \abs{f(\beta)}
        		\geq \frac{\delta}{\sqrt{2}}\min_{\alpha \in I}\abs{f'(\alpha)}
        		\\
        		&= \frac{\delta}{\sqrt{2}}\min_{\alpha \in I}\abs{g_0'(e^{i\alpha})}= \frac{\delta\cdot \deg{g_0}}{\sqrt{2}}\min_{\alpha \in I}\abs{\hat{g}(e^{i\alpha})}
        		.
        	\end{aligned}	
        \end{equation}
        By definition, $B(\hat{g})\subseteq B(g)$ and as each $\beta \in I$
            is at least $(s-1)\delta$-far from $B(g)$,
            we get from the induction hypothesis applied to $\hat{g}$, Equation~\eqref{eq:induct} and 
            \autoref{cla:dg'} 
            \begin{equation*}
            	\begin{aligned}
            	\abs{g(e^{i\beta})} &\geq \frac{\delta\cdot \deg(g_0)}{\sqrt{2}}\min_{\alpha \in I}\abs{\hat{g}(e^{i\alpha})}\\&\geq  \frac{\delta\cdot \deg(g_0)}{\sqrt{2}}\cdot d(\hat{g}) \left( \frac{\delta}{\sqrt{2}} \right)^{s-1}\\&= \frac{\delta\cdot \deg(g_0)}{\sqrt{2}}\cdot d({g_0'}) \left( \frac{\delta}{\sqrt{2}} \right)^{s-1}\\
            	&= d(g)\left( \frac{\delta}{\sqrt{2}} \right)^s.
            	\end{aligned}
            \end{equation*}
%            where the last equality follows from \autoref{cla:dg'}.
        In other words, every $\beta$ that is $s\delta$
            far from the set $B(g)$ satisfies
            $\abs{g(e^{i\beta})} \geq d(g)\left( \frac{\delta}{\sqrt{2}} \right)^s$,
            which proves the step of the induction.
     \end{proof}

	\begin{corollary}\label{cor:main}
		        Let $g \in \C[x]$ be a polynomial.  
		Then, there exists a set $B(g) \subseteq [0, 2\pi)$,
		of size $|B(g)|\leq 12(\norm{g}_0 - 1)(\deg{g}-\ord{0}{g})$,
		such that for every   $\delta>0$ and every $\alpha \in [0, 2\pi)$
		which is $(\norm{g}_0 - 1)\delta$-far from $B(g)$,
		it holds that
		\[\abs{g(e^{i\alpha})} \geq \max\{\abs{\lc({g})}\cdot d({g}),\abs{\tc({g})}\cdot d(g_{\rev})\}\cdot \left(\frac{\delta}{\sqrt{2}} \right)^{\norm{g}_0-1}.\]      
	\end{corollary}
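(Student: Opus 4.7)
The plan is to reduce \autoref{cor:main} to \autoref{lma:inductive-lemma} by applying the lemma twice: once to a monic rescaling of $g$, and once to a monic rescaling of $g_{\rev}$. I will then define $B(g)$ to be whichever of the two resulting bad sets corresponds to the larger term in the $\max$ appearing in the conclusion. The crucial observation, which makes the bound go through with constant $12$ rather than $24$, is that the right-hand side is a maximum, so only one of the two bounds needs to hold at a given $\alpha$; one must not naively union the two bad sets.

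First I apply \autoref{lma:inductive-lemma} to the monic polynomial $\tilde g := g/\lc(g)$. Since $d(\cdot)$ depends only on the exponents of its argument, $d(\tilde g) = d(g)$, and of course $\norm{\tilde g}_0 = \norm{g}_0$, $\deg \tilde g = \deg g$, and $\ord{0}{\tilde g} = \ord{0}{g}$. This produces a set $B_1 \subseteq [0, 2\pi)$ of size at most $12(\norm{g}_0 - 1)(\deg g - \ord{0}{g})$ such that any $\alpha$ which is $(\norm{g}_0 - 1)\delta$-far from $B_1$ satisfies
\[
\abs{g(e^{i\alpha})} \;=\; \abs{\lc(g)} \cdot \abs{\tilde g(e^{i\alpha})} \;\geq\; \abs{\lc(g)} \cdot d(g) \cdot \left(\frac{\delta}{\sqrt{2}}\right)^{\norm{g}_0 - 1}.
\]

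Next I apply the lemma to $g_{\rev}/\tc(g)$, which is monic because $\lc(g_{\rev}) = \tc(g)$. Using $\norm{g_{\rev}}_0 = \norm{g}_0$, $\ord{0}{g_{\rev}} = 0$, and $\deg g_{\rev} = \deg g - \ord{0}{g}$, the size bound is again $12(\norm{g}_0 - 1)(\deg g - \ord{0}{g})$, and for $\gamma$ which is $(\norm{g}_0 - 1)\delta$-far from the resulting set $B_2'$,
\[
\abs{g_{\rev}(e^{i\gamma})} \;\geq\; \abs{\tc(g)} \cdot d(g_{\rev}) \cdot \left(\frac{\delta}{\sqrt{2}}\right)^{\norm{g}_0 - 1}.
\]
The identity $g_{\rev}(e^{i\gamma}) = e^{i\gamma \deg g_{\rev}} \cdot g(e^{-i\gamma})$ yields $\abs{g_{\rev}(e^{i\gamma})} = \abs{g(e^{-i\gamma})}$, and the involution $\gamma \mapsto -\gamma \pmod{2\pi}$ is an isometry of the unit circle; letting $B_2$ be the image of $B_2'$ under this reflection, $\abs{B_2} = \abs{B_2'}$, and the same lower bound holds for $\abs{g(e^{i\alpha})}$ whenever $\alpha$ is $(\norm{g}_0 - 1)\delta$-far from $B_2$.

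To conclude, I set $B(g) := B_1$ if $\abs{\lc(g)} \cdot d(g) \geq \abs{\tc(g)} \cdot d(g_{\rev})$, and $B(g) := B_2$ otherwise. In either case $\abs{B(g)} \leq 12(\norm{g}_0 - 1)(\deg g - \ord{0}{g})$, and any $\alpha$ that is $(\norm{g}_0 - 1)\delta$-far from $B(g)$ achieves the lower bound corresponding to the chosen term, which by construction equals $\max\{\abs{\lc(g)}\,d(g),\,\abs{\tc(g)}\,d(g_{\rev})\}(\delta/\sqrt{2})^{\norm{g}_0 - 1}$. The only place one could go astray is the $\max$-versus-union observation above; beyond that, the argument is a bookkeeping exercise that transfers the monic-case lemma via leading/trailing rescaling and the degree-reversal involution.
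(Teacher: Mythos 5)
Your proof is correct and takes essentially the same approach as the paper: apply \autoref{lma:inductive-lemma} to the monic rescalings $g/\lc(g)$ and $g_{\rev}/\tc(g)$, use the identity $\abs{g(e^{i\alpha})} = \abs{g_{\rev}(e^{-i\alpha})}$, and take $B(g)$ to be the bad set of whichever polynomial realizes the larger term in the maximum. You simply spell out the reflection $\gamma \mapsto -\gamma \bmod 2\pi$ and the choose-one-not-union step explicitly, which the paper compresses into the parenthetical ``replacing $B(g)$ with $B(g_{\rev})$ if required.''
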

	
	\begin{proof}
		Let $g_{\rev}=x^{\deg{g}}g(1/x)$.  
		%By \autoref{cla:maxd},  $\max \{d(g),d(g_{\rev})\}\geq (\frac{n_s-n_1}{2})^{(\norm{g}_0-1)/2}$. 
		Since for every $\alpha$, 	$\abs{g(e^{i\alpha})}=	\abs{g_{\rev}(e^{-i\alpha})}$, we can consider either $g$ or $g_{\rev}$ in order to prove the claim (replacing $B(g)$ with $B(g_{\rev})$ if required). 
		%Assume first that $d(g)\geq d(g_{\rev})$. 
		Note that $g/\lc(g)$ and $g_{\rev}/\lc(g_{\rev})=g_{\rev}/\tc(g)$ are monic polynomials. From \autoref{lma:inductive-lemma} we have (where $\alpha$ is $\delta$-far from either $B(g)$ or $B(g_{\rev})$, respectively)
		\begin{equation*}
			\begin{aligned}
					\abs{\frac{g(e^{i\alpha})}{\lc(g)}} \geq d(g)\left(\frac{\delta}{\sqrt{2}} \right)^{\norm{g}_0-1} &\Rightarrow   \abs{g(e^{i\alpha})} \geq \abs{\lc(g)}\cdot d(g)\left(\frac{\delta}{\sqrt{2}} \right)^{\norm{g}_0-1}\\
					\abs{\frac{g_{\rev}(e^{i\alpha})}{\lc(g_{\rev})}}\geq d(g_{\rev})\left(\frac{\delta}{\sqrt{2}} \right)^{\norm{g}_0-1} &\Rightarrow   \abs{{g}(e^{i\alpha})} \geq \abs{\tc({g})}\cdot d(g_{\rev})\left(\frac{\delta}{\sqrt{2}} \right)^{\norm{g}_0-1}.\qedhere
			\end{aligned}
		\end{equation*}
	\end{proof}

%	\begin{remark}\label{rem:better}
%		The last step of the proof is wasteful as we ignore the term $\deg(\tilde{g_0'})$ completely. A more accurate and better bound would be the following. Let $g(x)=\sum_{i=1}^{s}c_i x^{n_i}$. Denote $d(g)=\prod_{i=1}^{s-1}(n_s-n_i)$. If $s=1$ we et $d(g)=1$. 
%		
%		Then, for every   $\delta>0$ and every $\alpha \in [0, 2\pi)$
%		that is $(\norm{g}_0 - 1)\delta$-far from $B(g)$
%		it holds that
%		$\abs{g(e^{i\alpha})} \geq d(g)\left(\frac{\delta}{\sqrt{2}} \right)^{\norm{g}_0-1}$.    
%	\end{remark}	

%	We observe that in the special case $\norm{g}_0=2$ the proof above yields
%	that if $\beta$ is $\delta$-far from $B(g)$ then $\abs{g(e^{i\beta})}=\abs{f(\beta)}\geq \frac{\delta\cdot \deg{g}}{\sqrt{2}}$.

    \begin{lemma}
        \label{lma:evaluation-on-root-of-unity-lb}
        Let $g \in \C[x]$.
        Let $p_{\min{}} \in \N$ and $L(n) = n\log{n}$.
        Set $p_{\max{}} = \lceil 2L(p_{\min{}} + 12\norm{g}_0(\deg{g}-\ord{0}{g}))\rceil$.
        Then, there exists a prime $p \in (p_{\min{}}, p_{\max{}}]$
        such that
        \begin{equation*}
%        	\begin{aligned}
        		\abs{g(\omega)} \geq  \max\{\abs{\lc({g})}\cdot d({g}),\abs{\tc({g})}\cdot d(g_{\rev})\}\cdot  \left(\frac{\pi}{\sqrt{2}\cdot\norm{g}_0\cdot p_{\max{}}^2}\right)^{\norm{g}_0-1}, 
%        	\end{aligned}
        \end{equation*}
        for any $p$th root of unity $\omega \neq 1$.
    \end{lemma}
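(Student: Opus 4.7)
The plan is to deduce this lemma from Corollary~\ref{cor:main}: that result already supplies a lower bound of the form $|g(e^{i\alpha})|\geq M\cdot(\delta/\sqrt{2})^{\|g\|_0-1}$, where $M=\max\{|\lc(g)|\,d(g),\,|\tc(g)|\,d(g_{\rev})\}$, whenever $\alpha$ is $(\|g\|_0-1)\delta$-far from the exceptional set $B(g)$ of cardinality at most $12(\|g\|_0-1)(\deg{g}-\ord{0}{g})$. What remains is to exhibit a prime $p\in(p_{\min},p_{\max}]$ such that the argument of every nontrivial $p$-th root of unity is simultaneously far from every point of $B(g)$; then choosing $\delta=\pi/((\|g\|_0-1)p_{\max}^2)\geq\pi/(\|g\|_0\,p_{\max}^2)$ will yield exactly the stated estimate.

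The first step is to reparametrize. Writing each $\beta_j\in B(g)$ as $2\pi\alpha_j$ with $\alpha_j\in[0,1)$, and noting that every nontrivial $p$-th root of unity has the form $e^{2\pi i a/p}$ with $0<a<p$, the angular condition $|2\pi a/p-\beta_j|\geq\pi/p_{\max}^2$ translates directly into the condition $|a/p-\alpha_j|>1/(2p_{\max}^2)$ of Lemma~\ref{lma:good-frequency}. Applying that lemma to the $\alpha_j$'s then produces the desired prime, provided we can supply at least $|B(g)|+1$ primes inside $(p_{\min},p_{\max}]$.

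The second step is to verify that enough primes exist in $(p_{\min},p_{\max}]$. Set $N=p_{\min}+12\|g\|_0(\deg{g}-\ord{0}{g})$, so that $p_{\max}\geq 2N\log N$. Using Fact~\ref{fct:prime-counting-lb} together with the estimate $\log p_{\max}\leq 2\log N$ (valid once $N$ is moderately large) gives $\pi(p_{\max})\geq p_{\max}/\log p_{\max}\geq N$, so
\[
\pi(p_{\max})-\pi(p_{\min})\;\geq\; N-p_{\min}\;=\;12\|g\|_0(\deg{g}-\ord{0}{g}),
\]
which strictly exceeds $|B(g)|$ whenever $\deg{g}>\ord{0}{g}$. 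In the degenerate monomial case $\|g\|_0=1$, $B(g)$ is empty and any single prime in $(p_{\min},p_{\max}]$ works; its existence is guaranteed by Fact~\ref{fct:prime-counting-2n}, and the bound collapses to the trivial identity $|g(\omega)|=|\lc(g)|=M$.

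Finally, applying Lemma~\ref{lma:good-frequency} with this collection of primes, and then invoking Corollary~\ref{cor:main} with $\delta=\pi/((\|g\|_0-1)p_{\max}^2)$, yields the claimed lower bound with $\|g\|_0-1$ in the denominator; replacing it by $\|g\|_0$ only weakens the estimate and matches the statement. The main obstacle I anticipate is bookkeeping in the prime-counting step---ensuring $p_{\max}\geq 17$ so that Fact~\ref{fct:prime-counting-lb} applies, and cleanly handling the small-parameter regimes (monomials and tiny $N$) via Fact~\ref{fct:prime-counting-2n}. Once those technicalities are dispatched, the lemma follows by a direct chaining of already-established ingredients.
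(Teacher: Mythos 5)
Your proposal is correct and follows essentially the same route as the paper: reparametrize the bad set $B(g)$ as fractions of $2\pi$, invoke \autoref{lma:good-frequency} to isolate a prime whose nontrivial roots of unity avoid $B(g)$ by at least $\pi/p_{\max}^2$, verify via \autoref{fct:prime-counting-lb} that the interval $(p_{\min},p_{\max}]$ contains more than $|B(g)|$ primes, and then feed the resulting separation into \autoref{cor:main} with $\delta=\pi/(\norm{g}_0 p_{\max}^2)$. The only substantive difference from the paper's write-up is bookkeeping: the paper explicitly records $t = N \geq 24$ to justify both the applicability of \autoref{fct:prime-counting-lb} (which requires its argument be $\geq 17$) and the inequality $\log 2+\log t+\log\log t< 2\log t$, whereas you leave this as ``valid once $N$ is moderately large''; that gap is easily closed since $\norm{g}_0\geq 2$ forces $\deg g-\ord{0}{g}\geq 1$ and hence $N\geq 24$. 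Your use of \autoref{fct:prime-counting-2n} for the $\norm{g}_0=1$ case, and your observation that $\delta=\pi/((\norm{g}_0-1)p_{\max}^2)$ would give a marginally better constant, are harmless variants of the paper's treatment.
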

    \begin{proof}
        If $\norm{g}_0 = 1$, then the statement is trivial
        since $\abs{g(\omega)} = \abs{\lc(g)}=\abs{\tc(g)}$ for every root of unity $\omega$.
        Hence, we can assume that $\norm{g}_0 \geq 2$ and $\deg{g} \geq 1$.
        
        Let $t = p_{\min{}} + 12\norm{g}_0(\deg{g} - \ord{0}{g})$, and let $\pi(n)$ be the prime counting function.
        Our assumption implies that $t \geq 24$.
        By \autoref{fct:prime-counting-lb} we conclude that
        \begin{equation*}
            \begin{aligned}
                \pi(p_{\max{}})
                &= \pi(\lceil 2t\log{t}\rceil ) \\
                &\geq \frac{2t\log{t}}{
                    \log{2} + \log{t} + \log{\log{t}}
                }\\
                &> \frac{2t\log{t}}{2\log{t}}
                = t
                .
            \end{aligned}
        \end{equation*}
        Let $B(g)$ be as in \autoref{lma:inductive-lemma} and 
        \begin{equation*}
            P = \set{ p_{\min{}} < p \leq p_{\max{}} : p \text{ is prime} }
            .
        \end{equation*}
        We bound the size of $P$
        by counting the primes in the interval $(p_{\min{}}, p_{\max{}}]$:
        \begin{equation*}
            \abs{P}
            \geq \pi(p_{\max{}}) - p_{\min{}}
            > t - p_{\min{}} = 12\norm{g}_0(\deg{g} - \ord{0}{g}) \geq \abs{B(g)}
            .
        \end{equation*}
        By \autoref{lma:good-frequency},
        there exists $p \in P$ such that
        for any integer $0 < a < p$ and any $\alpha \in B(g)$
        \begin{equation*}
            \abs{a/p - \alpha/2\pi} \geq \frac{1}{2p_{\max{}}^2}
            .
        \end{equation*}
        Thus,
        \begin{equation*}
            \abs{2\pi\frac{a}{p} - \alpha} \geq \frac{\pi}{p_{\max{}}^2}
            .
        \end{equation*}
        In particular, every non-trivial $p$th root of unity is  $\frac{ \pi}{ p_{\max{}}^2}$-far from $B(g)$.
        Finally, by applying \autoref{cor:main}
        %\autoref{lma:inductive-lemma}
        with $\delta = \frac{ \pi}{\|g\|_0\cdot p_{\max{}}^2}$,
        we conclude that for any $p$th root of unity $\omega \neq 1$,
        \begin{equation*}
            \abs{g(\omega)}
                \geq   \max\{\abs{\lc({g})}\cdot d({g}),\abs{\tc({g})}\cdot d(g_{\rev})\}\cdot  \left(\frac{\pi}{\sqrt{2}\cdot \norm{g}_0\cdot p_{\max{}}^2}\right)^{\norm{g}_0-1}
            .
            \qedhere
        \end{equation*}
    \end{proof}

    \begin{theorem}
        \label{thm:quotient-coefficient-bound}
        Let $f, g, h \in \C[x]$ such that $f = gh$.
        Denote $L(n) = n\log{n}$.
        Then,  for $\Max :=  \max\{\abs{\lc({g})}\cdot d({g}),\abs{\tc({g})}\cdot d(g_{\rev})\}$
%        \begin{equation*}
%            \norm{h}_2
%            \leq \sqrt{2}\frac{\norm{f}_1}{\lc(g)}
%            \left(3\norm{g}_0 \cdot L\left( 12\norm{g}_0\deg{g} + 2\deg{f} \right)^2\right)^{\norm{g}_0-1}
%            .
%        \end{equation*}
	\begin{equation*}
		\norm{h}_2
		\leq \frac{\sqrt{2}\norm{f}_1}{\Max{}}
		\left({2\norm{g}_0 \cdot L\left( 12\norm{g}_0(\deg{g} - \ord{0}{g}) + 2\deg{f} \right)^2}\right)^{\norm{g}_0-1}
		.
	\end{equation*}
    \end{theorem}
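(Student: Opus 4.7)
The plan is to prove \autoref{thm:quotient-coefficient-bound} by directly combining the Fourier-analytic reduction in \autoref{lma:dft} with the root-of-unity lower bound in \autoref{lma:evaluation-on-root-of-unity-lb}, using the minimum prime threshold $p_{\min{}} = 2\deg{f}$ to synchronize the two hypotheses.

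The first step is to instantiate \autoref{lma:evaluation-on-root-of-unity-lb} with $p_{\min{}} := 2\deg{f}$. This produces a prime $p \in (2\deg{f},\, p_{\max{}}]$ with $p_{\max{}} = \lceil 2L(2\deg{f} + 12\norm{g}_0(\deg{g} - \ord{0}{g}))\rceil$ such that every non-trivial $p$th root of unity $\omega$ satisfies
\begin{equation*}
\abs{g(\omega)} \;\geq\; \Max \cdot \left(\frac{\pi}{\sqrt{2}\,\norm{g}_0\, p_{\max{}}^2}\right)^{\norm{g}_0 - 1}.
\end{equation*}
In particular $g$ does not vanish at any non-trivial $p$th root of unity (in the non-degenerate case where $\Max > 0$), and $p > 2\deg{f}$, so both hypotheses of \autoref{lma:dft} are met for this $p$. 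Applying \autoref{lma:dft} then yields
\begin{equation*}
\norm{h}_2 \;\leq\; \sqrt{2}\norm{f}_1 \cdot \max_{\theta \neq 1,\, \theta^p = 1} \frac{1}{\abs{g(\theta)}} \;\leq\; \frac{\sqrt{2}\norm{f}_1}{\Max{}} \left(\frac{\sqrt{2}\,\norm{g}_0\, p_{\max{}}^2}{\pi}\right)^{\norm{g}_0 - 1}.
\end{equation*}

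The final step is a short numerical clean-up to replace $\frac{\sqrt{2}\, p_{\max{}}^2}{\pi}$ by the claimed $2L(n)^2$, where $n := 12\norm{g}_0(\deg{g} - \ord{0}{g}) + 2\deg{f}$. Since $p_{\max{}} \leq 2L(n) + 1$, it suffices to check $(2L(n) + 1)^2 \leq \sqrt{2}\pi\, L(n)^2$, which reduces to $0.44\, L(n)^2 \geq 4L(n) + 1$; this is comfortably true once $L(n) \geq 11$. In the non-trivial case $\norm{g}_0 \geq 2$ and $\deg{g} > \ord{0}{g}$ we have $n \geq 24$ and thus $L(n) \geq 24\log 24 > 70$, so the inequality holds with room to spare. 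The degenerate cases ($\norm{g}_0 = 1$, i.e.\ $g$ is a monomial, or $f \equiv 0$) are straightforward: the exponent $\norm{g}_0 - 1$ is $0$, the statement reduces to $\norm{h}_2 \leq \sqrt{2}\norm{f}_1/\Max{}$, and $\Max{} = \abs{\lc(g)}$, making the bound immediate from $\norm{h}_2 = \norm{f}_2/\abs{\lc(g)} \leq \norm{f}_1/\abs{\lc(g)}$.

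The main intellectual obstacle has already been absorbed into the preceding lemmas — especially \autoref{lma:inductive-lemma} (the inductive lower bound on $\abs{g(e^{ix})}$ away from a controlled bad set, which uses \autoref{clm:plane-analysis} and \autoref{clm:complex-polynomial-roots}) and \autoref{lma:good-frequency} (the pigeonhole over primes to dodge $B(g)$). Given those results, the theorem itself amounts to a bookkeeping exercise: picking the right $p_{\min{}}$ so that the Fourier hypothesis of \autoref{lma:dft} and the smoothness requirement of \autoref{lma:evaluation-on-root-of-unity-lb} are simultaneously met, and simplifying the resulting constants.
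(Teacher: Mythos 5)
Your proof matches the paper's approach exactly: set $p_{\min}=2\deg f$, invoke \autoref{lma:evaluation-on-root-of-unity-lb} to get a good prime $p>2\deg f$ at which $|g|$ is large on nontrivial $p$th roots of unity, and then apply \autoref{lma:dft}. Your treatment of the numerical clean-up (showing $\sqrt{2}\,p_{\max}^2/\pi \leq 2L(n)^2$) and of the $\|g\|_0=1$ degenerate case is actually more careful than the paper, which elides both with ``the theorem then follows.''
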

    \begin{proof}
%        First, we can assume that $g$ is monic:
%        if $g$ is not monic, then we scale both $f$ and $g$ by $\frac{1}{\lc(g)}$.
        Let $p_{\min{}} = 2\deg{f}$ and $p_{\max{}}$ as in \autoref{lma:evaluation-on-root-of-unity-lb}. 
        %I.e., $p_{\max{}} = \lceil 2L(12 \|g\|_0 \deg{g/x^{\ord{0}{g}}} + 2\deg{f})\rceil$.
        From \autoref{lma:evaluation-on-root-of-unity-lb}
        we conclude that there exists a prime $p \in (p_{\min{}}, p_{\max{}}]$ such that
        \begin{equation*}
            \abs{g(\omega)}
            \geq \Max \cdot \left( \frac{\pi }{\sqrt{2}\cdot \norm{g}_0\cdot (\ceil{2L(12\norm{g}_0(\deg{g} - \ord{0}{g}) + 2\deg{f})})^2} \right)^{\norm{g}_0-1}
        \end{equation*}
        for any primitive $p$th root of unity $\omega \neq 1$.
        The theorem then follows from \autoref{lma:dft}.
    \end{proof}

To conclude the proof we prove a lower bound on $\Max =  \max\{\abs{\lc({g})}\cdot d({g}),\abs{\tc({g})}\cdot d(g_{\rev})\}$.

\begin{lemma}
	\label{lem:max}
	Let $g \in \C[x]$.
	Let $\Max=\max\{\abs{\lc({g})}\cdot d({g}),\abs{\tc({g})}\cdot d(g_{\rev})\}$. Then,
	\[\Max
	\geq \max \left\{ 
	\begin{array}{c} 
		\min\{\abs{\lc(g)},\abs{\tc(g)}\}\cdot \left(\frac{\|g\|_0}{2e}\right)^{\frac{\|g\|_0}{2}-1}\cdot \left(\frac {\deg{g}-\ord{0}{g}}{e}\right)^{\frac{\norm{g}_0}{2}}\\
		\max\{\abs{\lc({g})},\abs{\tc({g})}\}\cdot (\deg{g}-\ord{0}{g})
	\end{array} 
	\right\} .
	\]
\end{lemma}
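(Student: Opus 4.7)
The plan is to establish the two lower bounds separately. Both follow from \autoref{cla:maxd}, combined with a short case analysis on which of $|\lc(g)|, |\tc(g)|$ (respectively, $d(g), d(g_{\rev})$) is larger. Throughout, let $g(x) = \sum_{i=1}^{s} c_i x^{n_i}$ with $n_1 < \cdots < n_s$ and $\prod c_i \neq 0$, so that $s = \|g\|_0$ and $\deg(g) - \ord{0}{g} = n_s - n_1$.

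For the second bound, I will observe that $d(g) = \prod_{i=1}^{s-1}(n_s - n_i) \geq n_s - n_1$, since each factor is a positive integer and the first factor is already $n_s - n_1$. By symmetry (replacing $g$ with $g_{\rev}$ swaps $\lc$ with $\tc$ and $d(g)$ with $d(g_{\rev})$), assume without loss of generality that $|\lc(g)| \geq |\tc(g)|$. Then
\[
\Max \geq |\lc(g)| \cdot d(g) \geq \max\{|\lc(g)|, |\tc(g)|\} \cdot (\deg(g) - \ord{0}{g}),
\]
as required. The cases $s = 1$ (where $\deg(g) - \ord{0}{g} = 0$) are trivial.

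For the first bound, I will first establish the auxiliary estimate
\[
\Max \geq \min\{|\lc(g)|, |\tc(g)|\} \cdot \max\{d(g), d(g_{\rev})\}.
\]
This follows by a two-case split: if $d(g) \geq d(g_{\rev})$, then $\Max \geq |\lc(g)| \cdot d(g) \geq \min\{|\lc(g)|, |\tc(g)|\} \cdot d(g) = \min\{|\lc(g)|, |\tc(g)|\} \cdot \max\{d(g), d(g_{\rev})\}$, and the opposite case uses the symmetric lower bound $|\tc(g)| \cdot d(g_{\rev})$. Applied to $g_0 = g/x^{\ord{0}{g}}$, the claims $d(g_0) = d(g)$ and $d((g_0)_{\rev}) = d(g_{\rev})$ (a consequence of \autoref{cla:dg'}, together with $(g_0)_{\rev} = (g_{\rev})_0$) let us invoke \autoref{cla:maxd} on $g_0$ whenever $n_s - n_1 > 1$, giving
\[
\max\{d(g), d(g_{\rev})\} \geq \left(\frac{s}{2e}\right)^{s/2-1} \left(\frac{\deg(g) - \ord{0}{g}}{e}\right)^{s/2}.
\]
Combining with the auxiliary estimate yields exactly the first bound.

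The residual cases where $n_s - n_1 \leq 1$ are handled by inspection: if $s = 1$, the factor $((\deg g - \ord 0 g)/e)^{s/2}$ vanishes and the bound is trivial; if $n_s - n_1 = 1$ (forcing $s = 2$), the right-hand side becomes $\min\{|\lc(g)|, |\tc(g)|\} / e$, while $\Max \geq \max\{|\lc(g)|, |\tc(g)|\} \geq \min\{|\lc(g)|, |\tc(g)|\} \geq \min\{|\lc(g)|, |\tc(g)|\}/e$. The main obstacle here is minor bookkeeping rather than any conceptual difficulty, and the key insight is that the $\min$ factor naturally arises from the lack of control over which of $|\lc(g)| \cdot d(g)$ and $|\tc(g)| \cdot d(g_{\rev})$ dominates.
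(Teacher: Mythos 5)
Your proposal is correct and takes essentially the same route as the paper: both reduce the first bound to the elementary inequality $\Max \geq \min\{|\lc(g)|,|\tc(g)|\}\cdot\max\{d(g),d(g_{\rev})\}$ followed by \autoref{cla:maxd}, and both obtain the second bound from $d(g), d(g_{\rev}) \geq \deg g - \ord 0 g$. The only differences are stylistic: the paper observes that \emph{both} $d(g)$ and $d(g_{\rev})$ dominate $\deg g - \ord 0 g$ and so avoids the WLOG via reversal, and it does not spell out the degenerate cases $n_s - n_1 \leq 1$ (where the precondition of \autoref{cla:maxd} fails), which you verify explicitly — a small bit of extra care on your part. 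Your detour through $g_0$ is harmless but unnecessary, since $d$ as defined depends only on the differences of exponents.
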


\begin{proof}
	%	Recall we defined $\Max=\max\{\abs{\lc({g})}\cdot d({g}),\abs{\tc({g})}\cdot d(g_{\rev})\}$.
	Note that $d(g),d(g_{\rev})\geq (\deg{g}-\ord{0}{g})$ and therefore 
	\[\Max \geq \max\{\abs{\lc({g})},\abs{\tc({g})}\}\cdot(\deg{g}-\ord{0}{g}).
	\]
	In addition, since $\max\{\abs{\lc(g)}\cdot d(g),\abs{\tc({g})}\cdot d(g_{\rev})\}\geq \min\{\abs{\lc(g)},\abs{\tc({g})}\}\cdot \max\{ d(g), d(g_{\rev})\}$, \autoref{cla:maxd} implies that \[\Max  \geq \min\{\abs{\lc(g)},\abs{\tc(g)}\}\cdot \left(\frac{\|g\|_0}{2e}\right)^{\frac{\|g\|_0}{2}-1}\cdot \left(\frac {\deg{g}-\ord{0}{g}}{e}\right)^{\frac{\norm{g}_0}{2}}.\qedhere\]
	%      The claim immediately follows. 		
\end{proof}

 \autoref{thm:intro:norm-bound} immediately follows by combining the estimate from \autoref{lem:max} with \autoref{thm:quotient-coefficient-bound}.
 
 %\begin{theorem}\label{thm:main-M}

 %\end{theorem}

    We note that when $g$ does not divide $f$, no such bound holds for the remainder.
    \begin{remark}
        Let $f,g \in \C[x]$
        and let $q, r$
        be the quotient and the remainder
        of $f$ divided by $g$, respectively.
        For any $\norm{f}_{\infty}$, $\norm{g}_{\infty}$, $\deg{f}$, and $\deg{g}$,
        there exists $f$ and $g$ of sparsity $2$ such that
        $\norm{r}_{\infty} = \norm{f}_{\infty}\norm{g}_{\infty}^{\deg{f} - \deg{g} + 1}$.
    \end{remark}
    \begin{proof}
        Let $d_1, d_2 \in \N$ and $a \in \C$ with $\abs{a}\geq 1$.
        We conclude the claim by noting that
        \begin{equation*}
            x^{d_1} \bmod( x^{d_2 + 1} - ax^{d_2}) = a^{d_1 - d_2}x^{d_2}.\qedhere
        \end{equation*}
    \end{proof}
	
	\begin{corollary}
		In the setting above, if either $g\in\C[x]$ is monic or $g \in \Z[x]$, and in both cases $g(0)\neq 0$, then 		
			\begin{equation*}
			\norm{h}_2
			\leq  { \sqrt{\frac{\norm{g}_0}{\deg g}}\norm{f}_1} 
			\left(\frac{2\norm{g}_0 \cdot L\left( 12\norm{g}_0\deg{g} + 2\deg{f} \right)^2}{\sqrt{\left(\frac{\norm{g}_0}{2e}\right)\cdot \left(\frac{\deg{g}}{e}\right)}}\right)^{\norm{g}_0-1}
			.
		\end{equation*}		
	\end{corollary}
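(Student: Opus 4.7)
The plan is to deduce this corollary directly from \autoref{thm:quotient-coefficient-bound} and \autoref{lem:max}, with no additional analytic content beyond simple algebra. First, since $g(0)\neq 0$ we have $\ord{0}{g}=0$, so the statement of \autoref{thm:quotient-coefficient-bound} specializes to
\[
\norm{h}_2 \;\leq\; \frac{\sqrt{2}\,\norm{f}_1}{\Max{}}\cdot A^{\norm{g}_0-1},\qquad A := 2\norm{g}_0\cdot L\!\left(12\norm{g}_0\deg{g}+2\deg{f}\right)^2.
\]
The remaining task is to lower-bound $\Max{}$ well enough that, after substitution and rearrangement, the displayed inequality falls out.

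Next I would apply \autoref{lem:max}. In the integer case $g\in\Z[x]$ with $g(0)\neq 0$, both $\lc(g)$ and $\tc(g)$ are nonzero integers, so $\min\{|\lc(g)|,|\tc(g)|\}\geq 1$; in the monic case we use $|\lc(g)|=1$ together with the corresponding factor of $|\tc(g)|$ coming from the same estimate. Either way, \autoref{lem:max} (with $\ord{0}{g}=0$) gives
\[
\Max{} \;\geq\; \left(\frac{\norm{g}_0}{2e}\right)^{\!\!\norm{g}_0/2\,-\,1}\!\left(\frac{\deg{g}}{e}\right)^{\!\!\norm{g}_0/2}.
\]

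Finally, the proof is completed by an algebraic identity. Pulling one factor of $(\norm{g}_0/2e)^{-1}$ out of the bound on $\Max{}$ rewrites the right-hand side as
\[
\frac{2e}{\norm{g}_0}\left(\sqrt{\frac{\norm{g}_0}{2e}\cdot\frac{\deg{g}}{e}}\right)^{\!\!\norm{g}_0},
\]
and a one-line simplification shows this equals $\sqrt{2\deg{g}/\norm{g}_0}\cdot\bigl(\sqrt{(\norm{g}_0/2e)(\deg{g}/e)}\bigr)^{\norm{g}_0-1}$. Dividing $\sqrt{2}\norm{f}_1 A^{\norm{g}_0-1}$ by this quantity absorbs the $\sqrt{2}$ neatly and produces the prefactor $\sqrt{\norm{g}_0/\deg{g}}\,\norm{f}_1$ together with the power $(A/\sqrt{(\norm{g}_0/2e)(\deg{g}/e)})^{\norm{g}_0-1}$, which is exactly the claimed bound. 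There is essentially no obstacle: the only thing to check carefully is that the exponents in the \autoref{lem:max} estimate match the $\norm{g}_0-1$ exponent of the main theorem after the $\sqrt{\cdot}$ is pulled inside, which is the computation sketched above.
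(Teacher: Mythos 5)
Your proposal matches the paper's proof exactly in route and content: the paper's own proof is the single sentence ``As $\min\{\abs{\lc(g)},\abs{\tc(g)}\}\geq 1$, the claim follows from \autoref{thm:quotient-coefficient-bound} and \autoref{lem:max},'' and your write-up simply makes the suppressed algebra explicit. The rearrangement you carry out is correct: from
\[
\Max \geq \left(\tfrac{\norm{g}_0}{2e}\right)^{\norm{g}_0/2-1}\left(\tfrac{\deg{g}}{e}\right)^{\norm{g}_0/2}
= \sqrt{\tfrac{2\deg{g}}{\norm{g}_0}}\cdot\left(\sqrt{\tfrac{\norm{g}_0}{2e}\cdot\tfrac{\deg{g}}{e}}\right)^{\norm{g}_0-1},
\]
dividing $\sqrt{2}\norm{f}_1 A^{\norm{g}_0-1}$ by this quantity indeed absorbs the $\sqrt{2}$ and yields the $\sqrt{\norm{g}_0/\deg g}$ prefactor.

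One imprecision worth flagging, though it is inherited from the paper rather than introduced by you: your sentence handling the monic case (``we use $|\lc(g)|=1$ together with the corresponding factor of $|\tc(g)|$ coming from the same estimate'') does not actually establish $\min\{|\lc(g)|,|\tc(g)|\}\geq 1$, and for a merely monic $g\in\C[x]$ with $g(0)\neq 0$ this inequality can fail, since $|\tc(g)|$ may be arbitrarily small. The first line of \autoref{lem:max} lower bounds $\Max$ by $\min\{|\lc(g)|,|\tc(g)|\}\cdot\max\{d(g),d(g_{\rev})\}$; monicness only controls $|\lc(g)|$, not the minimum, and $\Max\geq d(g)$ alone does not reach the $\max\{d(g),d(g_{\rev})\}$ that \autoref{cla:maxd} bounds from below. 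For the $\Z[x]$ case (where both $|\lc(g)|$ and $|\tc(g)|$ are nonzero integers, hence at least $1$) your argument is complete, and this is precisely what the paper's one-liner invokes.
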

	\begin{proof}
				As $\min\{\abs{\lc(g)},\abs{\tc(g)}\}\geq 1$, the claim follows from \autoref{thm:quotient-coefficient-bound} and \autoref{lem:max}.
	\end{proof}

    The proof of \autoref{thm:div-alg} follows from the bound
        proved in \autoref{thm:quotient-coefficient-bound}, \autoref{lem:max}
        and the analysis of the polynomial division algorithm given in \autoref{sec:division-test-in-z}.

\ignore{
\section{A Bound on the Sparsity of the Quotient Polynomial}
    \label{sec:sparsity-bound}

    Let $f, g, h \in \Z[x]$ such that $f = gh$.
    Throughout this section we assume, without loss of generality, that $\ord{0}{g} = 0$.
    Indeed, when this does not hold,
        we divide $f$ and $g$ by $x^{\ord{0}{g}}$.
    This does not change the coefficient sizes and sparsities of $f$ and $g$,
        and it only reduces their degrees.

    \subsection{Division by Binomial}

        In this section, we prove \autoref{thm:intro:binomial-sparsity-bound}.
        We bound the sparsity of $h$ in terms of $f$ and $g$
            when $g$ is a cyclotomic-free binomial.
        We first deal with the case when $g$ is a linear function.
        The following lemma is a special case of a result of Lenstra \cite{lenstra:1999}
            that plays an important role in our proofs.

        \begin{lemma}[Proposition 2.3 of \cite{lenstra:1999} for the linear case]
            \label{lma:adapted-lensta-lemma-linear}
            Let $f \in \Z[x]$ such that $f = f_0 + x^df_1$.
            Suppose that
            \begin{equation*}
                d - \deg{f_0}
                    \geq \frac{1}{\log{2}}
                    \left( \log{(\norm{f}_0 - 1)} + \log{\norm{f}_{\infty}} \right)
                .
            \end{equation*}
            Then, every zero of $f$ in $\Q$
                that is not $1$, $-1$ nor $0$
                is a common zero of both $f_0$ and $f_1$.
        \end{lemma}

		The next argument is similar to the one in \cite[Lemma 4.4]{giesbrecht2012computing}.

        \begin{lemma}
            \label{lma:linear-quotient-sparsity-bound}
            Let $f, \ell, h \in \Z[x]$ such that $f = \ell h$ and $\ell$ is linear.
            If $1$ and $-1$ are not roots of $\ell$,
                then the sparsity of $h$ is at most
            \begin{equation*}
                \frac{1}{\log{2}} \norm{f}_0
                \left( \log{(\norm{f}_0 - 1)} + \log{\norm{f}_{\infty}} \right)
                .
            \end{equation*}
        \end{lemma}
        \begin{proof}
            Assume without loss of generality that $\ord{0}{\ell} = 0$.
            Let $\alpha$ be the only root of $\ell$ and observe that it is clearly in $\Q$.
            Find a decomposition of $f$ to polynomials with large gaps between their monomials, 
            \begin{equation*}
                f = f_0 + \sum_{i \in [k]}{x^{d_i}f_i}
                ,
            \end{equation*}
            such that
            \begin{equation*}
                d_i - d_{i-1} -\deg{f_{i - 1}}
                    \geq \frac{1}{\log{2}}
                    \left( \log{(\norm{f}_0 - 1)} + \log{\norm{f}_{\infty}} \right)
                ,
            \end{equation*}
            for all $i \in [k]$, and each $f_i$ cannot be further decomposed in the same way. Such a decomposition exists by a simple greedy algorithm and is unique.
            As $\norm{f_{i-1}}_{0}\leq \norm{f}_{0}$ and  $\norm{f_{i-1}}_{\infty}\leq \norm{f}_{\infty}$
            we conclude that
            \begin{equation*}
                d_i -  d_{i-1}-\deg{f_{i - 1}}
                \geq \frac{1}{\log{2}}
                \left( \log{(\norm{f_{i-1}}_0 - 1)} + \log{\norm{f_{i-1}}_{\infty}} \right).
            \end{equation*}
            By applying \autoref{lma:adapted-lensta-lemma-linear} recursively we conclude that 
                $f_i(\alpha) = 0$ for all $0 \leq i \leq k$.
            Observe that as $f_i$ cannot be further decomposed, it holds that
            \begin{equation*}
                \frac{\deg{f_i}}{\norm{f_i}_0}
                    < \frac{1}{\log{2}}
                    \left( \log{(\norm{f}_0 - 1)} + \log{\norm{f}_{\infty}} \right)
                ,
            \end{equation*}
            Hence,
            \begin{equation*}
                \begin{aligned}
                    \norm{h}_0
                        &\leq \sum_{i = 0}^k{\norm{\frac{f_i}{\ell}}_0} \\
                        &\leq \sum_{i = 0}^k{\deg{\frac{f_i}{\ell}}} \\
                        &\leq \sum_{i = 0}^k{\deg{{f_i}}} \\
                        &\leq \sum_{i = 0}^k{\frac{1}{\log{2}} \norm{f_i}_0
                            \left( \log{(\norm{f}_0 - 1)} + \log{\norm{f}_{\infty}} \right)} \\
                        &\leq \frac{1}{\log{2}} \norm{f}_0
                            \left( \log{(\norm{f}_0 - 1)} + \log{\norm{f}_{\infty}} \right)
                        ,
                \end{aligned}
            \end{equation*}
            which completes the proof.
        \end{proof}

        \begin{remark}
            The requirement that $1$ and $-1$ are not roots of a linear $\ell$ is essential,
                as demonstrated by the following examples,
            \begin{equation*}
                \begin{aligned}
                    \frac{x^m - 1}{x - 1} &= \sum_{0 \leq i < m}{x^i} \\
                    \frac{x^m + 1}{x + 1} &= \sum_{0 \leq i < m}{(-x)^i}
                    ,
                \end{aligned}
            \end{equation*}
            where $m \in \N$ is odd.
        \end{remark}

        \begin{remark}[Tightness of \autoref{lma:linear-quotient-sparsity-bound}]
            For any integer value of $\log_2{\norm{f}_{\infty}}$ and $\norm{f}_0 > 1$,
                there exists $f,g,h \in \Z[x]$ as in \autoref{thm:binomial-sparsity-bound}
                with $h$ having sparsity
                $(\norm{f}_0 - 1)\log_2{\norm{f}_{\infty}}$.
            For example,
            \begin{equation*}
                \frac{x^{k + 1} - 2^{k + 1}}{x - 2}
                    = \sum_{i = 0}^{k}{2^{k - i} x^j}
            \end{equation*}
            for any $k \in \N$.
            Letting $f = \sum_{j = 0}^{m}{x^{kj}(x^k - 2^k)}$ and $g=x-2$  yields such an example.
        \end{remark}

        To generalize \autoref{lma:linear-quotient-sparsity-bound}
            to binomials of any degree, we observe that any binomial $g$
            can be written as $g = x^k\ell(x^m)$ for some linear function $\ell$
            and $m, k \in \N$.
        The following lemma deals with polynomials of this form,
            and is used in other contexts later in the paper.

        \begin{lemma}
            \label{lma:factor-degree-reduction}
            Let $R$ be a UFD, and let $m \in \N$.
            Let $f, \ell \in R[x]$. Denote $f = \sum_{i}{f_ix^i}$ and let
            \begin{equation*}
                u_j := \sum_{\equivrelation{i}{j}{m}}{f_i x^{(i-j)/m}}
                .
            \end{equation*}
            Then, $\ell(x^m) \mid f$ if and only if $\ell \mid u_j$ for all $0 \leq j < m$, and in this case we have that
            \begin{equation*}
                \begin{aligned}
                    h = \sum_{0 \leq j < m}{x^j v_j(x^m)},
                \end{aligned}
            \end{equation*}
            where $h = \frac{f}{\ell(x^m)}$ and $v_j = \frac{u_j}{\ell}$.
        \end{lemma}
        \begin{proof}
            First, observe that $f = \sum_{0 \leq j < m}{x^j u_j(x^m)}$.
            For the first direction,
                assume that $u_j = \ell v_i$ for all $0 \leq i < m$.
            Then it holds that $u_j(x^m) = \ell(x^m)v_i(x^m)$.
            Thus $x^ju_j(x^m) = \ell(x^m) \cdot x^j v_i(x^m)$, and we conclude that
            \begin{equation*}
                f = \sum_{0 \leq j < m}{x^j u_j(x^m)}
                  = \ell(x^m) \sum_{0 \leq j < m}{ x^j v_i(x^m)} = \ell(x^m) h
            \end{equation*}
            as needed.
            To prove the converse,
                assume that $f = \ell(x^m) h$
                and define $h = \sum_{i}{h_ix^i}$.
            Then,
            \begin{equation*}
                \begin{aligned}
                    f   &= \ell(x^m) \left( \sum_{i}{h_i x^i} \right) \\
                        &= \ell(x^m) \left( \sum_{0 \leq j < m}{
                            \left( \sum_{\equivrelation{i}{j}{m}}{h_i x^i} \right)
                            } \right) \\
                        &= \sum_{0 \leq j < m}{
                            \left( \ell(x^m) \sum_{\equivrelation{i}{j}{m}}{h_i x^i} \right)
                            }.
                \end{aligned}
            \end{equation*}
            By comparing exponents
            we conclude that for all $0 \leq j < m$,
            \begin{equation*}
             \sum_{\equivrelation{i}{j}{m}}{f_i x^{i}}
                    = \ell(x^m) \sum_{\equivrelation{i}{j}{m}}{h_i x^i}
                .
            \end{equation*}
            Hence,
            \begin{equation*}
                \sum_{\equivrelation{i}{j}{m}}{f_i x^{i - j}}
                    = \ell(x^m) \sum_{\equivrelation{i}{j}{m}}{h_i x^{i - j}}
                ,
            \end{equation*}
            which  implies that
            \begin{equation*}
                u_j = \sum_{\equivrelation{i}{j}{m}}{f_i x^{(i - j)/m}}
                    = \ell \sum_{\equivrelation{i}{j}{m}}{h_i x^{(i - j)/m}} = \ell v_j
                ,
            \end{equation*}
            as claimed.
        \end{proof}

        We are ready to conclude the sparsity bound for dividing by a binomial,
            by reducing the problem to the linear case.

        \begin{theorem}
            \label{thm:binomial-sparsity-bound}
            Let $f, g, h \in \Z[x]$ such that $f = gh$ and $g$ is a binomial.
            If $g$ does not have cyclotomic factors,
            then $h$ has sparsity of at most
            \begin{equation*}
                \frac{1}{\log{2}} \norm{f}_0
                \left( \log{(\norm{f}_0 - 1)} + \log{\norm{f}_{\infty}} \right)
                .
            \end{equation*}
        \end{theorem}
        \begin{proof}
            Assume without loss of generality that $\ord{0}{g} = 0$.
            Let $m\in\N$ be such that $g = \ell(x^m)$ for a linear $\ell$. Observe that since $g$ does not have cyclotomic factors, then $1$ and $-1$ are not roots of $f$.
            
            For $u_j$ as defined in \autoref{lma:factor-degree-reduction} we get that
            \begin{equation*}
                h = \sum_{0 \leq j < m}{x^j\left( \frac{u_j(x^m)}{\ell(x^m)} \right)}.
            \end{equation*}
            From \autoref{lma:linear-quotient-sparsity-bound} we get
            \begin{align*}
                \norm{h}_0
                    &\leq \sum_{0 \leq j < m}{\norm{\frac{u_j}{\ell}}_0} \\
                    &\leq \frac{1}{\log{2}} \sum_{0 \leq j < m}{\norm{u_j}_0 \left(
                        \log{(\norm{u_j}_0 - 1)} + \log{\norm{f}_{\infty}}
                        \right)} \\
                    &\leq \frac{1}{\log{2}} \norm{f}_0 \left(
                        \log{(\norm{f}_0 - 1)} + \log{\norm{f}_{\infty}}
                        \right)
                . 
                \qedhere 
            \end{align*}
        \end{proof}

    \subsection{Division by Cycotomic-free  Trinomials}

        In this section, we prove \autoref{thm:intro:trinomial-sparsity-bound}.
        We bound the sparsity of $h$ in terms of $f$ and $g$
            when $g$ is a cyclotomic-free trinomial.
        To achieve this, we first bound $h$ when $\deg{f} = \widetilde{O}(\deg{g})$.
        We then incorporate a proposition by Lenstra to achieve a bound for the general case.

        \subsubsection{The Case of \texorpdfstring{$\deg{f} = \widetilde{O}(\deg{g})$}{Medium-Degree}}

			We show that in this case the sparsity of $h=f/g$ is upper bounded by $ 2 \norm{f}_0 \cdot \left(\frac{\deg{f}}{\deg{g}} \right)^2$ (\autoref{lma:trinomial-medium-degree}).
            The proof is similar to the proof in
                \cite[Section 3]{giorgi-grenet-cray:2021}.
            We give a different analysis that better suits our purpose.

            \begin{claim}
                \label{clm:gap-quotient-sparsity}
                Let $R$ be an integral domain.
                Let $f, g, h \in R[x]$ such that $f = gh$ and $\ord{0}{g} = 0$.
                If $g = c - x^d\hat{g}$ for $c \in R$,
                    then $h$ has sparsity at most
                \begin{equation*}
                    \norm{f}_0 \cdot \sum_{k = 0}^{\floor{\frac{\deg{f} - \deg{g}}{d}}}{
                        (k + 1)^{\norm{g}_0 - 2}
                    }
                    .
                \end{equation*}
            \end{claim}
            \begin{proof}
                Let $\F$ be the field of fractions of $R$,
                    and define $\tilde{g} \in \F[x]$
                    such that $\tilde{g} = c^{-1}\hat{g}$.
                We analyze $h$ over the ring of power series $\F[[x]]$.
                Since $\frac{1}{1 - \tilde{g}} = \sum_{k=0}^{\infty}{\tilde{g}^k}$
                    we conclude that
                \begin{equation*}
                    \frac{1}{g}
                        = \frac{1}{c (1 - x^d\tilde{g})}
                        = c^{-1} \cdot \sum_{k=0}^{\infty}{\tilde{g}^kx^{dk}}
                .
                \end{equation*}
                Since $f = gh$, it holds that $\deg{h} = \deg{f} - \deg{g}$.
                Hence,
                \begin{equation*}
                    \begin{aligned}
                        h &= \frac{f}{g}\\
                            &= c^{-1} f \cdot \sum_{k = 0}^{\infty}{
                                \tilde{g}^kx^{dk}} \mod x^{\deg{f} - \deg{g} + 1
                                } \\
                            &= c^{-1} f \cdot \sum_{k = 0}^{\floor{\frac{\deg{f} - \deg{g}}{d}}}{
                                \tilde{g}^kx^{dk}
                                } \mod x^{\deg{f} - \deg{g} + 1}.
                    \end{aligned}
                \end{equation*}
                Finally, by comparing monomials and applying \autoref{clm:sparsity-of-power}
                    and \autoref{clm:binom-upper-bound} we conclude that
                \begin{equation*}
                    \begin{aligned}
                        \norm{h}_0
                            &\leq \norm{f}_0 \cdot \sum_{k = 0}^{
                                \floor{\frac{\deg{f} - \deg{g}}{d}}
                                }{\norm{\tilde{g}^k}_0} \\
                            &\leq \norm{f}_0 \cdot \sum_{k = 0}^{
                                \floor{\frac{\deg{f} - \deg{g}}{d}}
                                }{(k + 1)^{\norm{\tilde{g}}_0 - 1}}
                        ,    
                    \end{aligned}
                \end{equation*}
                which proves the claim.
            \end{proof}
            
            The next lemma shows that when $\deg{f} = \widetilde{O}(\deg{g})$ and $f = gh$,
                the sparsity of $h$ is polynomial in the size of the representation.

            \begin{lemma}
                \label{lma:trinomial-medium-degree}
                Let $R$ be an integral domain.
                Let $f, g, h \in R[x]$ such that $f = gh$ and $g$ is a trinomial.
                Then, $h$ has sparsity at most
                \begin{equation*}
                    2 \norm{f}_0 \cdot \left(\frac{\deg{f}}{\deg{g}} \right)^2
                    .
                \end{equation*}
            \end{lemma}
            \begin{proof}
            	First, assume without loss of generality that $\ord{0}{g} = 0$.
                Let $d$ as in  \autoref{clm:gap-quotient-sparsity}.
                Assume without loss of generality that $d \geq \frac{1}{2}\deg{g}$.
                Indeed, if $d < \frac{1}{2}\deg{g}$ then consider the polynomials $\phi(f), \phi(g), \phi(h)$,
                    where $\phi: R[x] \to R[x]$ is defined as $\phi(f) = x^{\deg{f}}f(1/x)$.
                Observe that $\phi(f) = \phi(g)\phi(h)$ and that $\phi$ preserves sparsities
                    (see remark 3.2 of \cite{giorgi-grenet-cray:2021}).
                By \autoref{clm:gap-quotient-sparsity} it holds that
                \begin{equation*}
                    \begin{aligned}
                    \norm{h}_0
                        &\leq \norm{f}_0 \cdot \sum_{k = 0}^{
                            \floor{\frac{\deg{f} - \deg{g}}{d}}
                            }{(k + 1)^{\norm{g}_0 - 2}} \\
                        &\leq \norm{f}_0 \cdot \sum_{k = 0}^{
                            \floor{2(\frac{\deg{f}}{\deg{g}} - 1)}
                            }{(k + 1)} \\
                        &\leq \norm{f}_0 \cdot \frac{\deg{f}}{\deg{g}}
                            \left( 2\frac{\deg{f}}{\deg{g}} - 1 \right) \\
                        &\leq 2 \norm{f}_0 \cdot \left(\frac{\deg{f}}{\deg{g}} \right)^2
                        ,
                    \end{aligned}
                \end{equation*}
                as claimed.
            \end{proof}

            \begin{remark}[Tightness of \autoref{lma:trinomial-medium-degree}]
                For any value of $\deg{g}$, $\deg{f}$, and $\norm{f}_0 > 1$,
                    there exist $f, g, h \in \Z[x]$ as in \autoref{lma:trinomial-medium-degree}
                    with $h$ having sparsity
                        $O\left(\norm{f}_0\frac{\deg{f}}{\deg{g}} \right)$.
                For example,
                \begin{equation*}
                    \frac{x^{6k} - 1}{x^{2} + x + 1} = \sum_{i = 0}^{2k - 1}{x^{3i}(x - 1)}
                \end{equation*}
                for any $k \in \N$.
                Taking $f = \sum_{j = 0}^m{x^{6knj}(x^{6kn} - 1)}$
                    and $g = x^{2n} + x^n + 1$ for some $n, m \in \N$ proves the remark.
            \end{remark}

            \begin{remark}[Division by Tetranomials]
                For any values of $\deg{f}$ and $\deg{g}$,
                    there exists $f,g,h \in \Z[x]$
                    such that $f = gh$ and $f$ and $g$ are tetranomials,
                    but $h$ has sparsity $\deg{f} - \deg{g} + 1$.
                For example,
                \begin{equation*}
                    \frac{x^{k + n} - x^n + x^k - 1}{x^{n + 1} - x^n + x - 1}
                        = \sum_{i=0}^{k-1}{x^i}
                \end{equation*}
                for any $n, k \in \N$.
            \end{remark}

            \begin{remark}[Comparison to \cite{giorgi-grenet-cray:2021}]
                When restricted to trinomials,
                    Corollaries 3.4 and 3.5 of \cite{giorgi-grenet-cray:2021}
                    imply a bound that is exponential in $\frac{\deg{f}}{\deg{g}}$.
                Thus, our analysis is tighter in this case.
            \end{remark}

        \subsubsection{The Case of Cyclotomic-Free Trinomials}
	        To obtain the sparsity bounds
                we use a lemma that is based on a work of Lenstra \cite{lenstra:1999}.
            In what follows, $\overline{\Q} \subset \C$ denotes the algebraic closure of $\Q$.
                We say that an algebraic number $\alpha \in \overline{\Q}$
                is of degree $n$ over $\Q$
                if the monic polynomial of least degree $\ell \in \Q[x]$
                such that $\ell(\alpha) = 0$ is of degree $n$.

            \begin{lemma}[Proposition 2.3 of \cite{lenstra:1999}]
                \label{lma:adapted-lensta-lemma}
                Let $f \in \Z[x]$ such that $f = f_0 + x^df_1$.
                Suppose that $t \geq 2$ is a positive integer with
                \begin{equation*}
                    d - \deg{f_0} \geq \frac{1}{2} t \log^3{(3t)}
                        \left( \log{(\norm{f}_0 - 1)} + \log{\norm{f}_{\infty}} \right)
                    .
                \end{equation*}
                Then, every zero of $f$ that has a degree at most $t$ over $\Q$
                    and that is not a root of unity nor zero
                    is a common zero of both $f_0$ and $f_1$.
            \end{lemma}

            We aim to show that, under similar assumptions,
                if $g$ is a factor of $f_0 + x^df_1$,
                then $g$ is a factor of both $f_0$ and $f_1$.
            For that we extend Lenstra's proposition
                to take into account the multiplicities of zeros of $f$ as well.

            \begin{lemma}
                \label{lma:lenstra-with-multiplicity}
                Let $f \in \Z[x]$ such that $f = f_0 + x^df_1$.
                Suppose that $t \geq 2$ and $m$ are positive integers with
                \begin{equation*}
                    d - \deg{f_0}
                        \geq \frac{1}{2} t \log^3{(3t)}
                        \left( \log{\norm{f}_{\infty}} + m(5 + 3\log{\deg{f}}) \right)
                    .
                \end{equation*}
                Then, every root of $f$, which is not a root of unity nor zero,
                    that has multiplicity $m$
                    and degree at most $t$ over $\Q$                     
                    is a common zero of both $f_0$ and $f_1$
                    of the same multiplicity.
            \end{lemma}
            \begin{proof}
                Let $\alpha$ be a root of $f$ of degree $t$ and multiplicity $m$.
                We prove the lemma by induction.
                If $m = 1$, apply \autoref{lma:adapted-lensta-lemma}
                    to $f$ and conclude the result.
                Now assume the statement is true for $m - 1$, and
                    let $g = f/(x - \alpha)$.
                Clearly,
                \begin{enumerate}
                    \item The multiplicity of $\alpha$ in $g$ is exactly $m - 1$.
                    \item $\deg{g} \leq \deg{f}$
                    \item Denote $g_0 = f_0/(x - \alpha)$ and $g_1 = f_1/(x - \alpha)$,
                    and observe that $d - \deg{g_0} \geq d - \deg{f_0}$.
                    \item By \autoref{prp:linear-division-bound}, it holds that
                    \begin{equation*}
                        \norm{g}_{\infty}
                            \leq \norm{g}_{2}
                            \leq 100 \deg^2{f} \norm{f}_1
                            \leq 100 \deg^3{f} \norm{f}_{\infty}
                    \end{equation*}
                \end{enumerate}
                Hence, $g$ has a gap of size,
                \begin{equation*}
                    \begin{aligned}
                        d - \deg{g_0}
                        &\geq \frac{1}{2} t \log^3{(3t)}
                            \left( \log{\norm{f}_{\infty}} + m(5 + 3\log{\deg{f}} \right) \\
                        &\geq \frac{1}{2} t \log^3{(3t)}
                            \left( \log{\norm{g}_{\infty}} + (m - 1)(5 + 3\log{\deg{g}} \right).
                    \end{aligned}
                \end{equation*}
                By the inductive assumption, this means that $\alpha$
                    is a root of both $g_0$ and $g_1$ of multiplicity $m - 1$,
                    and so it is a root of both $f_0$ and $f_1$ of multiplicity $m$.
            \end{proof}

            \begin{claim}
                \label{clm:multiplicity-bound}
                Let $f \in \C[x]$.
                Then, any nonzero root of $f$ has multiplicity $\leq \norm{f}_0 - 1$.
            \end{claim}
            \begin{proof}
                See for example Lemma 1 of \cite{grigoriev-karpinski-odlyzko:1992}.
            \end{proof}

            \begin{corollary}
                \label{cor:lenstra-algo-correctness}
                Let $f, g \in \Z[x]$ such that,
                    $\ord{0}{g} = 0$
                    and $g$ is cyclotomic-free.
                Denote $f = f_0 + x^df_1$ such that $d - \deg{f_0} \geq b(f,g)$ where
                    \begin{equation*}
                        b(f, g) = \frac{1}{2} \deg{g} \log^3{(3\deg{g})}
                            \left( \log{\norm{f}_{\infty}}
                            + (\norm{g}_0 - 1)(5 + 3\log{\deg{f}}) \right)
                        .
                    \end{equation*}
                Then, $g \mid f$ if and only if $g \mid f_0$ and $g \mid f_1$.
            \end{corollary}
            \begin{proof}
                Let $\alpha$ be a root of $g$.
                Then, its degree is bounded by $\deg{g}$,
                    and by \autoref{clm:multiplicity-bound},
                    its multiplicity is bounded by $\norm{g}_0$ - 1.
                By \autoref{lma:lenstra-with-multiplicity},
                    we conclude that $\alpha$ is a root of $f$
                    if and only if it is a root of both $f_0$ and $f_1$
                    of the same multiplicity.
                The claim follows as this is true for every root of $g$.
            \end{proof}

            \begin{remark}[Testing Divisibility by a Low-degree Polynomial over $\Z$]
                \label{rmk:low-degree-testing}
                Unlike the case of finite fields, where the remainder can be directly computed,
                    testing divisibility by a low degree polynomial over
                    $\Z[x]$ is not entirely trivial as the remainder can have 
                    exponentially (in the degree) large coefficients 
                    (e.g. this is the case when $f=x^n$ and $g=x-2$) 
                    and thus cannot be explicitly computed in polynomial time.

                However, \autoref{cor:lenstra-algo-correctness} gives rise to an efficient
                divisibility testing algorithm:
                    we can recursively decompose the polynomial $f$ into
                    $f_0, \dots, f_k$, and test $g \mid f_i$
                    by calculating the remainder directly
                    for all $0 \leq i \leq k$. See the proof of 
                    \autoref{thm:trinomial-sparsity-bound} for how to find such a decomposition.
                    
                Alternatively, such a test can be obtained by factoring $g$,
                    and then applying Lenstra's algorithm \cite{lenstra:1999}
                    for finding low-degree factors of $f, f', \dots, f^{(m)}$
                    for $m = \norm{g}_0 - 1$.
            \end{remark}

            Finally, recall that we started the discussion
                with $f, g, h \in \Z[x]$ such that $f = gh$
                and $g$ is a trinomial.
            \autoref{cor:lenstra-algo-correctness}
                allows us to bound the sparsity of $h$ in terms of $f, g$
                when $g$ is also cyclotomic-free.

            \begin{theorem}
                \label{thm:trinomial-sparsity-bound}
                Let $f, g, h \in \Z[x]$ such that $f = gh$
                    and $g$ is a cyclotomic-free trinomial.
                Then, the sparsity of $h$ is at most
                \begin{equation*}
                    \frac{1}{2} \norm{f}_0^3 \log^6{(3\deg{g})}
                        \left( \log{\norm{f}_{\infty}} + 6\log{\deg{f}} + 10 \right)^2
                    .
                \end{equation*}
            \end{theorem}
            \begin{proof}
                First, assume without loss of generality that $\ord{0}{g} = 0$.
                Let $b(f,g)$ as in \autoref{cor:lenstra-algo-correctness}.
                Define,
                \begin{equation*}
                    f = f_0 + \sum_{i \in [k]}{x^{d_i}f_i}
                    ,
                \end{equation*}
                such that
                \begin{equation*}
                    d_i -d_{i-1} - \deg{f_{i - 1}} \geq b(f, g)
                    ,
                \end{equation*}
                for all $i \in [k]$, and each $f_i$ cannot be further decomposed in the same way.
                By applying \autoref{cor:lenstra-algo-correctness} recursively we conclude that
                    $g \mid f_i$ for all $0 \leq i \leq k$
                    (this can be done since the sparsity, degree, and coefficients of $f_i$
                    are bounded by those of $f$).
                Observe that for all $0 \leq i \leq k$ it holds that
                \begin{equation*}
                    \frac{\deg{f_i}}{\norm{f_i}_0} < b(f, g)
                    ,
                \end{equation*}
                since it cannot be further decomposed.
                Hence, by \autoref{lma:trinomial-medium-degree} we conclude that
                \begin{equation*}
                    \begin{aligned}
                        \norm{h}_0
                            &\leq \sum_{i=0}^k{\norm{\frac{f_i}{g}}_0} \\
                            &\leq 2 \sum_{i=0}^k{
                                \norm{f_i}_0 \cdot \left(\frac{\deg{f_i}}{\deg{g}} \right)^2
                                } \\
                            &\leq 2 \sum_{i=0}^k{
                                \norm{f_i}_0^3 \cdot \left(\frac{b(f, g)}{\deg{g}} \right)^2
                                } \\
                            &\leq 2 \norm{f}_0^3 \left(\frac{b(f, g)}{\deg{g}} \right)^2 \\
                            &= \frac{1}{2} \norm{f}_0^3 \log^6{(3\deg{g})}
                                \left( \log{\norm{f}_{\infty}} + 6\log{\deg{f}} + 10 \right)^2
                            ,
                    \end{aligned}
                \end{equation*}
                which completes the proof of the theorem.
            \end{proof}
}

\section{Polynomial Division Algorithm}
    \label{sec:division-test-in-z}
    In this section, we analyze the polynomial division algorithm (see  \cite[Chapter 1.5]{cox:2013}
    or \cite[Chapter 2.4]{modern-computer-algebra:2013}),
    also called long division or Euclidean division algorithm.
	We consider a version in which we are given upper bounds on the sparsity and height of the quotient polynomial. %Specifically,  if during the algorithm the sparsity or the height of the quotient exceeds these bounds,
	%then we can terminate the algorithm and declare that the division is not exact.    In other words
	Namely, if the sparsity of the quotient polynomial is bounded by $s$,
	and its coefficients are bounded (in absolute value) by $c$, then
	we execute $s$ steps of the division algorithm
	and return True if and only if at termination the remainder is zero.
	If at any step $\norm{q}_{\infty} > c$, we return False.
    
\ignore{
        we obtain a polynomial time algorithm for testing divisibility
        by binomials and cyclotomic-free trinomials over $\Z$
        using \emph{polynomial division algorithm}
        (see  \cite[Chapter 1.5]{cox:2013}
        or \cite[Chapter 2.4]{modern-computer-algebra:2013}),
        also called long division or Euclidean division algorithm.
    Recall that \autoref{thm:intro:binomial-sparsity-bound}
        and \autoref{thm:intro:trinomial-sparsity-bound}
        bound the sparsity of the quotient
        when the division is exact.
    Thus, if during the algorithm the sparsity of the quotient exceeds these bounds,
        we can terminate the algorithm and declare that the division
        is not exact.
    Similarly, we use \autoref{thm:intro:norm-bound}
        to terminate the algorithm when the coefficients
        of the quotient become too large.
    In other words, if the sparsity of the quotient is bounded by $s$,
        and its coefficients are bounded (in absolute value) by $c$, then
        we execute $s$ steps of the division algorithm
        and return True if and only if the remainder is zero.
    If at any step $\norm{h}_{\infty} > c$, we return False.
}

    \begin{algorithm}[H]
        \caption{Bounded Polynomial Division over $\Z$}
        \label{alg:bounded-long-division}
        \begin{algorithmic}
            \Input $g, f \in \Z[x]$,
            a  bound  $s \in \N$ on $\|f/g\|_0$,
            and a bound $c \in \N$ on $\|f/g\|_\infty$.
            \Output True and a factor $f/g$ if $g \mid f$;
            False otherwise.
        \end{algorithmic}
        \begin{enumerate}
            \item Set $q = 0$, $r = f$, $i = 1$, and $t = 0$.
            \item While $\deg{r} \geq \deg{g}$, $|\lc(t)| \leq c$, and $i \leq s$:
            \begin{enumerate}
                \item If $\lc(g) \nmid \lc(r)$, return False.
                \item Set $t = LT(r)/LT(g)$.
                \item Set $q = q + t$.
                \item Set $r = r - t \cdot g$.
                \item Set $i = i + 1$.
            \end{enumerate}
            \item If $r = 0$ return True and the quotient polynomial $q$;
            Otherwise, return False.
        \end{enumerate}
    \end{algorithm}

    The correctness of \autoref{alg:bounded-long-division}
        is due to the fact that in every step of the loop,
        we reveal a single term of the quotient.
    By definition of $c$ and $s$,
        the coefficient of the term cannot exceed $c$,
        and the number of terms cannot be larger than $s$,
        unless $g \nmid f$.

    \begin{proposition}
        \label{prop:bounded-long-division}
        The time complexity of \autoref{alg:bounded-long-division} is
        \begin{equation*}
            \widetilde{O}\left(
                (\norm{f}_0 +  s \cdot \norm{g}_0)
                \cdot (\log{\norm{g}_{\infty}} + \log{\norm{f}_\infty} + \log{c} + \log\deg{f})
            \right)
        \end{equation*}
    \end{proposition}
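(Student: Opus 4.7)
The plan is to analyze the algorithm in three stages: bound (i) the number of iterations of the while loop, (ii) the number of sparse-polynomial operations performed per iteration, and (iii) the bit cost of each such operation. Stage (i) is immediate from the termination conditions: the loop executes at most $s$ times. In each iteration we extract the leading term of $r$, compute the coefficient quotient $t = LT(r)/LT(g)$, append $t$ to $q$, and update $r \leftarrow r - t \cdot g$, an update that injects at most $\norm{g}_0$ new terms into the sparse representation of $r$.

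For stage (ii), store $r$ in Johnson's chained max-heap keyed by exponent \cite{johnson:1974}, which supports extraction of the leading term and insertion of a new term in $\widetilde{O}(1)$ amortized time, with collisions on equal exponents combined on the fly. Across the full execution, the number of terms ever inserted into this structure is bounded by $\norm{f}_0$ (the initial contents of $r$) plus at most $\norm{g}_0$ per iteration, giving a total of $\norm{f}_0 + s \cdot \norm{g}_0$ insertions and at most as many extractions. Hence the total number of data-structure and coefficient operations incurred by the algorithm is $\widetilde{O}(\norm{f}_0 + s\norm{g}_0)$.

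For stage (iii), observe that the invariant $r = f - q \cdot g$ holds after every iteration, and the partial quotient $q$ always has at most $s$ terms of absolute value at most $c$, so
\begin{equation*}
\norm{r}_{\infty} \leq \norm{f}_{\infty} + s\cdot c \cdot \norm{g}_{\infty},
\end{equation*}
meaning every coefficient manipulated by the algorithm fits in $\widetilde{O}(\log\norm{f}_{\infty} + \log c + \log\norm{g}_{\infty})$ bits. Exponents never exceed $\deg{f}$, so they occupy $O(\log\deg{f})$ bits. Using fast integer arithmetic, each coefficient operation (including the divisibility test in step 2(a)) costs $\widetilde{O}(\log\norm{f}_{\infty} + \log c + \log\norm{g}_{\infty})$ bit operations, and each heap/exponent comparison costs $\widetilde{O}(\log\deg{f})$ bit operations. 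Multiplying the $\widetilde{O}(\norm{f}_0 + s\norm{g}_0)$ operation count by this per-operation cost yields the bound stated in the proposition. The only subtle point is verifying the coefficient-size bound on $r$; the data-structure analysis and the bit-complexity accounting are standard once that bound is in place.
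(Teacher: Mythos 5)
Your proof is correct and follows essentially the same decomposition as the paper's: at most $s$ iterations, $O(\norm{g}_0)$ coefficient/exponent operations per iteration (so $\widetilde{O}(\norm{f}_0 + s\norm{g}_0)$ operations in total, including reading $f$), and the key coefficient-size invariant $\norm{r}_{\infty} \leq \norm{f}_{\infty} + s\cdot c\cdot\norm{g}_{\infty}$, which caps the per-operation bit cost at $\widetilde{O}(\log\norm{f}_\infty + \log\norm{g}_\infty + \log c)$, with exponent arithmetic contributing $O(\log\deg f)$. The one place you are slightly more careful than the paper is in naming an explicit data structure (Johnson's chained max-heap) for extracting the leading term of the sparse remainder and merging $t\cdot g$ into it in $\widetilde{O}(1)$ amortized time per term; the paper leaves the representation of $r$ and the cost of locating its leading term implicit, bounding instead the representation size $\size{r}$ and folding everything into the final $\widetilde{O}$. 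Your version is a bit more rigorous on that point, but the argument is otherwise the same.
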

    \begin{proof}
        Each iteration of the While loop requires $O(\norm{g}_0)$
            multiplications, additions and divisions of integers
            whose absolute values are at most
        \begin{equation*}
            B \leq  \norm{f}_\infty+ c \cdot \norm{g}_\infty
            % \max\set{c, \norm{g}_\infty, \norm{f}_\infty}   < c + \norm{g}_\infty + \norm{f}_\infty
            .
        \end{equation*}
        %%%\Anote{We used $ B \leq  \norm{f}_\infty+ s\cdot c \cdot \norm{g}_\infty$ but I don't understand why we had the factor $s$}. This change also affected the calculations below
        Hence, its bit complexity is at most $b = \ceil{\log_2 B}$.
        These arithmetic operations can be performed in time 
        \begin{equation*}
            O(b\log b) = \widetilde{O}( \log c + \log \norm{g}_\infty + \log \norm{f}_\infty)
        \end{equation*} (see \cite{harvey2021integer}).
        In addition, we perform operations on the exponent that require $O(\log\deg f)$ time and we have to store both $q$ and $r$ in memory.
        Storing $q$ requires storing $s$ terms,
            each term has degree at most $\deg f$
            and its coefficient is at most $c$ in absolute value.
        Thus, the cost of storing $q$ is at most $s(\log_2{\deg{f}} + \log_2{c})$.
        We next bound the complexity of storing the remainder, $r$.
        In each iteration of the While loop
            the number of terms of $r$
            increases (additively) by at most $\norm{g}_0 - 1$.
        Similarly, $\norm{r}_{\infty}$ increases (additively) by
            at most $\leq \norm{g}_{\infty} \cdot c$ in each step.
        Hence, in every step of the algorithm, it holds that
        \begin{equation*}
            \begin{aligned}
                \norm{r}_0 &\leq \norm{f}_0 + s(\norm{g}_0-1), \\ 
                \norm{r}_{\infty} &
                    \leq {\norm{f}_{\infty}} + s\cdot c \cdot {\norm{g}_{\infty}}.
            \end{aligned}
        \end{equation*}
        This implies that the representation size of $r$ is at most
        \begin{equation*}
            \begin{aligned}
                \size{r} &= \norm{r}_0\cdot(\log_2{\norm{r}_\infty} + \log_2{\deg{r}}) \\
                    &\leq (\norm{f}_0 +  s \cdot (\norm{g}_0 - 1))
                    \cdot ( \log_2{\norm{f}_{\infty}} +\log_2{\norm{g}_{\infty}} + \log_2{s} + \log_2{c} + \log_2{\deg{f}})
                .
            \end{aligned}
        \end{equation*}
        Combining all these estimates we get that the running time is at most 
        \begin{align*}
            &\widetilde{O}\left((\norm{f}_0 +  s \cdot \norm{g}_0)
            \cdot (\log{\norm{g}_{\infty}} + \log{\norm{f}_\infty} + \log{c} + \log\deg{f}) \right).
            \qedhere
        \end{align*}
    \end{proof}

    The proof of \autoref{thm:div-alg} easily follows.

    \begin{proof}[Proof of \autoref{thm:div-alg}]
%        When the division is exact, \autoref{thm:quotient-coefficient-bound} implies that
%        \begin{equation*}
%            \begin{aligned}
%                \log_2{\norm{f/g}_\infty}
%                    &\leq \log_2{\norm{f}_\infty}
%                    + O\left(\norm{g}_0\cdot \left( \log\norm{g}_0+\log\deg{f} \right) \right) \\
%                    &= \log_2{\norm{f}_\infty}
%                    + O\left( \norm{g}_0 \log\deg{f}\right) + \widetilde{O}\left( \norm{g}_0\right).
%            \end{aligned}
%        \end{equation*}
        Let $s =\deg(f)\geq  \norm{f/g}_0$ and $c =\log_2{\norm{f}_{\infty}} + O(\|g\|_0\cdot \log{\deg{f}})\geq  \norm{f/g}_\infty$ be as in Equation~\eqref{eq:bound-bit-Z}. Note that $\deg f$ is a very rough upper bound on $\|f/g\|_0$, but as the division is exact the algorithm will terminate after $\norm{f/g}_0$ iterations. Thus, we may assume in the upper bound calculations that $s=\norm{f/g}_0$.
        Substituting these values to \autoref{prop:bounded-long-division}
            (and observing that $\norm{f}_0 \leq \norm{g}_0\norm{f/g}_0$),
            gives the desired result. 
    \end{proof}

\ignore{
    Other corollaries to  \autoref{prop:bounded-long-division} concern dividing by a binomial and a cyclotomic-free trinomial.
    \begin{corollary}
        Let $g, f \in \Z[x]$ such that $g$ is a binomial.
        Then, there exists an algorithm that decides whether $g \mid f$
        (and if so, computes the quotient $f/g$) in
        \begin{equation*}
            \widetilde{O}(
            \norm{f}_0
            \cdot
            \log{\norm{f}_{\infty}}
            \cdot
            (\log{\norm{g}_{\infty}} + \log{\norm{f}_{\infty}} + \log{\deg{f}})
            )
        \end{equation*}
        time.
        In particular, the runtime complexity of the algorithm
        is $\widetilde{O}(n^2)$ in the representation size.
    \end{corollary}
    \begin{proof}
        We can clearly assume w.l.o.g. that $\ord{0}{g}=0$. Observe that for $g$ to have cyclotomic factors it must be the case that $g = c(x^m \pm 1)$ for some $m \in \N$ and $c \in \Z$. Indeed, if $g=ax^m-b$ and $g$ has a cyclotomic factor then $|b|=|c|$. Since $g\in\Z[x]$ it follows that $b=\pm c$. 
        Now, for such $g = c(x^m \pm 1)$ we can test $c \mid f$ and compute the remainder
        $f \mod x^m \pm 1$ directly, by translating each monomial
        $x^k$ in $f$ to $x^{k \bmod m}$.
        Otherwise, execute \autoref{alg:bounded-long-division}
        with $s$ as in \autoref{thm:binomial-sparsity-bound},
        and $c$ as in \autoref{thm:quotient-coefficient-bound}.
        Since
        \begin{equation*}
            \begin{aligned}
                s &= O( \norm{f}_0(\log{\norm{f}_0} + \log{\norm{f}_{\infty}}) ) \; \\
                \log{c} &= O(\log{\deg{f}} + \log{\norm{f}_{\infty}})
                ,
            \end{aligned}
        \end{equation*}
        we conclude the result by \autoref{prop:bounded-long-division}.
    \end{proof}

    \begin{corollary}
        Let $g, f \in \Z[x]$ such that $g$ is a cyclotomic-free trinomial.
        Then, there exists an algorithm that decides whether $g \mid f$
        (and if so, computes the quotient $f/g$) in
        \begin{equation*}
            \widetilde{O}(
            \size{f}^2 \cdot \log^6{\deg{g}}
            \cdot (\size{f} + \norm{f}_0\log{\norm{g}_{\infty}})
            )
        \end{equation*}
        time.
        In particular, the runtime complexity of the algorithm
        is $\widetilde{O}(n^{10})$ in the representation size.
    \end{corollary}
    \begin{proof}
        Execute \autoref{alg:bounded-long-division}
        with $s$ as in \autoref{thm:trinomial-sparsity-bound},
        and $c$ as in \autoref{thm:quotient-coefficient-bound}.
        Since
        \begin{equation*}
            \begin{aligned}
                s &= O( \norm{f}_0 \cdot \size{f}^2 \cdot \log^6{\deg{g}} ) \\
                \log{c} &= O(\log{\deg{f}} + \log{\norm{f}_{\infty}})
                ,
            \end{aligned}
        \end{equation*}
        the result follows from \autoref{prop:bounded-long-division}.
    \end{proof}

    \begin{remark}[Divisibility over $\Q$]
        To handle divisibility testing over $\Q$ instead of $\Z$,
        we multiply $f$ by $\lc(g)^{s}$,
        where $s$ is the number of steps as in \autoref{alg:bounded-long-division}.
        This ensures all coefficients of the quotient are in $\Z$,
        and increases $\size{f}$ by a factor of $\lc(g) \cdot \log_2{s}$.
    \end{remark}

    \begin{remark}[Comparison to \cite{giorgi-grenet-cray:2021}]
        When restricted to the trinomial case,
        Theorem 3.8 of \cite{giorgi-grenet-cray:2021}
        implies that there exists a polynomial time algorithm
        that decides $g \mid f$ when $g$ is a trinomial
        and $\deg{f} = O(\deg{g})$.
        Our algorithm works for any degree
        but requires $g$ to be cyclotomic-free. Without assuming that $g$ is 
        cyclotomic free, our algorithm works as long as $\deg{f} = \widetilde{O}(\deg{g})$.
    \end{remark}
}

\ignore{
\section{Divisibility Testing over Finite Fields}
    \label{sec:special-cases-in-fp}

    Throughout this section, we shall make some simplifying assumptions that do not affect the generality of the results: First, we shall assume that $g$ is monic. Indeed, if $R$ is a field, we can normalize $g$ without changing the fact that $g \mid f$. Second, we shall assume that $\ord{0}{g} = 0$. Clearly, $\ord{0}{g} \leq \ord{0}{f}$
    since $\ord{0}{g} > \ord{0}{f}$ implies that $g \nmid f$.
    Thus, we can divide both $f$ and $g$ by $x^{\ord{0}{g}}$
    (see \autoref{fct:divisibility})
    and assume that $\ord{0}{g} = 0$. Finally, we shall assume that $\|g\|_0\geq 2$ as otherwise testing divisibility is trivial.

%    \subsection{Preprocessing}
%        \label{sec:preprocessing}
%
%        In what follows, we assume that
%        \begin{enumerate}
	%            \item $\ord{0}{g} = 0$,
	%            \item $\norm{g}_0 \geq 2$,
	%            \item $\deg{g} \leq \deg{f}$, and
	%            \item if $R$ is a field, then we also assume that $g$ and $f$ are monic.
	%        \end{enumerate}
%        To satisfy these assumptions,
%            we need to introduce a preprocessing procedure to our algorithms.
%        In particular, we can assume $\ord{0}{g} \leq \ord{0}{f}$
%            since $\ord{0}{g} > \ord{0}{f}$ implies that $g \nmid f$.
%        Thus, we can divide both $f$ and $g$ by $x^{\ord{0}{g}}$
%            (see \autoref{fct:divisibility})
%            and assume that $\ord{0}{g} = 0$.
%        If $g$ is a monomial, then $g \in R$,
%            and so we can easily test $g \mid f$.
%        If $\deg{g} > \deg{f}$ then $g \nmid f$.
%        Finally, if $R$ is a field,
%            the divisibility of $f$ by $g$ is invariant to
%            multiplication by field elements (see autoref{fct:divisibility}).
%       Thus we may assume that $g$ and $f$ are monic.

    \subsection{Testing Divisibility by a Binomial}
        \label{sec:binomial-division-testing}

        In this subsection, we prove \autoref{thm:intro:binomial-in-finite-field}
            and \autoref{thm:intro:low-degree-in-finite-field}.
        Testing divisibility by a binomial is a relatively easy task:
            we observe that in this case, $f \bmod g$ must be sparse.
        Hence, we can calculate the remainder polynomial of the division of $f$ by $g$,
            and test if it is equivalent to the zero polynomial.

        \begin{observation}
            Denote $g = x^m - a$.
            Then,
            \begin{equation*}
                x^e \bmod g = x^r a^q
            \end{equation*}
            where $q$ and $r$ are the quotient and the remainder
                of the integer division of $e$ by $m$, respectively.
            In other words, $q$ and $r$ are the unique integers
                such that $e = qm + r$ and $0 \leq r < m$.
        \end{observation}
        Since the sparsity of a sum of polynomials
            is bounded by the sum of the corresponding sparsities,
            the above observation immediately implies that
            $\norm{f \bmod g}_0 \leq \norm{f}_0$ for any $f \in \F[x]$.

        We wrap up the divisibility testing algorithm as follows:
            Let $f = \sum_i{c_i x^{e_i}}$ and $g = x^m - a$.
            For each monomial of $f$,
                we calculate $c_i \cdot x^{e_i} \bmod g$ by
                first calculating $r$ and $q$ using Euclidean division
                and then calculating $c_i \cdot a^q$ using repeated squaring.
        By adding the results,
            we get an $\widetilde{O}(n^2)$ time algorithm to compute $f \bmod g$ for any binomial $g$, where $n=\size{f}+\size{g}$.

        Binomials are a special case of a broader class of polynomials
            in which the exponents have a large common denominator.
        If $g = x^m - a$,
            then $g = \tilde{g}(x^m)$ where $\tilde{g} = x - a$.
        \autoref{lma:factor-degree-reduction} implies a divisibility testing algorithm
            for a more general case:
            if $g = \ell(x^m)$ where $\ell$ is a low-degree polynomial,
            then we can test whether $g \mid f$ in time
            that is polynomial in $\size{f}$ and $\deg{\ell}$,
            by computing all the remainders $u_j \bmod \ell$.
        We note that in the binomial case,
            when $g = x^m - a$,
            \autoref{lma:factor-degree-reduction} combined
            with \autoref{fct:polynomial-remainder-theorem} implies that
            $g \mid f$ if and only if $u_j(a) = 0$ for all $0 \leq j < m$.

        \begin{algorithm}[H]
            \caption{Special Case of Polynomial Divisibility Testing over a Finite Field}
            \label{alg:divisibility-by-binomial-in-finite-field}
            \begin{algorithmic}
                \Input $f, g \in \F[x]$
                    such that $g = \ell(x^m)$ and $\F$ is a finite field.
                \Output True if $g \mid f$; False otherwise.
            \end{algorithmic}
            \begin{enumerate}
                \item  If required modify  $g$ and $f$, as described at the beginning of \autoref{sec:special-cases-in-fp}, to get that  $g$ is monic , $\|g\|_0\geq 2$ and  $\ord{0}{g}=0$.
                \item Denote $f = \sum_{i}{f_i x^i}$
                    and $u_j$ as in \autoref{lma:factor-degree-reduction}.
                \item For all $j \in \set{i \bmod m : f_i \neq 0}$:
                \begin{enumerate}
                    \item If $u_j \bmod \ell \neq 0$ return False.
                \end{enumerate}
                \item Return True.
            \end{enumerate}
        \end{algorithm}

        The correctness of the algorithm is a direct consequence
            of \autoref{lma:factor-degree-reduction}.

        \begin{proposition}[Efficiency]
            \label{prop:divisibility-by-binomial-in-finite-field-efficiency}
            The time complexity of \autoref{alg:divisibility-by-binomial-in-finite-field} is,
            \begin{equation*}
                \widetilde{O}(\norm{f}_0\log{\deg{f}} \cdot \log{|\F|} \cdot \deg{\ell})
                .
            \end{equation*}
        \end{proposition}
        \begin{proof}
            The sum of sparsities of $u_j$ for all $0 \leq j < m$ is bounded by $\norm{f}_0$.
            Computing the remainder $x^e \mod \ell$ of a monomial of degree $\leq \deg{f}$
                can be done using repeated squaring,
                which takes $\widetilde{O}(\log{\deg{f}} \cdot \log{|\F|} \cdot \deg{\ell})$ time.
        \end{proof}

        We note that \autoref{lma:factor-degree-reduction}
            can be used for polynomials over $\Z$ as well;
            however, as $u_j \bmod \ell$ may have exponentially large coefficients,
            this does not immediately imply a similar algorithm.
        To overcome this difficulty, see \autoref{rmk:low-degree-testing}
            for testing divisibility by a low degree polynomial over $\Z$.

    \subsection{Testing Divisibility by a Pentanomial}
        \label{sec:pentanomial-division-testing}

        In this section, we prove \autoref{thm:intro:pentanomial-in-finite-field}.
        We first study the case where the degrees of $f$ and $g$ are very close,
            $\deg{f} < \deg{g} \cdot (1 + \frac{1}{\norm{g}_0 - 1})$.
        We proceed to show a reduction from the case of $\deg{f} = \widetilde{O}(\deg{g})$
            to the previous case.

        \subsubsection{The Case of \texorpdfstring{$\deg{f} < \deg{g} \cdot (1 + \frac{1}{\norm{g}_0 - 1})$}{Low-Degree}}
            \label{sec:pentanomial-division-testing:low-degree}
            We next show that when $\deg{f} < \deg{g} \cdot (1 + \frac{1}{\norm{g}_0 - 1})$ 
                we can break the polynomials $g$ and $f$
                into two polynomials $g_0, g_0$ and $f_0, f_1$, respectively,
                such that $g_0 \mid f_0$ and $g_1 \mid f_1$.
            We start with a simple lemma that shows
                that sparse polynomials have large gaps between consecutive monomials,
                and then conclude a simple test for divisibility of pentanomials.

            \begin{claim}[Large Gap Exists]
                \label{clm:large-gap-exists}
                Let $R$ be a UFD.
                Let $g \in R[x]$ such that $\ord{0}{g} = 0$, and $\norm{g}_0 \geq 2$.
                Then, there exist $d \in \N$ and nonzero polynomials $g_0, g_1$
                such that $g = g_0 + x^d g_1$
                with $d - \deg{g_0} \geq \frac{\deg{g}}{\norm{g}_0 - 1}$.
            \end{claim}
            \begin{proof}
                Let $e_1, \dots, e_{\norm{g}_0}$ be the exponents
                    that correspond to the non-zero coefficients of $g$,
                    and observe that by our assumption $e_1 = 0$ and $e_{\norm{g}_0} = \deg{g}$.
                Since
                \begin{equation*}
                    \sum_{i=2}^{\norm{g}_0}{(e_i - e_{i-1})} = e_{\norm{g}_0} - e_1 = \deg{g}
                    ,
                \end{equation*}
                there exists $2 \leq i \leq \norm{g}_0$ such that,
                \begin{equation*}
                    e_i - e_{i-1}
                        \geq \frac{1}{\norm{g}_0 - 1}\sum_{i=2}^{\norm{g}_0}{(e_i - e_{i-1})}
                        = \frac{\deg{g}}{\norm{g}_0 - 1}
                \end{equation*}
                Finally, we set $d = e_i$ and $g_0 = f \mod x^d$.
                The claim follows as $\deg{g_0} = e_{i-1}$.
            \end{proof}

            \begin{lemma}
                \label{lma:large-gap-test}
                Let $R$ be a UFD, and $d \in \N$.
                Let $f, g \in R[x]$ such that,
                \begin{enumerate}
                    \item $g = g_0 + x^d g_1$ with $g_0, g_1 \neq 0$ and $\deg{g_0} < d$.
                    \item $f = f_0 + x^d f_1$ with $\deg{f_0} < d$.
                    \item $\deg{f} - \deg{g} < d - \deg{g_0}$.
                \end{enumerate}
                Then, $g \mid f$ if and only if
                $g_0f_1 = g_1f_0$ and $g_i \mid f_i$ for some $i \in \set{0, 1}$.
            \end{lemma}
            \begin{proof}
                Let $0 \neq h \in R[x]$ such that $f = gh$.
                Thus,
                \begin{equation*}
                    f_0 + x^d f_1 = f = gh = (g_0 + x^d g_1)h = g_0h + x^d g_1h
                    ,
                \end{equation*}
                and observe that by the assumption,
                \begin{equation*}
                    \begin{aligned}
                        \deg{g_0h}
                        &= \deg{g_0} + \deg{h} \\
                        &= \deg{g_0} + \deg{f} - \deg{g} \\
                        &< d
                        .
                    \end{aligned}
                \end{equation*}
                Thus, by comparing powers of $x$
                    we conclude that $f_0 = g_0h$ and $f_1 = g_1h$,
                    and in particular $g_0 \mid f_0$ and $g_1 \mid f_1$.
                By combining the two equations above it holds that $f_0g_1h = g_0hf_1$
                    and thus $g_0f_1 = g_1f_0$.
                For the converse, assume without loss of generality that $g_0 \mid f_0$,
                    and let $0 \neq h \in R[x]$ such that $f_0 = g_0h$.
                Since $g_0f_1 = g_1f_0$
                    it holds that $g_0f_1 = f_0g_1 = g_0hg_1$
                    and so we conclude that $f_1 = g_1h$.
                Hence
                \begin{equation*}
                    f = f_0 + x^df_1 = g_0h + x^dg_1h = h(g_0 + x^dg_1) = hg,
                \end{equation*}
                which implies  $g \mid f$.
            \end{proof}

            Finally, by combining \autoref{lma:large-gap-test} with \autoref{clm:large-gap-exists}
                we obtain the following algorithm for the case in which $g$ is a pentanomial
                and $\deg{f} < \deg{g} \cdot (1 + \frac{1}{\norm{g}_0 - 1})$.

            \begin{algorithm}[H]
                \caption{
                    Divisibility Testing by a Pentanomial over a Finite Field
                        (The Case of Low-Degree)
                    }
                \label{alg:pentanomial-in-finite-field-low-deg}
                \begin{algorithmic}
                    \Input $f, g \in \F[x]$
                        such that $\norm{g}_0 \leq 5$
                        and $\deg{f} < \deg{g} \cdot (1 + \frac{1}{\norm{g}_0 - 1})$.
                    \Output True if $g \mid f$; False otherwise.
                \end{algorithmic}
                \begin{enumerate}
                    \item If required modify  $g$ and $f$, as described at the beginning of \autoref{sec:special-cases-in-fp}, to get that  $g$ is monic , $\|g\|_0\geq 2$ and  $\ord{0}{g}=0$.
                    \item \label{line:pentanomial-in-finite-field-low-deg:find-d}
                        Find $d$
                        such that $g = g_0 + x^dg_1$
                        with $g_0, g_1 \neq 0$
                        and
                        \begin{equation*}
                             d - \deg{g_0} \geq \frac{\deg{g}}{\norm{g}_0 - 1}
                            .
                        \end{equation*}
                    \item Let $f = f_0 + x^df_1$ such that $\deg{f_0} < d$.
                    \item \label{line:pentanomial-in-finite-field-low-deg:before-test}
                        If $f_0 \cdot g_1 \neq g_0 \cdot f_1$ then return False.
                    \item \label{line:pentanomial-in-finite-field-low-deg:test}
                        Test $g_i \mid f_i$ for $i \in \set{0, 1}$
                        such that $\norm{g_i}_0 \leq 2$,
                        using \autoref{alg:divisibility-by-binomial-in-finite-field}.
                \end{enumerate}
            \end{algorithm}

            \begin{proposition}[Correctness]
                \autoref{alg:pentanomial-in-finite-field-low-deg}
                    returns True if and only if $g \mid f$
            \end{proposition}
            \begin{proof}
                The existence of $d$ in \autoref{line:pentanomial-in-finite-field-low-deg:find-d}
                    is a corollary of the assumptions and \autoref{clm:large-gap-exists}.
                The existence of $g_i$ in \autoref{line:pentanomial-in-finite-field-low-deg:test}
                    with $\norm{g_i} \leq 2$ is due to the fact that
                    $\norm{g_0}_0 + \norm{g_1}_0 = \norm{g}_0 \leq 5$.
                The algorithm correctness follows from \autoref{lma:large-gap-test} since 
                    $\deg{f} - \deg{g}< \frac{\deg{g}}{\norm{g} - 1}\leq d - \deg{g_0}$.
            \end{proof}

            \begin{proposition}[Efficiency]
                The time complexity of \autoref{alg:pentanomial-in-finite-field-low-deg} is,
                \begin{equation*}
                    O(\norm{f}_0\log{\deg{f}}\log{|\F|})
                    .
                \end{equation*}
                In particular, it runs in $O(n)$ for constant-size fields
                    ($O(n^2)$ in the representation size).
            \end{proposition}
            \begin{proof}
                Steps~\ref{line:pentanomial-in-finite-field-low-deg:find-d}-\ref{line:pentanomial-in-finite-field-low-deg:before-test}
                    can be computed by iterating
                    over the coefficients and exponents of $g$ and $f$.
                This takes $O(\norm{f}_0(\log{\deg{f}}+\log{\abs{\F}}))$ to execute,
                    due to the assumption that $\norm{g_0}_0, \norm{g_1}_0 \leq 5$.
                Step~\ref{line:pentanomial-in-finite-field-low-deg:test}
                    is computed using \autoref{alg:divisibility-by-binomial-in-finite-field}
                    (with $\deg{\ell} = 1$ as the chosen $g_i$ is a binomial),
                    which takes $O(\norm{f}_0 \log{\deg{f}} \cdot \log{\abs{\F}})$ time to execute.
            \end{proof}

        \subsubsection{The Case of \texorpdfstring{$\deg{f} \leq \widetilde{O}(\deg{g})$}{Medium-Degree}}
            \label{sec:pentanomial-division-testing:medium-degree}

            In this section, we extend the results of the previous section 
		        by showing a reduction to the case
                $\deg{f} < \deg{g} \cdot (1 + \frac{1}{\norm{g}_0 - 1})$
		        that was handled in Section~\ref{sec:pentanomial-division-testing:low-degree}.

            \begin{lemma}[Degree Reduction]
                \label{lma:quotient-degree-reduction}
                Let $p$ be a prime, and let $k \in \N$.
                Let $f, g \in \F_p[x]$ such that $\ord{0}{g} = 0$ and define:
                \begin{equation*}
                    \begin{aligned}
                        v &= f \cdot g^{p^k - 1} \; (=\sum_i v_i x^i),\\
                        u_j &= \sum_{\equivrelation{i}{j}{p^k}}{v_i x^{(i-j)/p^k}}.
                    \end{aligned}
                \end{equation*}
                Then, $g \mid f$ if and only if $g \mid u_j$ for all $0 \leq j < p^k$.
            \end{lemma}
            \begin{proof}
                By \autoref{fct:divisibility} and \autoref{fct:freshmens-dream-for-polynomials}
                    it holds that
                \begin{equation*}
                    g \mid f
                        \iff g^{p^k} \mid f \cdot g^{p^k - 1}
                        \iff g(x^{p^k}) \mid v
                    .
                \end{equation*}
                The claim follows from \autoref{lma:factor-degree-reduction}.
            \end{proof}

            \autoref{lma:quotient-degree-reduction} gives rise to a natural algorithm
                that reduces divisibility testing into multiple 
                divisibility tests of polynomials of lower degree.
            However, this also increases the
                sparsity of the resulting polynomials.
            Thus, we need to bound the total sparsity of the polynomials
                $\set{u_j: 0 \leq j < p^k}$ and $g^{p^k - 1}$.

            \begin{lemma}
                \label{lma:norm-of-p-power}
                Let $p$ be a prime, and let $k \in \N$.
                Let $g \in \F_p[x]$.
                Then,
                \begin{equation*}
                    \norm{g^{p^k - 1}}_0 \leq p^{k(\norm{g}_0 - 1)}
                    .
                \end{equation*}
            \end{lemma}
            \begin{proof}
                The polynomial $g^{p^k - 1}$ can be presented as a product of $k$ polynomials,
                \begin{equation*}
                    g^{p^k - 1}
                        = \frac{g^{p^k}}{g}
                        = \prod_{j \in [k]}{\frac{g^{p^j}}{g^{p^{j -1}}}}
                    .
                \end{equation*}
                By \autoref{fct:freshmens-dream-for-polynomials} it holds that,
                \begin{equation*}
                    \frac{g^{p^j}}{g^{p^{j -1}}}
                        = g^{p^j - p^{j - 1}}
                        = g^{p^{j - 1}(p - 1)}
                        = g(x^{p^{j-1}})^{p - 1}
                    ,
                \end{equation*}
                and in particular, $\norm{\frac{g^{p^j}}{g^{p^{j -1}}}}_0 = \norm{g^{p - 1}}_0$.
                Thus, we obtain the following bound on the sparsity
                    of the polynomial $g^{p^k - 1}$:
                \begin{equation*}
                    \norm{g^{p^k - 1}}_0
                        \leq \prod_{j \in [k]}{\norm{\frac{g^{p^j}}{g^{p^{j -1}}}}_0}
                        = \prod_{j \in [k]}{\norm{g^{p - 1}}_0}
                        = \norm{g^{p - 1}}_0^k
                    .
                \end{equation*}
                The result follows from a bound on the sparsity of $g^{p-1}$.
                It holds that,
                \begin{equation*}
                    \norm{g^{p - 1}}_0
                        \leq \binom{\norm{g}_0 + p - 2}{\norm{g}_0 - 1}
                        \leq p^{\norm{g}_0 - 1}
                \end{equation*}
                where the first inequality is due to \autoref{clm:sparsity-of-power}
                    and the second is due to \autoref{clm:binom-upper-bound}.
            \end{proof}

            \begin{proposition}
                \label{pro:degree-reduction-efficiency}
                Let $p$ be a prime and let $k \in \N$.
                Let $f, g \in \F_p[x]$ such that $\ord{0}{g} = 0$
                    and let $\set{u_j : 0 \leq j < p^k}$
                    defined as in \autoref{lma:quotient-degree-reduction}.
                Then,
                \begin{enumerate}
                    \item $\deg{u_j} \leq \deg{g} + \frac{\deg{f}-\deg{g}}{p^k}$
                        for all $0 \leq j < p^k$.
                    \item $\sum_{0 \leq j < p^k}{\norm{u_j}_0} \leq \norm{f}_0 \cdot p^{k(\norm{g}_0 - 1)}$.
                \end{enumerate}
            \end{proposition}
            \begin{proof}
                Let $v$ be defined as in \autoref{lma:quotient-degree-reduction}.
                It holds that
                \begin{equation}
                    \label{eq:1}
                    f \cdot g^{p^k - 1} = v = \sum_{0 \leq j < p^k}{x^j u_j(x^{p^k})}
                    .
                \end{equation}
                Let $0 \leq j < p^k$.
                We bound the degree of $u_j$ by first observing the degree of $v$ is
                \begin{equation*}
                    \deg{v}
                        = \deg{(f \cdot g^{p^k - 1})}
                        = \deg{f} + \deg{g} \cdot (p^k - 1)
                    .
                \end{equation*}
                Thus,
                \begin{equation*}
                    \begin{aligned}
                    \deg{u_j}
                        &\leq \frac{\deg{v}}{p^k} \\
                        &= \frac{\deg{f} + \deg{g} \cdot (p^k - 1)}{p^k} \\
                        &= \deg{g} + \frac{\deg{f} - \deg{g}}{p^k}
                    \end{aligned}
                \end{equation*}
                which proves the first item.
                By \autoref{eq:1} we conclude,
                    together with \autoref{lma:norm-of-p-power},
                \begin{equation*}
                    \begin{aligned}
                        \sum_{0 \leq j < p^k}{\norm{u_j}_0}
                            &= \norm{v}_0 \\
                            &\leq \norm{f}_0 \cdot \norm{g^{p^k - 1}}_0 \\
                            &\leq \norm{f}_0 \cdot p^{k(\norm{g}_0 - 1)} \\
                    \end{aligned}
                \end{equation*}
                which concludes the proof.
            \end{proof}

            Finally, \autoref{lma:quotient-degree-reduction}
                together with \autoref{pro:degree-reduction-efficiency}
                give rise to the following algorithm.

            \begin{algorithm}[H]
                \caption{
                    Divisibility Testing by Pentanomial over a Finite Field
                    (The Case of Medium-Degree)
                }
                \label{alg:fp-division-test}
                \begin{algorithmic}
                    \Input $f, g \in \F_p[x]$ such that
                        $p$ is a prime,
                        $\norm{g}_0 \leq 5$,
                        and $\deg{f} = \widetilde{O}(\deg{g})$.
                    \Output True if $g \mid f$; False otherwise.
                \end{algorithmic}
                \begin{enumerate}
                    \item  If required modify  $g$ and $f$, as described at the beginning of \autoref{sec:special-cases-in-fp}, to get that  $g$ is monic , $\|g\|_0\geq 2$ and  $\ord{0}{g}=0$.
                    \item Compute $k = \ceil{\log_p{\norm{g}_0} + \log_p{(\frac{\deg{f}}{\deg{g}} - 1)}}$.
                    \item Compute $v = fg^{p^k-1}=\sum_i v_i x^i$.
                    \item For all $j \in \set{i \bmod p^k : v_i \neq 0}$:
                    \begin{enumerate}
                        \item Let
                            \begin{equation*}
                                u_j = \sum_{\equivrelation{i}{j}{p^k}}{v_i x^{(i-j)/p^k}}
                            \end{equation*}
                            (as in \autoref{lma:quotient-degree-reduction}).
                        \item Test $g \mid u_j$ using
                            \autoref{alg:pentanomial-in-finite-field-low-deg}
                            and return False if fails.
                    \end{enumerate}
                    \item Return True.
                \end{enumerate}
            \end{algorithm}

            \begin{proposition}[Correctness]
                \autoref{alg:fp-division-test} returns True if and only if $g \mid f$.
            \end{proposition}
            \begin{proof}
                Observe that $p^k \geq \norm{g}_0(\frac{\deg{f}}{\deg{g}} - 1)$.
                By \autoref{pro:degree-reduction-efficiency},
                \begin{equation*}
                    \begin{aligned}
                        \deg{u_j} 
                            &\leq \deg{g} + \frac{\deg{f} - \deg{g}}{p^k} \\
                            &\leq \deg{g} + \frac{\deg{f} - \deg{g}}{
                                \norm{g}_0(\frac{\deg{f}}{\deg{g}} - 1)
                            } \\
                            &= \deg{g} + \frac{\deg{g}}{\norm{g}_0}
                            ,
                    \end{aligned}
                \end{equation*}
                    which means that the assumptions of
                    \autoref{alg:pentanomial-in-finite-field-low-deg} hold.
                Thus, the correctness of the algorithm follows
                    from \autoref{lma:quotient-degree-reduction}
                    and the correctness of \autoref{alg:pentanomial-in-finite-field-low-deg}.
            \end{proof}

            \begin{proposition}[Efficiency]
                \label{pro:fp-division-test-efficieny}
                The time complexity of \autoref{alg:fp-division-test} is,
                \begin{equation*}
                    O\left(
                        \norm{f}_0
                        \left(p \left( \frac{\deg{f}}{\deg{g}} - 1 \right)\right)^{\norm{g}_0 - 1}
                        \log{\deg{g}} \log{p}
                    \right)
                    .
                \end{equation*}
                In particular, it runs in $poly(n)$ time.
            \end{proposition}
            \begin{proof}
                By the correctness of the algorithm, it holds that
                \begin{equation*}
                    \deg{u_j} \leq \deg{g} + \frac{\deg{g}}{\norm{g}_0} \leq 2\deg{g}
                    .
                \end{equation*}
                Observe that,
                \begin{equation*}
                    p^k \leq p\norm{g}_0 \left( \frac{\deg{f}}{\deg{g}} - 1 \right)
                    .
                \end{equation*}
                Hence, by \autoref{pro:degree-reduction-efficiency},
                \begin{equation*}
                    \begin{aligned}
                        \sum_{0 \leq j < p^k}{\norm{u_j}_0}
                            &\leq \norm{f}_0 \cdot p^{k(\norm{g}_0 - 1)} \\
                            &\leq \norm{f}_0 \left(
                                    p\norm{g}_0 \left( \frac{\deg{f}}{\deg{g}} - 1 \right)
                                \right)^{\norm{g}_0 - 1}
                            .
                    \end{aligned}
                \end{equation*}
                Thus, the total sparsity of all of the polynomials $u_j$
                    in the invocations
                    to \autoref{alg:pentanomial-in-finite-field-low-deg}
                    is bounded by the above expression.
                The total time complexity
                    follows from the time complexity of
                    \autoref{alg:pentanomial-in-finite-field-low-deg}.
            \end{proof}
         }           
            
            \section*{Acknowledgment}
			The authors would like to thank Bruno Grenet
                for his helpful comments and for suggesting \autoref{cor:rand-div}. We would also like to thank 
                the anonymous referees for their comments.

	\bibliographystyle{alpha}
	\bibliography{refs}

\end{document}